%% file: rdthomeoj.tex
\documentclass[letterpaper,USenglish,nolineno]{socg-lipics-v2021}

\pdfoutput=1 %uncomment to ensure pdflatex processing (mandatatory e.g. to submit to arXiv)
\hideLIPIcs  %uncomment to remove references to LIPIcs series (logo, DOI, ...), e.g. when preparing a pre-final version to be uploaded to arXiv or another public repository

\usepackage{newtxmath}
\title{Better Sampling Bounds for \\ Restricted Delaunay Triangulations and a \\
       Star-Shaped Property for Restricted Voronoi Cells}

\titlerunning{Better Sampling Bounds for Restricted Delaunay Triangulations}
\author{Jonathan Richard Shewchuk}{University of California, Berkeley, USA}{jrs@berkeley.edu}{https://orcid.org/0009-0004-6864-8994}{}
\authorrunning{J.\,R. Shewchuk}

\Copyright{Jonathan Richard Shewchuk}

\ccsdesc[300]{Theory of computation~Computational geometry}

\keywords{Restricted Delaunay triangulation, restricted Voronoi diagram,
          surface sampling, surface mesh generation, surface reconstruction,
          $\epsilon$-sample}

\category{} %optional, e.g. invited paper

\relatedversion{} %optional, e.g. full version hosted on arXiv, HAL, or other respository/website
\funding{
Supported by the National Science Foundation under
Awards CCF-0635381, CCF-1423560, and CCF-1909204.
}
\acknowledgements{
I thank Nina Amenta, Jean-Daniel Boissonnat, Siu-Wing Cheng, Tamal Dey,
Arijit Ghosh, and Marc Khoury for discussions about surface sampling.
This work was set in motion by a sabbatical at
INRIA Sophia-Antipolis during the spring of 2010.
I thank the Geometrica Group for their kind reception.}

\newcommand{\Vor}{\mathrm{Vor}}
\newcommand{\Del}{\mathrm{Del}}
\newcommand{\Int}{\mathrm{int}\,}
\newcommand{\Bd}{\mathrm{bd}\,}

\newcommand{\aff}{\mathrm{aff}\,}
\newcommand{\lfs}{\mathrm{lfs}}

\newcommand{\R}{\mathbb{R}}

\begin{document}

\maketitle

%TODO mandatory: add short abstract of the document
\begin{abstract}
% We introduce a new paradigm that will revolutionize
% the way pardigms are revolutionized.
% Synergy will take place too.
The restricted Delaunay triangulation of a closed surface $\Sigma$ and
a finite point set $V \subset \Sigma$ is a subcomplex of
the Delaunay tetrahedralization of $V$ whose triangles approximate $\Sigma$.
It is well known that if $V$ is
a sufficiently dense sample of a smooth $\Sigma$, then
the union of the restricted Delaunay triangles is homeomorphic to $\Sigma$.
We show that an $\epsilon$-sample with $\epsilon \leq 0.3245$ suffices.
By comparison, Dey proves it for a $0.18$-sample;
%%%JRS:  The 0.18-sample fact is on page 70 of Tamal's book (PDF), or
%%%      page 57 of the paper book.
% , which is a more stringent requirement;
our improved sampling bound reduces the number of sample points required by
a factor of $3.25$.
More importantly,
we improve a related sampling bound of Cheng et al.\ for
Delaunay surface meshing, reducing the number of sample points required
by a factor of~$21$.
The first step of our homeomorphism proof is particularly interesting:
we show that for a $0.44$-sample,
the restricted Voronoi cell of each site $v \in V$ is homeomorphic to a disk,
and the orthogonal projection of the cell onto
$T_v\Sigma$ (the plane tangent to $\Sigma$ at $v$) is star-shaped.
% That result generalizes to
% manifolds of higher dimension embedded in higher-dimensional spaces.
\end{abstract}

\section{Introduction}
\label{intro}

The {\em restricted Delaunay triangulation} (RDT) is
a well-established way of generating good-quality triangulations
on curved surfaces~\cite{edelsbrunner97}.
Researchers have developed a theory of surface sampling to determine
how we should sample points on a surface to guarantee that
an RDT (or a related triangulation) is
a topologically correct and geometrically accurate approximation of the surface
\cite{amenta99b,amenta98b,amenta02,boissonnat18,boissonnat14,boissonnat03,
      boissonnat05,boltcheva17,cheng01,cheng07,cheng05,cheng12,dey07,
      edelsbrunner97,khoury16,khoury21,pellerin14,wang22,yan09}.
RDTs and this surface sampling theory have equipped geometers
to rigorously prove the correctness of algorithms for
surface reconstruction~\cite{dey07} and surface mesh generation~\cite{cheng12}.

Think of the RDT as a function that takes in two inputs:
a smooth, closed (compact with no boundary) surface
$\Sigma \subset \mathbb{R}^3$ and
a finite set $V \subset \Sigma$ of points, called
\emph{sites} (or \emph{vertices}~of the RDT).
$V$ is called a {\em sample} or a {\em point cloud}.
The output is a simplicial complex ${\mathcal T}$ whose vertices are $V$.
The RDT ${\mathcal T}$ is
a subcomplex of the three-dimensional Delaunay triangulation $\Del \, V$, but
in typical usage ${\mathcal T}$~contains no tetrahedra;
only triangles, edges, and the vertices~$V$.

If $V$ is sufficiently dense,
${\mathcal T}$ is a (topological) {\em triangulation} of $\Sigma$,
which means that the \emph{underlying space} of $\mathcal{T}$, written
$|\mathcal{T}| = \bigcup_{\tau \in \mathcal{T}} \tau$, is homeomorphic to $\Sigma$.
This paper is about proving that
a modest sampling requirement suffices to guarantee that homeomorphism.

What does it mean for $V$ to be ``sufficiently dense''?
%%%CUT (one phrase)
Intuitively,
% for approximating the shape and topology of a surface from a point cloud,
there should be no large unsampled ``bare spots'' on $\Sigma$.
Ideally, sampling requirements are adaptive, as
a denser spacing of sites is needed in
regions where $\Sigma$ has higher curvature or
closely-spaced ``parts'' like fingers on a hand, but
simple regions need few sites.
Many provably good algorithms for surface reconstruction assume that
the sites $V$ are a so-called $\epsilon$-sample of $\Sigma$
(defined in Section~\ref{rdts}), an adaptive sample in which
a smaller $\epsilon$ implies more sites, closer together.

One main result of this paper is that if $V$ is a $0.3245$-sample of $\Sigma$,
the underlying space of the RDT is homeomorphic to $\Sigma$.
Dey~\cite{dey07} proved the same for a $0.18$-sample.
% the best previous bound we are aware of.
The new result reduces the number of sample points required by
a factor of $3.25$ (the square of $0.3245 / 0.18$).
Some well-known surface reconstruction algorithms such as
the Crust~\cite{amenta99b} and Cocone~\cite{amenta02} algorithms
rely on identifying a superset of the RDT's triangles then paring them down.
Any substantial relaxation of their sampling requirements is good news for
a broad swath of existing algorithms, and
it helps to explain why they work well in practice.

As a point of comparison, Bjerkevik~\cite{bjerkevik22} shows that
no proof will ever guarantee homeomorphism for $0.72$-samples, as
there exist $0.72$-samples with
multiple, topologically different, correct reconstructions.
(This limitation holds even for smooth, closed curves in the plane.)

We define another adaptive sampling condition better suited to
mesh generation,
%%%UNCUT
% algorithms,
enabling stronger sampling bounds:
we prove homeomorphism for what we call a {\em $0.4132$-Voronoi sample}.
Cheng et al.~\cite{cheng12} proved the same for a $0.09$-Voronoi sample;
our result reduces the number of sample points required by
a factor of~$21$.
Our bound implies better sampling bounds for
existing surface meshing algorithms; it is this paper's most important result.

Also of interest are the new techniques introduced here to obtain these bounds.
In particular, for a $0.44$-sample or a $0.78$-Voronoi sample,
the restricted Voronoi cell of each site $v \in V$
(defined in Section~\ref{rdts}) is homeomorphic to a disk.
Moreover, let $T_v\Sigma$ denote the plane tangent to $\Sigma$ at $v$;
the orthogonal projection of $v$'s cell onto $T_v\Sigma$ is {\em star-shaped}:
a union of (infinitely many) line segments terminating at $v$.
% {\em star-shaped}:  for every point $y \in \varphi(\Vor|_\Sigma v)$, the
% line segment connecting $v$ to $y$ is a subset of $\varphi(\Vor|_\Sigma v)$.
It seems a bit surprising that restricted Voronoi cells are better behaved
with respect to coarse samples than restricted Voronoi vertices or edges,
because the proofs by Dey~\cite{dey07} and Cheng et al.~\cite{cheng12} use
the soundness of the restricted Voronoi edges to establish
the soundness of the restricted Voronoi cells.
This paper reverses that sequence and, in my opinion, gets at the heart of
the reasons why a restricted Voronoi cell is nicely shaped.
Remarkably, the results about the Voronoi cells generalize to
manifolds of higher dimension embedded in higher-dimensional spaces
with no degradation in the bounds (see Section~\ref{higherd}),
although the homeomorphism result does not and cannot generalize
to higher dimensions~\cite{cheng05,boissonnat09,boissonnat18b}.

Not everyone gets excited by improved constants, but
I advocate for the importance of
tighter sampling theory for computational geometry, especially for
our most fundamental notions such as RDTs---just as
numerical analysts have devoted much effort to improving
constants associated with quadrature rules, interpolation theory,
differential equation solvers, and more.
%%% JRS UNCUT
% We see our work as interpolation theory,
% but for surfaces rather than functions.
Reasonable constants show that RDTs are useful rather than merely theoretical.

\section{Restricted Delaunay triangulations and ${\bf \epsilon}$-samples}
\label{rdts}

The RDT is defined by dualizing a {\em restricted Voronoi diagram}, which
will be our main object of study.
Let $|pq|$ denote the Euclidean distance from $p$ to $q$;
equivalently, the length of the line segment $pq$.
%Let $pq$ denote the line segment connecting $p$ to $q$ and
%$|pq|$ denote its length; equivalently, the Euclidean distance from $p$ to $q$.
Given a closed surface $\Sigma \subset \R^3$ and a point set $V \subset \Sigma$,
the {\em restricted Voronoi cell} of a site $v \in V$ is
$\Vor|_\Sigma v = \{ p \in \Sigma : |pv| \leq |pw|$ for all $w \in V \}$.
% denoted $\Vor|_\Sigma v, is the set of all points on $\Sigma$ for which
% $v$~is the closest site in $V$ (possibly tied for closest),
% as measured by the Euclidean distance in $\R^3$.
Equivalently, $\Vor|_\Sigma v = \Sigma \cap \Vor \, v$, where
$\Vor \, v$ is $v$'s standard Voronoi cell in $\R^3$.
The name ``restricted Voronoi cell'' means that $\Vor|_\Sigma v$ is
the restriction of $\Vor \, v$ to the surface~$\Sigma$.
See Figure~\ref{vordel}.

\begin{figure}
\centerline{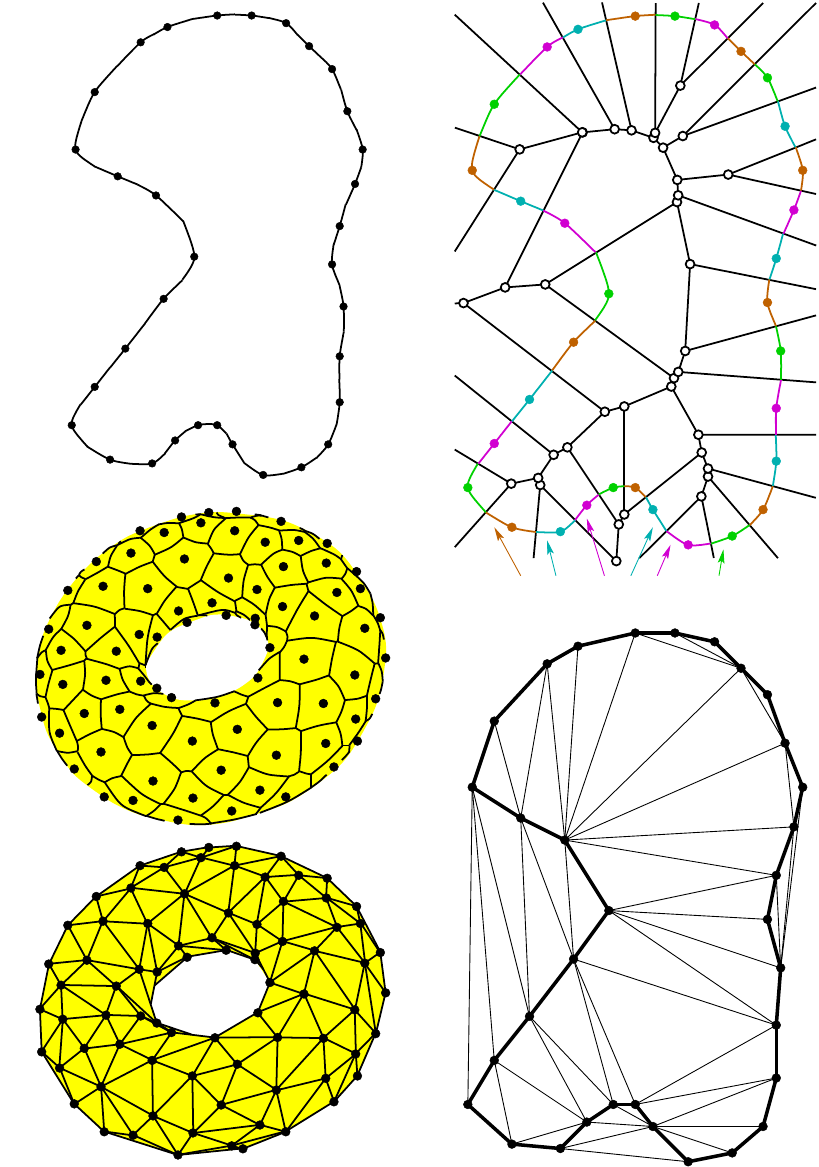}

\caption{\label{vordel}
(a)~A two-dimensional view of restricted Delaunay triangulations.
The input is a smooth, closed curve $\Sigma$ and a sample
$V \subset \Sigma$.
(b)~The restricted Voronoi diagram is the restriction of
the (classic) Voronoi diagram to $\Sigma$.
(c)~The restricted Delaunay triangulation (bold) is
the dual of the restricted Voronoi diagram and
a subcomplex of the (classic) Delaunay triangulation.
(d)~A~restricted Voronoi diagram in three dimensions.
(e)~Its dual restricted Delaunay triangulation.
}
\end{figure}

A {\em restricted Voronoi face} is any nonempty intersection of
one or more restricted Voronoi cells
(i.e., $\Sigma \cap F$ for some face $F$ in $\Vor \, v$.)
In particular,
a {\em restricted Voronoi vertex} is the nonempty intersection of $\Sigma$ with
an edge in $\Vor \, V$; and
a {\em restricted Voronoi edge} is the nonempty intersection of $\Sigma$ with
a polygonal $2$-face in $\Vor \, V$.
Typically these entities are a single point and a (curvy) path on~$\Sigma$,
respectively, but if $V$ is not dense enough,
they may take on more pathological forms---e.g.,
a~restricted Voronoi ``edge'' could be a cycle, a pair of disjoint paths, or
even a $2$-dimensional blob.
This paper aims to determine
sampling conditions that eliminate such pathologies.
% (In some circumstances, it is useful to declare that
% the empty set is a Voronoi face of dimension $-1$.)
The \emph{restricted Voronoi diagram} $\Vor|_\Sigma V$ is
the cell complex containing all the restricted Voronoi faces
(including the restricted Voronoi cells).

The \emph{Delaunay subdivision} $\Del \, V$ is
the polyhedral complex dual to the Voronoi diagram, and
the \emph{restricted Delaunay subdivision} $\Del|_\Sigma V$ is
the subcomplex of $\Del \, V$ dual to the restricted Voronoi diagram.
% That is, for each restricted Voronoi face $f \in \Vor|_\Sigma V$,
% let $W \subseteq V$ be the set of sites whose restricted Voronoi cells include
% $f$ and let $f^*$ be the convex hull of $W$;
That is, for each Voronoi face $F \in \Vor \, V$,
let $W \subseteq V$ be the set of sites whose Voronoi cells include
$F$ and let $F^*$ be the convex hull of $W$;
we say that $F^*$ is the face {\em dual} to $F$.
Then $\Del \, V = \{ F^* : F \in \Vor \, V \}$.
The restricted Delaunay subdivision contains the dual faces whose primal faces
intersect $\Sigma$; that is,
$\Del|_\Sigma V = \{ F^* \in \Del \, V : F \cap \Sigma \neq \emptyset \}$.
The restricted Voronoi face $f = F \cap \Sigma$ has
the same dual face as $F$, $f^* = F^*$.
It is customary to subdivide the polyhedra in $\Del \, V$ into tetrahedra
(in which case the duality is no longer strict);
accordingly, we can subdivide the polygons in $\Del|_\Sigma V$ into triangles
and call it a {\em restricted Delaunay triangulation}.
The results in this paper apply whether we do or don't.

A crucial observation in the theory of surface sampling is that
the sampling density necessary for accurate approximation is proportional to
a field called the {\em local feature size}.
%%%UNCUT
% defined in terms of the distance from the surface to its medial axis.
The \emph{medial axis} $M$ of~$\Sigma$, illustrated in Figure~\ref{medial}, is
the closure of the set of all points in $\R^3$ for which
the closest point on $\Sigma$ is not unique.
% Intuitively, the medial axis of $\Sigma$ is meant to capture
% the ``middle'' of the region bounded by $\Sigma$.
A {\em medial ball} is a ball whose center lies on $M$ and
whose boundary intersects $\Sigma$ (tangentially), but
the interior of the ball does not.
For any point $x \in \Sigma$, there are one or two medial balls that have
$x$ on their boundaries, called \emph{medial balls at $x$}.
One is inside $\Sigma$.
If there are two, the other is outside.
%%%CUT
If not, there is an open halfspace tangent to $\Sigma$ at $x$,
disjoint from $\Sigma$, that we call a degenerate ``medial ball.''

\begin{figure}
\centerline{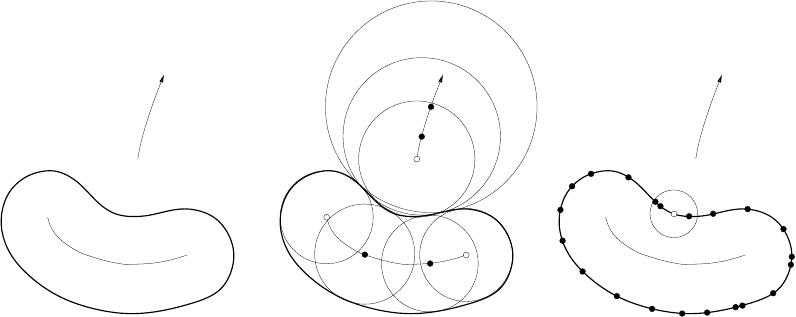}

\caption{\label{medial}  \protect\small \sf
Left:  A $1$-manifold $\Sigma$ and its medial axis~$M$.
Center:  Some of the medial balls that define~$M$.
Balls with black centers touch two points on $\Sigma$.
The white points are in the closure of the black centers.
Right:  A $0.5$-sample of $\Sigma$ (black points).
The ball with center $x$ and radius $0.5 \, \lfs(x)$ contains a site.
}
\end{figure}

The \emph{local feature size} function is
$\lfs: \Sigma \rightarrow \R$, $x \mapsto d(x, M)$ where
% $d(x, M)$ denotes the Euclidean distance from $x$ to $M$.
$d(x, M) = \min_{m \in M} |xm|$.
We require that $\Sigma$ is smooth in the sense that
$\inf_{x \in \Sigma} \lfs(x) > 0$.
($C^{1,1}$-continuity suffices.)

A finite point set $V \subset \Sigma$ is an {\em $\epsilon$-sample of $\Sigma$}
if for every point $x \in \Sigma$,
there is a site $v \in V$ such that $|xv| \leq \epsilon \, \lfs(x)$.
That is, the ball with center $x$ and radius $\epsilon \, \lfs(x)$ contains
at least one site.
% See Figure~\ref{fig:sample}.
A finite $V \subset \Sigma$ is
an {\em $\epsilon$-Voronoi sample of $\Sigma$} if $V \neq \emptyset$ and
for every site $v \in V$ and every point $x \in \Vor|_\Sigma v$,
$|xv| \leq \epsilon \, \lfs(v)$.
That is, $\Vor|_\Sigma v$ is a subset of the ball with center~$v$ and
radius $\epsilon \, \lfs(v)$.
(Note that this condition implies the {\em $\epsilon$-small condition} of
Cheng et al.~\cite{cheng12}, though the converse is not true.)
Only $\epsilon$-samples are a well-known concept, but
$\epsilon$-Voronoi samples are nice because
$\lfs(v)$ tends to give better sampling bounds than $\lfs(x)$.

Like the classical proofs~\cite{amenta99b,cheng12,dey07},
this paper's homeomorphism proofs rely on
the Topological Ball Theorem of Edelsbrunner and Shah~\cite{edelsbrunner97}.
% Proofs about the homeomorphism of restricted Delaunay triangulations
% usually rely on the {\em Topological Ball Theorem} of
% Edelsbrunner and Shah~\cite{edelsbrunner97}, and ours is no exception.
Given a sample $V \subset \Sigma$ of a closed surface $\Sigma \subset \R^3$,
the Topological Ball Theorem states that
$|\Del|_\Sigma V|$ is homeomorphic to $\Sigma$ if
$\Sigma$~and $V$ satisfy two properties.
The {\em closed ball property} is that for each Voronoi face $F \in \Vor \, V$,
$f = F \cap \Sigma$ is either empty or a topological closed $(k - 1)$-ball
% (i.e., homeomorphic to $\mathbb{B}^{k - 1}$)
where $k$ is the dimension of~$F$.
That is,
(A)~for a Voronoi $3$-cell $F$, $f$ is a topological closed disk;
(B)~for a Voronoi polygon~$F$, $f$ is a topological interval or $\emptyset$;
(C)~for a Voronoi edge $F$, $f$ contains at most one point; and
(D)~for a Voronoi vertex $F$, $f = \emptyset$.

If A--D hold,
the {\em generic intersection property} is that for each $F \in \Vor \, V$,
$\Int \, f \subset \Int \, F$ and $\Bd \, f \subset \Bd \, F$.
In this (obscure) definition, ``interior'' and ``boundary'' are interpreted
by the rules of $(k - 1)$-manifolds (for~$f$) and $k$-manifolds (for $F$).
(They are {\em not} the interior and boundary with respect to $\R^3$.)
%%%UNCUT
% Edelsbrunner and Shah~\cite{edelsbrunner97} write this property as
% $\Sigma \cap \Int F = \Int (\Sigma \cap F)$, which is equivalent.)
% \footnote{
% , but I~was long puzzled by what it implied.
% Assume that the closed ball property holds before trying to make sense of
% the generic intersection property; otherwise, you might be misled by
% mistaken notions of what the ``interior'' of $F$ is, as I was for many years.
% }
For a smooth $\Sigma$, properties~A--D and
the following two properties together imply the generic intersection property:
(E)~there is no point where $\Sigma$ intersects
    a polygon of $\Vor \, V$ tangentially
    (i.e., where $F \subset T_p\Sigma$); and
(F)~there is no point where $\Sigma$ intersects
    an edge of $\Vor \, V$ tangentially
    (again, where $F \subset T_p\Sigma$, but $F$ is an edge).
% (G)~$\Sigma$ does not intersect any Voronoi vertex.
% Observe that D and G are the same condition.

This paper is organized around proving that these conditions hold
for a sufficiently dense sample $V$:
properties~A and~E in Section~\ref{2ball},
properties~C and~F in Section~\ref{0ball}, and
property~B in Section~\ref{1ball}.
% properties~A and~E in Section~\ref{2ball} and
% properties~B, C, and~F in Section~\ref{1ball}.
Property~D ensures that $\Del|_\Sigma V$ contains no polyhedra---only
faces of dimension two or less.
Property~D cannot be enforced by dense sampling, but
it can be enforced by an infinitesimal perturbation of $\Sigma$ so
that $\Sigma$ intersects no Voronoi vertex.
This perturbation is easy to simulate symbolically in software simply
by treating each Voronoi vertex on $\Sigma$ as if
it were strictly inside $\Sigma$
(when deciding which faces of $\Del \, V$ are in $\Del|_\Sigma V$).
% (and treating the faces connected to that vertex accordingly).

Unlike in Dey~\cite{dey07} or Cheng et al.~\cite{cheng12},
our proof of property~A does not rely on property~B or~C.
Property~A holds for a $0.4401$-sample or a $0.7861$-Voronoi sample, whereas
we prove property~B only for a $0.3245$-sample or a $0.4132$-Voronoi sample.
Property~B (restricted Voronoi edges) is
the bottleneck that determines our sampling requirements.
% (because
% the directions where Voronoi edges point are not very stable in samples).
% and property~C only for a $0.33$-sample
% (which can be improved to a $0.35$-sample; see Appendix~\ref{better0ball})
% or a $0.49$-Voronoi sample.
% We also have stronger proofs of properties~B and~C.
% each ``restricted Voronoi vertex'' is in fact a lone point and
% each ``restricted Voronoi edge'' is in fact a topological interval ($1$-ball).

Some algorithms for surfaces guarantee topological correctness without RDTs,
by other methods \cite{attali15,boissonnat21,cohensteiner20,kim20,niyogi08},
but in this paper we wish to see how far we can push RDTs.
One alternative is to consider Voronoi diagrams with other distance metrics.
Dyer, Zhang, and M\"{o}ller~\cite{dyer07,dyer08} and
% Boissonnat, Dyer, and Ghosh~\cite{boissonnat18c} use
others~\cite{leibon99,boissonnat18c} use
intrinsic distances within $\Sigma$,
also known as geodesic distances when $\Sigma$ is smooth,
to define {\em intrinsic Voronoi diagrams} that dualize to
{\em intrinsic Delaunay triangulations} (IDTs).
Advantages are that the Voronoi cells are trivially star-shaped, and
the sampling requirements needed to guarantee a homeomorphic triangulation are
mild~\cite{dyer08}.
But intrinsic distances on smooth surfaces are
painful to compute~\cite{patrikalakis02};
RDTs will probably remain popular as an easy alternative.

\section{A relationship between surface points and nearby tangent planes}
\label{tangentsites}

For a smooth, closed surface $\Sigma$ and a point $x \in \Sigma$,
let $T_x\Sigma \subset \R^3$ denote the plane tangent to~$\Sigma$ at~$x$.
% , and let $n_x$ denote the outside-facing vector normal to $\Sigma$ and
% to $T_x\Sigma$ at $x$.
Lemma~\ref{lem:abovebelow}, below, establishes
a relationship between $T_x\Sigma$ and $v$
for two nearby points $v, x \in \Sigma$.
% ; and a relationship between neighborhoods of $x$ and $T_v\Sigma$.
This relationship prepares us to prove in Section~\ref{2ball} that
under suitable sampling conditions,
a restricted Voronoi cell is a topological disk with
a star-shaped projection onto its site's tangent plane.
% The following lemma will give us
% the power to prove that theorem for relatively coarse point samples.
% (To~understand why, it is helpful to know that if $v$ is a site,
% the line segment $oo'$ discussed in the proof is in
% $v$'s three-dimensional Voronoi cell $\Vor \, v$;
% see also Lemma~\ref{lem:raybisector}.)
Lemma~\ref{lem:abovebelow} is surprisingly strong;
the constant $\xi$ (as it is applied in Theorem~\ref{thm:homeocell})
will likely be hard to improve.

Let $B$ and $B'$ be the two open balls of radius $\lfs(v)$ tangent to $\Sigma$
at $v$, and let $o$ and $o'$ be their centers.
As $B$ and $B'$ are subsets of the medial balls at $v$,
they are disjoint from $\Sigma$.

\begin{lemma}
\label{lem:abovebelow}
Consider two points $v, x \in \Sigma$ such that $|vx| < \xi \, \lfs(v)$,
where $\xi = \sqrt{(\sqrt{5} - 1) / 2} \doteq 0.786151$.
% Let $T_x\Sigma$ be the plane tangent to $\Sigma$ at~$x$.
Then $o$ and $o'$ lie on strictly opposite sides of $T_x\Sigma$.
\end{lemma}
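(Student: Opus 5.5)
We would argue by contradiction, working mostly in the plane through $v$, $o$, $o'$ and $x$. Place $v$ at the origin with unit normal $n$, so $o = rn$ and $o' = -rn$ where $r = \lfs(v)$. Let $d = |vx| < \xi r$ and let $n_x$ be a unit normal to $T_x\Sigma$. The claim is that the signed heights $(o - x)\cdot n_x$ and $(o' - x)\cdot n_x$ have strictly opposite signs. Suppose instead that they have the same sign, or that one of them is zero. Because $o$ and $o'$ are symmetric about $v$, this is equivalent to $r\,|n \cdot n_x| \le |x \cdot n_x|$. Equivalently, the plane $T_x\Sigma$ separates $v$ from neither $o$ nor $o'$, so it passes "beside" the segment $oo'$ or through an endpoint.

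First we collect the constraints that come from empty balls. (i) $x$ lies on $\Sigma$, so it is outside the open balls $B$ and $B'$. This gives $|x \cdot n| \le d^2/(2r)$: $x$ hugs $T_v\Sigma$. (ii) By the $1$-Lipschitz property of $\lfs$, we have $\rho := \lfs(x) \ge r - d > 0$. The two open balls of radius $\rho$ tangent to $\Sigma$ at $x$ are then disjoint from $\Sigma$, so they exclude $v$. This gives $|x \cdot n_x| = |(v - x)\cdot n_x| \le d^2/(2\rho)$. Combining these with the contradiction hypothesis yields $|n \cdot n_x| \le d^2/(2r\rho)$. Thus $T_x\Sigma$ is nearly perpendicular to $T_v\Sigma$, while $x$ is nearly in $T_v\Sigma$ and $v$ is nearly in $T_x\Sigma$.

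The key step is to show that these conditions are incompatible when $d < \xi r$. Constraint (ii) alone is too weak for this. We would therefore use the full geometry of the balls rather than the two linearized inequalities. The medial balls at $x$ have centers $x \pm \rho n_x$. When $n_x$ is nearly horizontal, one of these balls sweeps across the region near $v$ that $\Sigma$ must traverse. At the same time, $B$ and $B'$ pinch $\Sigma$ toward $T_v\Sigma$ near $v$. To make this precise, we would reduce to a two-dimensional extremal configuration. Put $x$ on the boundary of $B$ or $B'$, which is the worst case for (i). Put $v$ on the boundary of a tangent ball at $x$ whose radius is as small as permitted. Make the hypothesis tight, so that $T_x\Sigma$ passes through $o$ or $o'$.

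In that extremal configuration everything is determined by $d/r$. The tangency conditions then reduce to a polynomial equation in $t = (d/r)^2$. We expect this equation to be $t^2 + t - 1 = 0$, whose positive root is $t = (\sqrt5 - 1)/2 = \xi^2$. Showing that the configuration is impossible for $t < \xi^2$ completes the proof.

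The main obstacle is identifying which pair of empty-ball constraints is tight. The crude bound $\rho \ge r - d$ probably does not give the golden-ratio constant. We would try first the sharper argument that uses $T_x\Sigma$ through $o$: a plane tangent to $\Sigma$ at $x$ that contains the center of an empty ball of radius $r$ tangent at $v$. This situation forces a medial ball at $x$ of radius $\min(\rho, r)$ that must avoid $v$, and we expect that to produce $\xi^4 + \xi^2 = 1$ directly. If the Lipschitz bound proves necessary, we would handle the case $\rho < r$ separately.
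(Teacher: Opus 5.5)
\textbf{There is a genuine gap.} Your setup inequalities are correct, but the step that actually produces the contradiction is only conjectured (``we expect''), and the mechanism you propose cannot work. The constraints you plan to optimize over, (i), (ii), and the negated conclusion, can all hold simultaneously for \emph{every} $d > 0$.

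Take $n = e_3$, $x = d\,e_1$, $n_x = e_2$. Then $T_x\Sigma$ is the plane $y_2 = 0$, which contains all of $oo'$.
\begin{itemize}
\item The negated conclusion holds with equality ($0 \le 0$).
\item $x \in T_v\Sigma$, so $x \notin B \cup B'$ and (i) holds.
\item $v \in T_x\Sigma$, so $v$ avoids every open ball tangent to $\Sigma$ at $x$, whatever its radius, and (ii) holds.
\end{itemize}
Hence no extremal analysis of these conditions yields a bound. In particular, a ``medial ball at $x$ of radius $\min(\rho, r)$ that must avoid $v$'' gives nothing. This configuration is also genuinely three-dimensional: the medial ball centers at $x$ lie off the plane through $v, o, o', x$. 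So reducing to a planar extremal configuration would miss exactly the case that must be ruled out. Pointwise empty-ball conditions at $v$ and $x$ are not enough; a global ingredient is needed.

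\textbf{The two missing ingredients.} The paper's proof uses both of the following.
\begin{itemize}
\item Let $B_m$ be the medial ball at $x$ on $v$'s side of $T_x\Sigma$. If $B_m$ is a halfspace, then $T_x\Sigma = T_v\Sigma$ and the claim is immediate. Otherwise its center $m$ lies on the medial axis, so $|vm| \ge \lfs(v)$. Since $\angle vxm \le 90^\circ$, this gives $|mx|^2 \ge \lfs(v)^2 - |vx|^2$. That radius bound is far stronger than your $\lfs(v) - |vx|$.
\item $\Sigma$ separates $\R^3$, with $B$ and $B'$ in different pieces. So $B_m$ is disjoint from one of them, say $B$, which gives $|om| \ge \lfs(v) + |mx|$.
\end{itemize}
The negated conclusion gives $\angle oxm \le 90^\circ$. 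Hence $|ox|^2 \ge |om|^2 - |mx|^2 \ge \lfs(v)^2 + 2\,\lfs(v)\sqrt{\lfs(v)^2 - |vx|^2}$. On the other side, the parallelogram identity $|ox|^2 + |o'x|^2 = 2|vx|^2 + 2\,\lfs(v)^2$ together with $x \notin B'$ gives $|ox|^2 \le 2|vx|^2 + \lfs(v)^2$. Combining the two yields $t^2 + t - 1 \ge 0$ for $t = |vx|^2/\lfs(v)^2$, contradicting $t < \xi^2$.

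Your remark that a ball at $x$ ``sweeps across the region near $v$ that $\Sigma$ must traverse'' gestures at the separation idea. Without the bound $|vm| \ge \lfs(v)$ and an explicit disjointness argument, though, the proof does not close.
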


\begin{proof}
% If $x = v$ the result follows immediately, so assume that $x \neq v$.
Suppose for the sake of contradiction that
$o$ and $o'$ do not lie on strictly opposite sides of $T_x\Sigma$,
as illustrated in Figure~\ref{oopposite}, left.
%%%CUT
Then $T_x\Sigma \neq T_v\Sigma$ and $x \neq v$.
% , as $T_v\Sigma \cap oo' = \{ v \}$).
Moreover, either $oo' \subset T_x\Sigma$ or
$T_x\Sigma$ does not intersect the relative interior of $oo'$.
% (where $oo'$ is the line segment with endpoints $o$ and $o'$).

Let $B_m$ be the open medial ball tangent to $\Sigma$ at $x$ that is
on the same side of $T_x\Sigma$ as~$v$ (either side if $v \in T_x\Sigma$),
as illustrated.
If $B_m$ is an open halfspace then $T_x\Sigma = \mathrm{boundary} \, B_m$,
$v \in T_x\Sigma$ (as $v \not\in B_m$),
$T_v\Sigma = T_x\Sigma$ (as $\Sigma \cap B_m = \emptyset$), and
the lemma follows.
So assume $B_m$ is bounded.
Its center $m$ lies on $\Sigma$'s medial axis.
The line segment $xm$ is perpendicular to $T_x\Sigma$.

\begin{figure}
\centerline{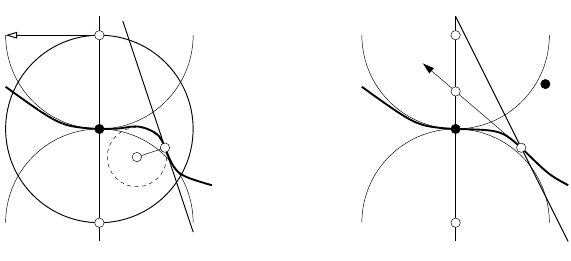}

\caption{\label{oopposite}
Left:  for two sufficiently close points $v, x \in \Sigma$,
suppose that the tangent plane $T_x\Sigma$ does not intersect $oo'$, as shown.
This leads to a contradiction; hence $T_x\Sigma$ must intersect $oo'$.
The medial ball $B_m$ is tangent to $\Sigma$ at $x$.
The center $m$ of $B_m$ cannot lie in the open ball $Q$.
The points $v$, $x$, $o$, and $o'$ lie on the plane of the page, but
$m$ and $w$ generally do not; imagine $m$ floating above the page.
The dashed circle shows the page's cross section of $B_m$, but $B_m$ is larger.
The surface $\Sigma$ cannot intersect the open balls $B$, $B'$, and $B_m$, so
$B$ and $B_m$ are disjoint.
Right:  the plane $\Lambda$ bisects $vw$.
The ray $a \in T_x\Sigma$ intersects the relative interior of $oo'$ and
is strictly on the same side of $\Lambda$ as $v$.
}
\end{figure}

Observe that $v$ is in the relative interior of $oo'$.
If $oo' \subset T_x\Sigma$, then
$\angle vxm = \angle oxm = \angle o'xm = 90^\circ$.
% , because $xm$ is perpendicular to $T_x\Sigma$.
Otherwise, $T_x\Sigma$ does not intersect the relative interior of $oo'$,
% which contains $v$,
so $\angle vxm < 90^\circ$, $\angle oxm \leq 90^\circ$, and
$\angle o'xm \leq 90^\circ$
(the last two because $o$ and $o'$ cannot be
on the side of $T_x\Sigma$ opposite from $v$).
By Pythagoras' Theorem, $|ox|^2 + |mx|^2 \geq |om|^2$ and
$|vx|^2 + |mx|^2 \geq |vm|^2$.
% Let $Q$ be the open ball with center $v$ and radius $\lfs(v)$;
% by the definition of $\lfs(\cdot)$, the medial axis is disjoint from $Q$.
% As $m$ lies on the medial axis, $m \not\in Q$,
% hence $|vm| \geq \lfs(v)$ and thus $|vx|^2 + |mx|^2 > \lfs(v)^2$.
As $m$~lies on the medial axis, $|vm| \geq \lfs(v)$
(by the definition of $\lfs$), so $|vx|^2 + |mx|^2 \geq \lfs(v)^2$.

The surface $\Sigma$ intersects none of the open balls $B$, $B'$, or $B_m$, but
it passes between $B$ and $B'$ at $v$.
As $\Sigma$ is closed and cuts space into two pieces,
one containing $B$ and one containing~$B'$, the ball $B_m$ must lie in
one of those two pieces.
% Suppose without loss of generality that $B_m$ lies in the same piece as $B'$,
Choose the labels $B$ and $B'$ so that $B_m$ lies in the same piece as $B'$,
as illustrated; then $B_m$ must be disjoint from $B$.
% The point $x$ lies either on the upper surface of the medial ball $B_m$ or on
% its lower surface; assume without loss of generality that
% $x$ is on its upper surface, so that $\Sigma \cap F$
% passes above $B_m$ and $B'$ and below $B$, as illustrated.
The radii of $B$ and $B_m$ are $\lfs(v)$ and $|mx|$ respectively, so
$|om| \geq \lfs(v) + |mx|$.
Combining this with the inequality $|ox|^2 + |mx|^2 \geq |om|^2$ gives
% $|ox|^2 + |mx|^2 \geq $(|mx| + \lfs(v))^2$
$|ox|^2 \geq \lfs(v)^2 + 2 \, \lfs(v) \, |mx|$.
Combining this with the inequality $|mx|^2 \geq \lfs(v)^2 - |vx|^2$ gives
$|ox|^2 \geq \lfs(v)^2 + 2 \, \lfs(v) \sqrt{\lfs(v)^2 - |vx|^2}$.

Let $N_v\Sigma$ be the line through $o'$, $v$, and $o$
(the vertical axis in Figure~\ref{oopposite}).
Create a coordinate system with $v = (0, 0, 0)$ and $x = (x_h, x_v, 0)$
such that $x_v$ is the coordinate in the direction parallel to $N_v\Sigma$ and
$x_h$ is the distance from $x$ to $N_v\Sigma$
(the horizontal axis in Figure~\ref{oopposite}).
Then $|ox|^2 + |o'x|^2 =
x_h^2 + (x_v - \lfs(v))^2 + x_h^2 + (x_v + \lfs(v))^2 =
2 x_h^2 + 2 x_v^2 + 2 \, \lfs(v)^2 = 2 \, |vx|^2 + 2 \, \lfs(v)^2$.
Rewrite this as $|vx|^2 = (|ox|^2 + |o'x|^2 - 2 \, \lfs(v)^2) / 2$.
As $x \not\in B'$, $|o'x|^2 \geq \lfs(v)^2$.
Combining these with the inequality
$|ox|^2 \geq \lfs(v)^2 + 2 \, \lfs(v) \sqrt{\lfs(v)^2 - |vx|^2}$ gives
$|vx|^2 \geq \lfs(v) \sqrt{\lfs(v)^2 - |vx|^2}$.

As $|vx| < \xi \, \lfs(v)$, %%% by assumption,
we have $\xi^2 > |vx|^2 / \lfs(v)^2 \geq \sqrt{1 - |vx|^2 / \lfs(v)^2}
> \sqrt{1 - \xi^2}$, which is equivalent to
$\xi^4 + \xi^2 - 1 > 0$, hence $\xi > \sqrt{(\sqrt{5} - 1) / 2}$.
The result follows by contradiction.
\end{proof}

\section{Restricted Voronoi cells are topological disks}
\label{2ball}

This section investigates sampling conditions that guarantee that
(1)~every restricted Voronoi cell has the topology of a closed disk
(closed ball property A),
(2)~the projection of each restricted Voronoi cell onto
its site's tangent plane is star-shaped, and
(3)~no polygon of $\Vor \, V$ intersects $\Sigma$ tangentially
(generic intersection property E).
% Generic intersection property E is proved along the way.
%%%CUT
Theorem~\ref{thm:homeocell} shows that a $0.78$-Voronoi sample suffices, and
Corollary~\ref{cor:homeocelleps} shows that a $0.44$-sample suffices.

Consider a site $v \in V$,
its Voronoi cell $\Vor \, v$, and
its restricted Voronoi cell $\Vor|_\Sigma v = \Sigma \cap \Vor \, v$.
Let $\varphi$~be the map that
orthogonally projects $\R^3$ onto $T_v\Sigma$.
% the plane tangent to $\Sigma$ at $v$.
Note that $\varphi(v) = v$.
% A point $y$ is {\em in the interior of} $\Vor|_\Sigma v$ if
% % $y$ has an open neighborhood $N \subset \Vor|_\Sigma v$ homeomorphic to
% % a disk;
% there is a topological open disk $N \subset \Vor|_\Sigma v$ with $y \in N$;
% {\em on the boundary of} $\Vor|_\Sigma v$ otherwise.

Define a {\em radial path} to be a closed topological interval
%%%UNCUT
%%% (i.e., a topological $1$-ball)
$\gamma \subset \Vor|_\Sigma v$ such that
\begin{itemize}
\item
one endpoint of $\gamma$ is the site $v$,
\item
the other endpoint---call it $z$---lies on the boundary of $\Vor \, v$,
% (the three-dimensional cell),
\item
every point on $\gamma \setminus \{ z \}$ lies in the interior of $\Vor \, v$,
% no point on $\gamma \setminus \{ z \}$ lies on the boundary of $\Vor \, v$,
and
\item
$\varphi|_\gamma$~is a homeomorphism from $\gamma$
to a line segment on $T_v\Sigma$ with endpoints $v$ and $\varphi(z)$, where
$\varphi|_\gamma$ denotes the restriction of $\varphi$ to the domain $\gamma$.
\end{itemize}

We will see that under suitable sampling conditions,
every point in $\Vor|_\Sigma v$ lies on exactly one radial path,
except $v$ itself (Lemma~\ref{lem:pathincell}).
It follows that we can decompose $\Vor|_\Sigma v$ into
radial paths such that no two share a point besides $v$
(Lemma~\ref{lem:pathsincell}).
That is, if we remove~$v$ from each radial path, then
we have a partition of $\Vor|_\Sigma v \setminus \{ v \}$ into paths.
Therefore, $\varphi(\Vor|_\Sigma v)$ is star-shaped.
Although $\Vor|_\Sigma v$ itself is not star-shaped,
its decomposition into radial paths is
a curvy variant of ``star-shaped.''
As the lengths of the projected radial paths vary continuously with
their polar angles, $\Vor|_\Sigma v$ is homeomorphic to a closed disk
(Theorem~\ref{thm:homeocell}).

Let $N_v\Sigma$ be the line normal to $\Sigma$ at $v$
(orthogonal to $T_v\Sigma$).
Let $B$ and $B'$ be the two open balls of radius $\lfs(v)$ tangent to $\Sigma$
at $v$, and let $o$ and $o'$ be their centers. % , respectively.
Then $o, o' \in N_v\Sigma$.

The following lemma implies that
if you are standing on the boundary of $\Vor|_\Sigma v$ and
you walk toward $v$ on a radial path,
you immediately enter the interior of $\Vor \, v$.
(The proof of Lemma~\ref{lem:pathincell} develops this idea further.)
It also implies generic intersection property E.

\begin{lemma}
\label{lem:raybisector}
Consider two distinct sites $v, w \in V$ and a point
$x \in \Vor|_\Sigma v \cap \Vor|_\Sigma w$.
Suppose that $|vx| < \xi \, \lfs(v)$
where $\xi = \sqrt{(\sqrt{5} - 1) / 2} \doteq 0.786151$.
Let $\Lambda$ be the plane that orthogonally bisects the line segment $vw$
(thus $x \in \Lambda$).
By Lemma~\ref{lem:abovebelow}, $T_x\Sigma$ intersects
the relative interior of the line segment~$oo'$ at a lone point~$t$.
Let $a$ be the open ray $\vec{xt}$, and observe that $a \subset T_x\Sigma$.
% Hence $a \subset T_x\Sigma$, $a$ is tangent to $\Sigma$ at $x$,
% and $a$ passes through~$N_v\Sigma$ (i.e., $v \in \varphi(a)$).

Then $v$ and $a$ are strictly on the same side of~$\Lambda$.
\end{lemma}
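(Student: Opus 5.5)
The plan is to reduce the claim about the ray $a$ to a claim about the single point $t$, and then to prove that claim using the empty balls $B$ and $B'$. Since $x \in \Lambda$ and $a$ is the open ray from $x$ through $t$, every point of $a$ has the form $x + s(t - x)$ with $s > 0$. The signed distance to $\Lambda$, taken positive on $v$'s side, is an affine function of $s$ that vanishes at $s = 0$. So it suffices to show that $t$ lies strictly on $v$'s side of $\Lambda$, that is, $|tv| < |tw|$. Then the signed distance is positive at $s = 1$, hence positive for all $s > 0$.

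First I would check that $t \neq x$, so that the ray is well defined. Every point of the relative interior of $oo'$ other than $v$ lies in the open ball $B$ or the open ball $B'$, and these are disjoint from $\Sigma$. So $x = t$ would force $x = v$. But then $x \in \Vor|_\Sigma w$ gives $|vw| = |vx| \le |wx| = 0$, contradicting $v \neq w$.

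Now I prove $|tv| < |tw|$, using Lemma~\ref{lem:abovebelow} to place $t$ in the relative interior of $oo'$.
\begin{itemize}
\item If $t = v$, then $|tv| = 0 < |tw|$ because $w \neq v$.
\item Otherwise, by symmetry say $t$ lies strictly between $v$ and $o$, so $0 < |tv| < \lfs(v)$. Let $D$ be the closed ball with center $t$ and radius $|tv|$. It is internally tangent to the closed ball $\overline{B}$ at $v$: it lies in $\overline{B}$, and its boundary sphere meets the boundary of $B$ only at $v$. Hence $D \setminus \{v\} \subset B$.
\item Since $w \in \Sigma$ and $\Sigma \cap B = \emptyset$, we have $w \notin D \setminus \{v\}$. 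Together with $w \neq v$, this gives $w \notin D$, i.e., $|tw| > |tv|$.
\end{itemize}

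The main obstacle is obtaining strictness rather than $|tw| \ge |tv|$. The internal-tangency argument needs $t \neq o$, so that the radius $|tv|$ is strictly smaller than $\lfs(v)$. That is exactly why it matters that Lemma~\ref{lem:abovebelow} puts $t$ in the relative interior of $oo'$ and not at an endpoint. Beyond that, the argument only uses that $w \in \Sigma$ avoids the open balls $B$ and $B'$.
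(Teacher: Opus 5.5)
Your proof is correct, and its skeleton matches the paper's. You first show that $t$ lies strictly on $v$'s side of $\Lambda$, then transfer this to the open ray using $x \in \Lambda$. Your affine signed-distance argument makes explicit a step the paper states in one sentence.

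The difference is in how you place $t$ on $v$'s side. The paper works with the whole segment $oo'$. Since $w \notin B \cup B'$, it has $|vo| \le |wo|$ and $|vo'| \le |wo'|$, so both endpoints $o$ and $o'$ lie in the closed halfspace on $v$'s side. The point $v$ lies in the relative interior of $oo'$ and strictly on that side, so convexity forces the entire relative interior of $oo'$ to lie strictly on $v$'s side; in particular $t$ does. You instead argue pointwise: the closed ball centered at $t$ through $v$ is internally tangent to $\overline{B}$ at $v$, which gives $|tv| < |tw|$ directly.

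The paper's version is slightly shorter and needs only halfspace convexity. It also yields the reusable fact that $oo' \subset \Vor\,v$, with its relative interior in the interior of $\Vor\,v$; this is exactly what carries over to the torus disk $\mathbb{D}$ in Lemma~\ref{lem:raybisectorhi}. Your version needs the extra observation $t \neq o$ to get strictness, which you correctly identify. It generalizes just as easily: for $t \in \mathbb{D} \setminus \{v\}$, use the lfs-ball whose center is the point of $\partial\mathbb{D}$ on the ray from $v$ through $t$.

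Two small remarks. First, your separate check that $t \neq x$ is redundant, since $|tv| < |tw|$ already puts $t$ off $\Lambda$ while $x \in \Lambda$. Second, the chain $|vw| = |vx| \le |wx| = 0$ in that check is garbled as written. You mean $|vw| = |xw| \le |xv| = 0$ when $x = v$. Alternatively, simply note that $x \in \Lambda$ but $v \notin \Lambda$.
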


\begin{proof}
See Figure~\ref{oopposite}, right.
Neither $B$ nor $B'$ intersects $\Sigma$, hence neither ball contains $w$,
hence $|vo| \leq |wo|$ and $|vo'| \leq |wo'|$.
Therefore, each of $o$ and $o'$ lies either on $\Lambda$ or
on the same side of $\Lambda$ as $v$, as illustrated.
(More broadly, $oo' \subset \Vor \, v$.)
As $v \in oo'$ and $v \not\in \Lambda$,
$\Lambda$ does not intersect the relative interior of~$oo'$.
By contrast, $a$ does intersect the relative interior of $oo'$ (at $t$).
Recall that $a$'s~origin $x$ lies on $\Lambda$.
Therefore, the open ray $a$ is strictly on the same side of $\Lambda$ as
the relative interior of $oo'$, which contains $v$.
\end{proof}

Lemma~\ref{lem:pathincell}, below, shows that
under suitable sampling conditions,
every point in $\Vor|_\Sigma v \setminus \{ v \}$ lies on
one and only one radial path.
It depends on the simple observation of Lemma~\ref{lem:neighborinject}.

\begin{lemma}
\label{lem:neighborinject}
Let $v, x \in \Sigma$ be two points such that $|vx| < \xi \, \lfs(v)$.
% $|vx| \leq 0.9101 \, \lfs(v)$.
% Let $\varphi$ be the function that orthogonally projects points onto
% $v$'s tangent plane~$T_v\Sigma$.
% Let $r \subset T_v\Sigma$ be a ray with origin $v$ that passes through $y$.
% Let $N_v\Sigma$ be the line normal to $\Sigma$ at $v$, and
% let $\Pi$ be the halfplane with boundary $N_v\Sigma$ that contains $r$.
There exists an open neighborhood
$N \subset \Sigma$ of $x$ such that $\varphi|_N$ is
a homeomorphism from $N$ to its image $\varphi(N) \subset T_v\Sigma$.
% where $\varphi|_N$ denotes the restriction of $\varphi$
% to the domain $N$.
%%% JRS:  Add a definition of $\varphi|_N$?
% Moreover, there exists an open neighborhood
% $M \subset \Sigma \cap \Pi$ of $x$ such that $\varphi|_M$ is
% a homeomorphism from $M$ to its image $\varphi(M)$ on $T_v\Sigma$.
\end{lemma}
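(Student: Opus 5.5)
The plan is to show that $T_x\Sigma$ is not perpendicular to $T_v\Sigma$, so that $\varphi$ restricted to $T_x\Sigma$ is a linear isomorphism onto $T_v\Sigma$. Once that holds, the inverse function theorem applied to the smooth surface (or a Lipschitz graph argument, since only $C^{1,1}$-continuity is assumed) gives the neighborhood $N$.

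\emph{Step 1.} By Lemma~\ref{lem:abovebelow}, since $|vx| < \xi\,\lfs(v)$, the points $o$ and $o'$ lie strictly on opposite sides of $T_x\Sigma$. The segment $oo'$ lies on the normal line $N_v\Sigma$. Hence $N_v\Sigma$ is not parallel to $T_x\Sigma$: a line parallel to a plane lies entirely on one side of it or inside it, so it cannot contain points strictly on both sides. Equivalently, the normal vector $n_v$ of $T_v\Sigma$ is not contained in the direction space of $T_x\Sigma$. Therefore the kernel of $\varphi$, which is the direction of $n_v$, meets the direction space of $T_x\Sigma$ only in $0$. It follows that the differential $d\varphi_x$, restricted to the tangent space $T_x\Sigma$, is injective and thus a linear isomorphism between $2$-dimensional spaces. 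Put another way, $n_x \cdot n_v \neq 0$.

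\emph{Step 2.} Near $x$, write $\Sigma$ locally as a graph over $T_x\Sigma$, or take any local $C^1$ parametrization $\sigma: U \to \Sigma$ with $\sigma(0)=x$. Then $\varphi\circ\sigma: U \to T_v\Sigma$ is $C^1$, and by Step 1 its derivative at $0$ is invertible. The inverse function theorem then yields an open $U' \ni 0$ on which $\varphi\circ\sigma$ is a diffeomorphism onto an open subset of $T_v\Sigma$. Set $N=\sigma(U')$, which is open in $\Sigma$. Then $\varphi|_N$ is a continuous bijection onto $\varphi(N)$ with continuous inverse $\sigma\circ(\varphi\circ\sigma)^{-1}$, so it is a homeomorphism.

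A more elementary alternative is available because $n_x\cdot n_v\neq 0$ and the normal varies continuously on $\Sigma$. We can choose a neighborhood $N$ on which the normals stay uniformly away from perpendicular to $n_v$; then $\Sigma\cap N$ is locally the graph of a function over $T_v\Sigma$, and that graph structure gives injectivity directly.

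\emph{Main obstacle.} The only real content is Step 1, that the tangent planes are non-perpendicular, and Lemma~\ref{lem:abovebelow} supplies it. The remaining subtlety is regularity: the inverse function theorem needs $\sigma$ to be $C^1$, which $C^{1,1}$ surfaces satisfy. Care is also needed to take $N$ open in $\Sigma$ and to check that the local inverse is continuous. Both follow from the standard theorem.
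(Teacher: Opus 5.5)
Your proposal is correct and follows essentially the same route as the paper. Both arguments use Lemma~\ref{lem:abovebelow} to conclude that $T_x\Sigma$ is not perpendicular to $T_v\Sigma$, then invoke smoothness to get a local homeomorphism. Your inverse-function-theorem argument (via a $C^1$ chart, which a $C^{1,1}$ surface admits) just spells out the step the paper compresses into ``it follows from the smoothness of $\Sigma$,'' including the continuity of the inverse.
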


\begin{proof}
% As $|vx| \leq 0.9101 \, \lfs(v)$,
As $|vx| < \xi \, \lfs(v)$,
by Lemma~\ref{lem:abovebelow} (or Lemma~\ref{lem:nvl}),
% $\angle (n_v, n_x) \neq 90^\circ$.
% where $n_v$ and $n_x$ are
% outside-facing vectors normal to $\Sigma$ at $v$ and $x$, respectively.
% Hence
$T_x\Sigma$ is not perpendicular to $T_v\Sigma$.
It~follows from the smoothness of $\Sigma$ that
if $N$ is sufficiently small, $\varphi|_N$ is injective.
As $\varphi|_N$ is injective and
both $\varphi|_N$ and its inverse are continuous,
$\varphi|_N$ is a homeomorphism.
\end{proof}

\begin{lemma}
\label{lem:pathincell}
Consider a site $v \in V$, its restricted Voronoi cell $C = \Vor|_\Sigma v$, and
a point $x \in C \setminus \{ v \}$.
Let $r \in T_v\Sigma$ be the closed ray with origin $v$ that passes through
$\varphi(x)$.
Suppose that for every point $y \in C$, $|vy| < \xi \, \lfs(v)$, where
$\xi = \sqrt{(\sqrt{5} - 1) / 2} \doteq 0.786151$.

Then there is a unique radial path $\gamma \subset C$ such that $x \in \gamma$.
% Moreover, let $z$ be
% the endpoint of $\gamma$ on the boundary of $C$; then
% no point in $\gamma \setminus \{ z \}$ is
% in any other site's restricted Voronoi cell.
Furthermore, $\varphi(\gamma) \subseteq r$ and
$\gamma$~is the only radial path such that $\varphi(\gamma) \subseteq r$.
\end{lemma}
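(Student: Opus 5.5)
We work inside the closed half-plane $\Pi$ that is bounded by $N_v\Sigma$ and contains $r$. Since $\varphi(\Pi)=r$, a radial path with $\varphi(\gamma)\subseteq r$ is a path in $\Sigma\cap\Pi\cap\Vor\,v$. The plan is to build $\gamma$ by continuation. Start at $x$ and follow the curve $\Sigma\cap\Pi$ in the direction whose projection moves toward $v$. Then check that this curve stays in $C$, reaches $v$, and that continuing outward from $x$ it stays in $\Int\,\Vor\,v$ until it first hits $\Bd\,\Vor\,v$.

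\textbf{Local structure.} For any $y\in\Sigma\cap\Pi$ with $|vy|<\xi\,\lfs(v)$, Lemma~\ref{lem:neighborinject} gives a neighborhood $N$ of $y$ on which $\varphi$ is a homeomorphism onto an open subset of $T_v\Sigma$. Pulling back $r$ (or the line through $r$) shows that $\Sigma\cap\Pi$ near $y$ is a single arc, mapped homeomorphically onto a segment of the line through $r$. If $y\neq v$ this segment lies in $r$. At $v$ itself, $\Sigma\cap\Pi$ leaves $v$ as a single arc over $r$, because $\varphi$ is a local homeomorphism at $v$ (here $T_v\Sigma$ is parallel to itself).

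\textbf{Inward continuation.} From $x$, extend the arc in the direction that decreases $|v\varphi(\cdot)|$. Take the maximal half-open arc $\gamma_0$ with $\varphi|_{\gamma_0}$ injective that stays in $C$. We must show two things: (i) it never exits $C$, and (ii) it terminates at $v$.

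For (i), suppose the arc reaches a point $y\in\Bd\,\Vor\,v$, so $y\in\Vor|_\Sigma v\cap\Vor|_\Sigma w$ for some other site $w$ (with $|vy|<\xi\,\lfs(v)$ by hypothesis). Apply Lemma~\ref{lem:raybisector}: the tangent ray $a$ from $y$ toward $t\in oo'$ lies strictly on $v$'s side of the bisector $\Lambda$. The tangent direction of $\Sigma\cap\Pi$ at $y$ pointing toward $v$ lies in $T_y\Sigma\cap\Pi$. Since $\Pi$ contains $oo'$, this direction is exactly the ray $a$ (as long as $T_y\Sigma\cap\Pi$ is a line). So moving inward from $y$ enters the open side of every such $\Lambda$, i.e.\ the interior of $\Vor\,v$.

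This is the step I expect to be the main obstacle: turning the first-order statement into a genuine local statement, and handling $y$ on several bisectors simultaneously. Since only finitely many sites are involved and the inequality is strict, a $C^1$ argument should give a small arc beyond $y$ lying strictly inside all the relevant half-spaces. Consequently the inward arc can never cross from $\Int\,\Vor\,v$ to its boundary. An inward-moving curve that would leave $C$ must first touch $\Bd\,\Vor\,v$ while approaching from inside, and this contradicts the fact that points just inward of a boundary point are interior.

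For (ii), the projection along $\gamma_0$ strictly decreases. Compactness of $C$, together with the local-arc structure at every point of $C$ (valid because $C$ lies within distance $\xi\,\lfs(v)$), lets the arc be continued until its projection reaches $v$. At that point the arc must be at $v$: the only point of $\Sigma\cap N_v\Sigma$ within distance $\xi\,\lfs(v)<\lfs(v)$ of $v$ is $v$, since $B$ and $B'$ cover the rest of that segment.

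\textbf{Outward continuation and uniqueness.} From $x$, move outward along $\Sigma\cap\Pi$. If $x\in\Bd\,\Vor\,v$, set $z=x$. Otherwise, continue while in $\Int\,\Vor\,v$. The arc stays in $C$ and hence within distance $\xi\,\lfs(v)$ of $v$, so it cannot run forever with injective projection. Since $\Sigma$ is compact and $\varphi$ is locally injective, it must reach a first point $z\in\Bd\,\Vor\,v$. Joining the inward and outward pieces gives $\gamma$, and by construction $\varphi|_\gamma$ is a homeomorphism onto the segment from $v$ to $\varphi(z)$.

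For uniqueness, any radial path $\gamma'$ containing $x$ has $\varphi(\gamma')$ a segment from $v$ containing $\varphi(x)$, hence contained in $r$. Therefore $\gamma'\subseteq\Sigma\cap\Pi$. The uniqueness of local arcs, from Lemma~\ref{lem:neighborinject}, then forces $\gamma'$ to agree with $\gamma$ near $x$, and by connectedness and continuation it agrees everywhere. Moreover $\gamma'$ must stop at the first boundary point $z$, because all points other than its endpoint are interior. The same argument shows that $\gamma$ is the only radial path with $\varphi(\gamma)\subseteq r$: any such path shares the initial arc from $v$ over $r$, so it coincides with $\gamma$.
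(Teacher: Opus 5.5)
Your proof is correct in outline, and it uses the same ingredients as the paper: Lemma~\ref{lem:neighborinject}, Lemma~\ref{lem:raybisector}, and the fact that $B \cup B'$ leaves $v$ as the only point of $\Sigma \cap N_v\Sigma$ within distance $\xi \, \lfs(v)$. The organization differs, though. The paper does not build $\gamma$ by continuation. It defines $\gamma$ as the connected component of $\varphi|_C^{-1}(r) = C \cap \Pi$ containing $x$, so existence and compactness come for free. It then uses the local homeomorphism to show that $\gamma$ is an arc or a point, and identifies the endpoints: points of $\Int \Vor \, v$ other than $v$ cannot be endpoints, and $\gamma \neq \{z\}$. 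Your path-lifting along the $1$-manifold $\Sigma \cap \Pi$ makes the arc structure and the monotone projection immediate, at the price of maximal-arc and limit-point bookkeeping. Your uniqueness argument is also more direct: local uniqueness of the arc over $r$ (including at $v$), plus the rule that a radial path stops at its first boundary point. The paper instead reruns the whole argument for every $x' \in \varphi|_C^{-1}(r) \setminus \{v\}$ to conclude that $\varphi|_C^{-1}(r)$ is connected and hence equals $\gamma$.

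One step is missing. Your part (i) shows only that the inward arc never \emph{leaves} $C$: at a would-be exit point $y \in \Bd \Vor \, v$, the inward tangent $a$ points strictly into every relevant half-space, so the arc continues in $C$. It does not show that the inward arc avoids $\Bd \Vor \, v$. You need that for the third criterion of a radial path, and again in your uniqueness step (``$\gamma'$ must stop at the first boundary point $z$''). A priori the arc could touch $\Bd \Vor \, v$ at some $y$ and then continue inward into the interior. So the sentence ``Consequently the inward arc can never cross from $\Int \Vor \, v$ to its boundary'' does not follow from the inward fact.

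The repair is the outward half of Lemma~\ref{lem:raybisector}, which is exactly what the paper uses to prove $\gamma \setminus \{z\} \subset \Int \Vor \, v$. Since $a$ lies \emph{strictly} on $v$'s side of $\Lambda$, the opposite tangent direction (away from $N_v\Sigma$) lies strictly on $w$'s side. Hence the arc just outward of $y$ is not in $\Vor \, v$, which contradicts the fact that the piece of your arc between $y$ and $x$ lies in $C$. With that sentence added, the argument is complete. The inward direction still handles the case $x \in \Bd \Vor \, v$, mirroring the paper's use of it at $z$ to rule out $\gamma = \{z\}$.
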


\begin{proof}
For every point $y \in C \setminus \{ v \}$ (including $x$),
$\varphi(y) \neq v$, as if $\varphi(y) = v$ we have a contradiction:
$|vy| < \xi \, \lfs(v)$ and $y \neq v$ imply that
$y$ is in $B$ or $B'$ and thus not in~$C$.

% Let $C = \Vor|_\Sigma v$ be a shorthand for $v$'s restricted Voronoi cell.
Define the point set $\varphi|_C^{-1}(r) = \{ y \in C : \varphi(y) \in r \}$
(the intersection of $C$ with
the closed halfplane with boundary $N_v\Sigma$, passing through $x$).
Clearly, $x \in \varphi|_C^{-1}(r)$.
Let $\gamma$ be
the connected component of $\varphi|_C^{-1}(r)$ that contains $x$.
We will show that $\gamma$ is a radial path.

As $\gamma \subset C$, % and $\varphi(x) \in r$.
for every point $y \in \gamma$, $|vy| < \xi \, \lfs(v)$ and
by Lemma~\ref{lem:neighborinject} there exists an open neighborhood
$N \subset \Sigma$ of $y$ such that $\varphi|_N$ is a homeomorphism, so
$\varphi|_{N \cap \gamma}$ is a homeomorphism
from $N \cap \gamma$ to its image $\varphi(N \cap \gamma)$. % on the ray $r$.
In other words, $\varphi|_{\gamma}$ is a local homeomorphism
from the path~$\gamma$ to its image $\varphi(\gamma)$. % on $r$.
As $\gamma$~is connected and $\varphi(\gamma)$ is embedded in the ray $r$,
$\varphi|_{\gamma}$ is a~(global) homeomorphism.
(Intuitively, the map $\varphi$ cannot cause the path to double back
on itself, so $\varphi|_{\gamma}$ is an injection.)
Therefore, $\gamma$ is a topological interval
% (i.e., a connected $1$-manifold with boundary)
or a lone point.

As $C$ is compact and $r$ is closed,
$\varphi|_C^{-1}(r)$ is compact and $\gamma$ is compact.
Let $q$ and $z$ be the endpoints of~$\gamma$,
chosen so that $|v\varphi(q)| \leq |v\varphi(z)|$;
see Figure~\ref{radialpath}.
As $\varphi|_{\gamma}$ is a homeomorphism,
$\varphi(q)$ and $\varphi(z)$ are the endpoints of $\varphi(\gamma)$.
As $\varphi(\gamma)$ contains $\varphi(x)$ and $\varphi(x) \neq v$,
$\varphi(z) \neq v$ and $z \neq v$.
We will show that $q = v$ (which implies that $\varphi(C)$ is star-shaped) and
that $z$ is on the boundary of $\Vor \, v$, thereby establishing
the first two criteria for $\gamma$ to be a radial path.

\begin{figure}
\centerline{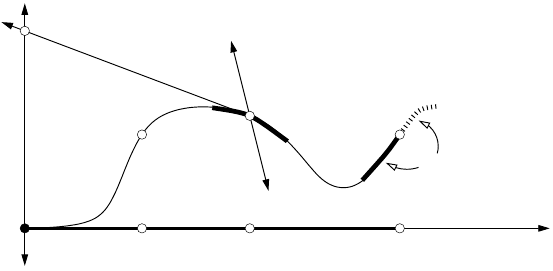}

\caption{\label{radialpath}
A radial path $\gamma$ with endpoints $v$ and $z$.
(The path $P'$ extends past $\gamma$ on $\Sigma$.)
}
\end{figure}

But first, we show that $\gamma \setminus \{ z \}$ is
in the interior of $\Vor \, v$,
the third criterion for $\gamma$ to be a radial path.
Suppose for the sake of contradiction that
a point $y \in \gamma \setminus \{ z \}$ lies on the boundary of $\Vor \, v$.
Then $y$ also lies in the restricted Voronoi cell $\Vor|_\Sigma w$ of
another site $w \in V \setminus \{ v \}$, and
$y \in \Lambda$ where $\Lambda$ is the plane that orthogonally bisects $vw$.
Clearly, $y \neq v$.
By Lemma~\ref{lem:abovebelow},
$T_y\Sigma$ intersects $N_v\Sigma$ at a lone point~$t$.
Let $a$ be the open ray $\vec{yt}$.
Observe that $a \subset T_y\Sigma$; moreover,
$a$ is tangent to $\gamma$ at~$y$, as illustrated in Figure~\ref{radialpath}.

By Lemma~\ref{lem:raybisector},
$v$ and $a$ are strictly on the same side of~$\Lambda$.
Hence if you walk along $\gamma$ from $y$ to $z$---opposite to
the direction of $a$---you enter
$w$'s side of $\Lambda$ at the instant you leave~$y$, as illustrated.
This contradicts the fact that $\gamma \in \Vor|_\Sigma v$.
So $\gamma \setminus \{ z \}$ is in the interior of $\Vor \, v$.

Let us return to the first two criteria for $\gamma$ to be a radial path.
Consider a point $y \in \gamma \setminus \{ v \}$.
By Lemma~\ref{lem:neighborinject},
there exists an open neighborhood $N \subset \Sigma$ of $y$
such that $\varphi|_N$ is a homeomorphism. % from $N$ to $\varphi(N)$;
Define $\varphi|_N^{-1}(r) = \{ p \in N : \varphi(p) \in r \}$ and
let $P$ be the connected component of $\varphi|_N^{-1}(r)$ that contains $y$,
illustrated in Figure~\ref{radialpath}.
As $\varphi(y)$ is in the interiors of $r$ and $\varphi(N)$,
$\varphi(y)$ is in the interior of $r \cap \varphi(N)$.
As $\varphi|_N$ is a homeomorphism,
$y$ is in the relative interior of $\varphi|_N^{-1}(r)$, so
$P$~is a path (topological interval) on $\Sigma$
with $y$ in its relative interior.

First, consider the case where $y$ is in the interior of $\Vor \, v$
(but $y \neq v$).
Then we can shrink the open neighborhood $N$ of $y$ so that
$N \subset \Vor \, v$ and thus $N \subset C$, and
thereby have $\varphi|_N^{-1}(r) \subseteq \varphi|_C^{-1}(r)$.
As $\gamma$~is the connected component of $\varphi|_C^{-1}(r)$ that
contains~$y$, $P \subseteq \gamma$.
Hence $y$ is not an endpoint of $\gamma$.
We have seen that $\gamma \setminus \{ z \}$ is in the interior of $\Vor \, v$;
it follows that only $v$ and $z$ can be endpoints of $\gamma$.
That is, the endpoint $q$ is either $v$ or $z$.
Moreover, as $z \neq v$ is an endpoint of $\gamma$,
$z$ is on the boundary of $\Vor \, v$ (establishing the second criterion).

Second, consider the case where $y = z$.
In this case, the path $P$ looks like $P'$ in Figure~\ref{radialpath}.
We use this case to show that $q \neq z$.
Suppose for the sake of contradiction that $q = z$; then $\gamma = \{ z \}$.
It follows that as you walk along $P$ from $z$,
you exit the restricted Voronoi cell $C$ in both directions along~$P$.
Let $a$ be the open ray $\vec{zt}$ where
$\{ t \} = T_z\Sigma \cap N_v\Sigma$ (hence $a \subset T_z\Sigma$);
$a$ is tangent to $P$ at $z$.
As $P$ exits $C$,
there exists a site $w \in V \setminus \{ v \}$ and
a plane~$\Lambda$ that orthogonally bisects $vw$ such that
$z \in \Lambda$ and $a$~enters $w$'s side of $\Lambda$ at $z$.
But this contradicts the fact that, by Lemma~\ref{lem:raybisector},
$v$ and $a$ are strictly on the same side of $\Lambda$.
Thus $q \neq z$, so $q = v$.

Therefore, the endpoints of $\gamma$ are $v$ and $z$,
$z$ is on the boundary of $\Vor \, v$ and
$\gamma \setminus \{ z \}$ is in its interior, and
$\varphi|_\gamma$ is a homeomorphism from $\gamma$ to $v\varphi(z)$.
By definition, $\gamma$ is a radial path.

To see that $\gamma$ is the {\em only} radial path containing $x$, observe that
every radial path containing~$x$ is a subset of $\varphi|_C^{-1}(r)$, and
moreover is a subset of $\gamma$ (because a radial path is connected).
No strict subset of $\gamma$ can be a radial path, because
every connected strict subset of $\gamma$ is missing
either $v$ or a point on the boundary of $\Vor \, v$.

The reasoning of this proof holds equally well if we replace $x$ with
any other point $x' \in \varphi|_C^{-1}(r) \setminus \{ v \}$, so
every connected component of $\varphi|_C^{-1}(r)$ contains $v$.
Therefore, $\varphi|_C^{-1}(r)$ is connected.
Hence $\gamma = \varphi|_C^{-1}(r)$ and
no radial path besides $\gamma$ has its projection on $r$.
\end{proof}

It follows that we can decompose $\Vor|_\Sigma v$ into radial paths.
% (See Appendix~\ref{pathsproof} for a proof.)
Hence $\varphi(\Vor|_\Sigma v)$ is star-shaped.
It also follows that the orthogonal projection $\varphi$,
restricted to $\Vor|_\Sigma v$, is a homeomorphism.

\begin{lemma}
\label{lem:pathsincell}
% Suppose the conditions specified in Lemma~\ref{lem:pathincell} hold.
% Let $v \in V$ be a site.
Consider a site $v \in V$ and
its restricted Voronoi cell $C = \Vor|_\Sigma v$.
Suppose that for every point $y \in C$, % $y \in \Vor|_\Sigma v$,
$|vy| < \xi \, \lfs(v)$, where
$\xi = \sqrt{(\sqrt{5} - 1) / 2} \doteq 0.786151$.

Let $\Gamma$ be the set of all radial paths for all points in
$C \setminus \{ v \}$. % $\Vor|_\Sigma v \setminus \{ v \}$.

Then $\bigcup_{\gamma \in \Gamma} \gamma = C$; hence
$\varphi(C)$ is star-shaped.
Moreover, no two paths in~$\Gamma$ share a common point besides $v$, and
there is a one-to-one correspondence between paths in $\Gamma$ and
points where $\Sigma$ intersects the boundary of $\Vor \, v$.
% Moreover, no other restricted Voronoi cell intersects the interior of~$C$.

Moreover, $\varphi|_C$ is a homeomorphism from $C$ to
its image $\varphi(C)$ on~$T_v\Sigma$.
\end{lemma}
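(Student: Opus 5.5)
The plan is to derive everything from Lemma~\ref{lem:pathincell}, applied ray by ray in $T_v\Sigma$.

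\emph{Covering and star-shapedness.} Every $x \in C \setminus \{v\}$ lies on the radial path supplied by Lemma~\ref{lem:pathincell}. The site $v$ itself lies on every radial path, and $\Gamma$ is nonempty unless $C = \{v\}$, which the paper later rules out and which is trivial in any case. Hence $\bigcup_{\gamma \in \Gamma} \gamma = C$. Each $\varphi(\gamma)$ is a segment with endpoint $v$, so $\varphi(C)$ is a union of segments terminating at $v$, that is, star-shaped.

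\emph{Disjointness.} Let $\gamma_1,\gamma_2 \in \Gamma$ share a point $x \neq v$. Let $r$ be the ray from $v$ through $\varphi(x)$; recall from the first paragraph of the proof of Lemma~\ref{lem:pathincell} that $\varphi(x) \neq v$. Because $\varphi|_{\gamma_i}$ is a homeomorphism onto a segment starting at $v$ and passing through $\varphi(x)$, we get $\varphi(\gamma_i) \subseteq r$. The uniqueness clause of Lemma~\ref{lem:pathincell} then forces $\gamma_1 = \gamma_2$.

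\emph{Correspondence with $\Sigma \cap \Bd \Vor\,v$.} Map each $\gamma$ to its endpoint $z$, which lies on $\Sigma \cap \Bd \Vor\,v$. The map is injective, because distinct paths share only $v$ and $z \neq v$. It is surjective: a point $z \in \Sigma \cap \Bd\Vor\,v$ lies in $C \setminus\{v\}$, so some radial path $\gamma$ contains it. That path has only one point on $\Bd\Vor\,v$, namely its far endpoint, so $z$ is that endpoint.

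\emph{Injectivity of $\varphi|_C$.} The previous steps also show that distinct paths project into distinct rays. Indeed, $\varphi(\gamma)$ determines its ray, and Lemma~\ref{lem:pathincell} allows only one path per ray. Suppose $\varphi(x_1) = \varphi(x_2)$. If this common value is $v$, then $x_1 = x_2 = v$. Otherwise $x_1$ and $x_2$ lie on radial paths projecting into the same ray, hence on the same path $\gamma$. Since $\varphi|_\gamma$ is injective, $x_1 = x_2$.

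\emph{Homeomorphism.} $C = \Sigma \cap \Vor\,v$ is compact, being closed and contained in the compact $\Sigma$. The projection $\varphi$ is continuous, and $T_v\Sigma$ is Hausdorff. A continuous bijection from a compact space onto a Hausdorff space is a homeomorphism, so $\varphi|_C$ is a homeomorphism onto $\varphi(C)$.

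The only point needing care is making sure each uniqueness step really follows from Lemma~\ref{lem:pathincell}. The key fact is that any radial path through $x$ projects into the ray through $\varphi(x)$, and this comes from the segment condition in the definition of a radial path. Everything else is bookkeeping plus the compactness argument.
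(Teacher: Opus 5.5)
Your proposal is correct and follows essentially the same route as the paper: covering, disjointness, and the ``one path per ray'' injectivity argument all come directly from Lemma~\ref{lem:pathincell}, and the homeomorphism follows from compactness of $C$ together with the continuity and injectivity of $\varphi|_C$. Your explicit surjectivity argument for the correspondence with $\Sigma \cap \Bd \Vor\,v$ spells out what the paper states in a single line, but the underlying idea is the same.
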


\begin{proof}
By Lemma~\ref{lem:pathincell}, for each point $x \in C \setminus \{ v \}$,
there is a unique radial path $\gamma \subset C$ such that $x \in \gamma$.
Every radial path contains $v$.
Hence $\bigcup_{\gamma \in \Gamma} \gamma = C$.
As each $x \in C \setminus \{ v \}$ lies on only one radial path,
no two paths in~$\Gamma$ share a common point besides~$v$.
By definition,
each radial path contains exactly one point on the boundary of $\Vor \, v$.
Hence there is a one-to-one correspondence between radial paths and
points where $\Sigma$ intersects the boundary of $\Vor \, v$.

Let us see that $\varphi|_C$ is an injection.
Consider two points $x, y \in C$ such that $\varphi(x) = \varphi(y)$;
we will see that $x = y$.
For every radial path $\gamma \in \Gamma$,
$\varphi|_\gamma$ is a homeomorphism by the definition of radial path, so
if $x$ and $y$ lie on the same radial path, then $x = y$.
%(This includes the case where $\varphi(x) = \varphi(y) = v$, thus $x = y = v$.)
If $x$~and $y$ lie on distinct radial paths, then $x = y = v$, because
by Lemma~\ref{lem:pathincell},
for any two distinct radial paths $\gamma_1, \gamma_2 \in \Gamma$,
$\varphi(\gamma_1)$ and $\varphi(\gamma_2)$ lie on
two distinct rays with origin $v$.

% The orthogonal projection $\varphi$ is continuous.
As $C$ is compact and $\varphi|_C$ is injective and continuous,
$\varphi|_C$ is a homeomorphism~\cite{sutherland09}.
\end{proof}

Lemma~\ref{lem:pathsincell} shows that
% $\varphi|_{\Vor|_\Sigma v}$ is a homeomorphism from $\Vor|_\Sigma v$ to
$\Vor|_\Sigma v$ is homeomorphic to
its image $I_v = \varphi(\Vor|_\Sigma v)$ on $T_v\Sigma$, but
what is the shape of $I_v$?
We obtain a homeomorphism from $I_v$ to a closed unit disk on~$T_v\Sigma$
by simply scaling each line segment $\varphi(\gamma)$ to have unit length.
Thus we arrive at this section's main theorem, Theorem~\ref{thm:homeocell},
which states that $\Vor|_\Sigma v$ is a topological closed disk.

Let $E = \{ \varphi(\gamma) : \gamma \in \Gamma \}$ be the set of
the orthogonal projections of the radial paths onto $T_v\Sigma$.
Then we can write $I_v = \bigcup_{e \in E} e$,
a decomposition of $I_v$ into line segments with endpoint~$v$,
no two leaving $v$ in the same direction % by Lemma~\ref{lem:pathincell}
(but every direction is represented).

% For each line segment $e \in E$, let $l(e)$ be the length of $e$.
% For every point $x \in I_v \setminus \{ v \}$,
% let $e_x$ denote the unique line segment in $E$ that contains $x$, and
% let $l(x) = l(e_x)$.
For every point $x \in I_v \setminus \{ v \}$,
let $l(x)$ be the length of the unique line segment in $E$ that contains $x$.
Thus $l$ is a function over the domain $I_v \setminus \{ v \}$,
but $l(v)$ is not defined.
The forthcoming Lemma~\ref{lem:slicecontinuous} shows that
that $l$ is continuous.
This is a consequence of two facts:  $I_v$~is a compact point set and
every point on a line segment $e \in E$ except one endpoint is
in the interior of $I_v$
(where the {\em interior} is defined with respect to $T_v\Sigma$).

Let $\chi: I_v \rightarrow T_v\Sigma$ map
each line segment in $E$ to a line segment with unit length
(while preserving its direction)---specifically,
$\chi(x) = v + \frac{1}{l(x)} (x - v)$ for $x \neq v$ and
$\chi(v) = v$.
Then $\chi$~is continuous as $l$ is continuous and positive.
% As $\chi$ is an injective, continuous map over a compact domain,
% $\chi$ is a homeomorphism from $I_v$ to
% a unit disk~\cite[Corollary~13.27]{sutherland09}. % centered at $v$.
As $\chi$ is a bijective, continuous map with a continuous inverse,
$\chi$ is a homeomorphism from $I_v$ to a closed unit disk.

The (tedious) proof of $l$'s continuity, Lemma~\ref{lem:slicecontinuous},
follows the theorem that uses it.

\begin{theorem}
\label{thm:homeocell}
Consider a site $v \in V$ and its restricted Voronoi cell $C = \Vor|_\Sigma v$.
Suppose that for every point $y \in \Vor|_\Sigma v$,
$|vy| < \xi \, \lfs(v)$, where
$\xi = \sqrt{(\sqrt{5} - 1) / 2} \doteq 0.786151$.

Then $\chi \circ \varphi|_C$~is a homeomorphism
from $\Vor|_\Sigma v$ to a closed unit disk on~$T_v\Sigma$.
\end{theorem}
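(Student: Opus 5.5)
The theorem should follow by composing two homeomorphisms. Lemma~\ref{lem:pathsincell} already gives that $\varphi|_C$ is a homeomorphism from $C$ onto $I_v = \varphi(C)$. So it suffices to show that $\chi$ is a homeomorphism from $I_v$ onto the closed unit disk $D$ on $T_v\Sigma$ centered at $v$. The hypothesis here is exactly the hypothesis of Lemmas~\ref{lem:pathincell} and~\ref{lem:pathsincell}, so everything about radial paths applies: $I_v = \bigcup_{e \in E} e$, each $e$ is a segment with endpoint $v$ of positive length, distinct segments leave $v$ in distinct directions, and every direction in $T_v\Sigma$ occurs. That every direction occurs should come from $C$ being a closed surface patch around $v$. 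Concretely, by Lemma~\ref{lem:neighborinject} a small neighborhood of $v$ in $\Sigma$ projects homeomorphically onto a neighborhood of $v$ in $T_v\Sigma$, and $v$ lies in the interior of $\Vor\,v$. Hence every ray $r$ from $v$ meets $\varphi(C)\setminus\{v\}$, and Lemma~\ref{lem:pathincell} then yields a radial path over $r$.

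Next I would check that $\chi$ is a bijection onto $D$. For each direction $\theta$ there is exactly one segment $e_\theta \in E$, and $\chi$ maps it linearly onto the unit segment in direction $\theta$. Together with $\chi(v) = v$, this makes $\chi$ bijective. The inverse is $y \mapsto v + \lambda(\theta(y))\,(y - v)$, where $\lambda(\theta)$ is the length of $e_\theta$. So continuity of $\chi$ and of $\chi^{-1}$ both reduce to continuity of the length, whether viewed as $l$ on $I_v \setminus \{v\}$ or as $\lambda$ on the circle of directions; these are equivalent because $l(x) = \lambda(\theta(x))$. I would also need $\lambda$ to be bounded above, which holds since $\lambda < \xi\,\lfs(v)$, and bounded away from zero, which follows from $\lambda$ being continuous and positive on a compact circle. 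With these bounds, $\chi$ and $\chi^{-1}$ are continuous even at $v$, since there $|\chi(x) - v| \le |x - v|/\min\lambda$. Alternatively, once $\chi$ is known to be continuous and bijective, compactness of $I_v$ gives continuity of the inverse automatically.

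The main obstacle is proving that $\lambda$ is continuous; this is Lemma~\ref{lem:slicecontinuous}, and I would argue upper and lower semicontinuity separately. Upper semicontinuity should follow from compactness of $I_v$. Suppose $\theta_i \to \theta$ and $\lambda(\theta_i) \to L$. The far endpoints $v + \lambda(\theta_i)\theta_i$ lie in the closed set $I_v$, so their limit $v + L\theta$ lies in $I_v$. That limit point lies on the ray in direction $\theta$, and by Lemma~\ref{lem:pathincell} all of $\varphi|_C^{-1}$ of that ray is the single path $e_\theta$, so $L \le \lambda(\theta)$. Lower semicontinuity should follow from the fact that every point of $e_\theta$ other than its far endpoint is interior to $I_v$ in $T_v\Sigma$. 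Such a point is $\varphi(y)$ with $y$ in the interior of $\Vor\,v$, and a small disk around $y$ in $\Sigma$ lies in $C$ and projects homeomorphically, by Lemma~\ref{lem:neighborinject}, onto an open set; the point $v$ is handled by the same argument. Then for any $s < \lambda(\theta)$, a compactness argument covers the segment from $v$ to $v + s\theta$ by a neighborhood inside $I_v$. Any nearby direction $\theta'$ therefore has its segment containing the point at distance $s$, which means $\lambda(\theta') \ge s$ because $\varphi(C) \cap r' = e_{\theta'}$.
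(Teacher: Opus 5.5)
Your proposal is correct and takes the paper's route: you write the map as $\chi \circ \varphi|_C$, take $\varphi|_C$ from Lemma~\ref{lem:pathsincell}, and reduce $\chi$ being a homeomorphism to continuity of the segment length (Lemma~\ref{lem:slicecontinuous}). You prove that continuity with the same two ingredients the paper uses, namely closedness of $I_v$ for the upper bound and the fact that non-terminal points of each segment are interior to $I_v$ for the lower bound, phrased as semicontinuity rather than through the arcs $a^\pm$. You are more explicit than the paper on two points: why every direction is represented, and why the length is bounded away from zero (which is what makes $\chi$ continuous at $v$).
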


\begin{proof}
%%%CUT when moving out of appendix.
Recall that $\chi(x) = v + \frac{1}{l(x)} (x - v)$ for $x \neq v$ and
$\chi(v) = v$.
By Lemma~\ref{lem:slicecontinuous}, $l$ is continuous; $l$ is also positive.
Hence $\chi$ too is continuous over $I_v \setminus \{ v \}$.
The map $\chi$ is continuous at $v$ as well, because
$\chi(x)$ converges to $v$ in the limit as $x$ approaches $v$.

It is easy to see that $\chi$ is injective.
Consider two points $x, y \in I_v$ such that $\chi(x) = \chi(y)$.
If $\chi(x) = \chi(y) = v$, then $x = y = v$.
Otherwise, $x$ and $y$ must lie on the same segment $e \in E$ because
$\vec{vx} = \vec{v\chi}(x) = \vec{v\chi}(y) = \vec{vy}$.
But $\chi|_e$ is clearly injective, so $x = y$.
Thus $\chi$ is injective.

As $I_v$ is compact and $\chi$ is injective and continuous,
$\chi$ is a homeomorphism from $I_v$ to
its image $\chi(I_v)$~\cite[Corollary 13.27]{sutherland09}.
As $\chi(I_v)$ is a union of closed, unit-length line segments emanating
from~$v$ in all directions on $T_v\Sigma$, $\chi(I_v)$ is a closed unit disk.

By Lemma~\ref{lem:pathsincell},
$\varphi|_C$ is a homeomorphism from $C$ to $I_v$, so
$\chi \circ \varphi|_C$ is a homeomorphism from $C$ to the closed unit disk.
\end{proof}

\begin{lemma}
\label{lem:slicecontinuous}
% Suppose the condition specified in Lemmas~\ref{lem:pathincell} holds.
Consider a site $v \in V$, its restricted Voronoi cell $C = \Vor|_\Sigma v$,
the cell's image $I_v = \varphi(C)$ on $T_v\Sigma$, and
the function $l : I_v \setminus \{ v \} \rightarrow \R$.
Suppose that for every point $y \in C$, % $y \in \Vor|_\Sigma v$,
$|vy| < \xi \, \lfs(v)$, where
$\xi = \sqrt{(\sqrt{5} - 1) / 2} \doteq 0.786151$.
Then $l$~is continuous. % over $I_v \setminus \{ v \}$.
\end{lemma}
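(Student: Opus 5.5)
The function $l$ is constant along each segment of $E$, so it is really a function of the polar angle $\theta$ of the ray from $v$. Write $L(\theta)$ for the length of the segment $e_\theta \in E$ in direction $\theta$; every direction is represented by Lemma~\ref{lem:pathincell}. It suffices to show that $L$ is continuous on the circle of directions. I would prove upper and lower semicontinuity separately, using the two facts highlighted before Theorem~\ref{thm:homeocell}: $I_v$ is compact, and every point of $e \in E$ other than its far endpoint lies in the interior of $I_v$ (interior relative to $T_v\Sigma$).

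Upper semicontinuity comes from compactness. Suppose $\theta_k \to \theta$ and $L(\theta_k) \to \lambda$ along a subsequence. The far endpoints $p_k = v + L(\theta_k) u_{\theta_k}$ lie in $I_v$, which is compact because $I_v = \varphi(C)$ with $C$ compact. So their limit $p = v + \lambda u_\theta$ lies in $I_v$. By Lemma~\ref{lem:pathincell}, $\varphi|_C^{-1}(r_\theta)$ is exactly the single radial path $\gamma_\theta$, so $I_v \cap r_\theta = e_\theta$. Hence $\lambda \le L(\theta)$, which gives $\limsup L(\theta_k) \le L(\theta)$.

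Lower semicontinuity comes from the interior fact, which I would establish first. Take a point $p \in e_\theta$ other than its far endpoint; if $p \ne v$, let $y = \varphi|_C^{-1}(p)$. Then $y \in \gamma_\theta \setminus \{z\}$, so $y$ lies in the interior of $\Vor\, v$ by the radial-path criterion. Let $N$ be the neighborhood of $y$ from Lemma~\ref{lem:neighborinject}, shrunk so that $N \subset \Int \Vor\, v$; then $N \subset C$. Because $\varphi|_N$ is a homeomorphism onto an open subset of the plane (invariance of domain, or simply because the tangent planes are transverse to the projection), $\varphi(N)$ is an open neighborhood of $p$ inside $I_v$. For $p = v$, the same argument applies, since $\varphi|_N$ is a local homeomorphism near $v$ and $T_v\Sigma$ is the projection plane itself. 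Now fix $\delta > 0$ and let $p = v + (L(\theta) - \delta) u_\theta$. There is a disk of radius $\rho$ about $p$ contained in $I_v$. For $\theta'$ near $\theta$, the ray $r_{\theta'}$ passes through this disk at distance about $L(\theta) - \delta$ from $v$. Since $I_v \cap r_{\theta'} = e_{\theta'}$ is a segment from $v$, this gives $L(\theta') \ge L(\theta) - \delta - o(1)$, hence $\liminf L(\theta') \ge L(\theta)$.

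Together these give continuity of $L$, and therefore of $l(x) = L(\arg(x - v))$ on $I_v \setminus \{v\}$, because $\arg$ is continuous away from $v$. The main obstacle is the interior claim at points of $e_\theta$: it needs care that the neighborhood $N$ really lies in $C$ and projects to an open set. This is where the third radial-path criterion, that $\gamma \setminus \{z\}$ lies in the interior of $\Vor\, v$ (proved in Lemma~\ref{lem:pathincell}), is essential. The identification $I_v \cap r_\theta = e_\theta$ from the last paragraph of that lemma is equally essential.
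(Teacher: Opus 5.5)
Your proposal is correct and follows essentially the same route as the paper. The paper gets the upper bound from a neighborhood $N^+$ of $z^+$ disjoint from the compact set $I_v$, and the lower bound from an open neighborhood $N^-\subset I_v$ of $z^-$ supplied by the third radial-path criterion; its arcs $a^\pm$ centered at $v$ play the role of your two semicontinuity arguments. Your justification that $\varphi(N)$ is open in $T_v\Sigma$, via Lemma~\ref{lem:neighborinject} and invariance of domain, is actually a bit more careful than the paper's appeal to the homeomorphism $\varphi|_C$ of Lemma~\ref{lem:pathsincell}, which by itself only gives openness relative to $I_v$.
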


\begin{proof}
We will see that
for every $x \in I_v \setminus \{ v \}$ and every $\delta > 0$,
there is a neighborhood $N \subset I_v \setminus \{ v \}$ of $x$ such that
for every point $y \in N$, $l(y) \in (l(x) - \delta, l(x) + \delta)$.
Continuity follows.

Let $e \in E$ be the line segment that contains $x$.
One endpoint of $e$ is $v$; let $z$ be the other endpoint.
Observe that $l(x) = |vz|$, the length of $e$.

\begin{figure}
\centerline{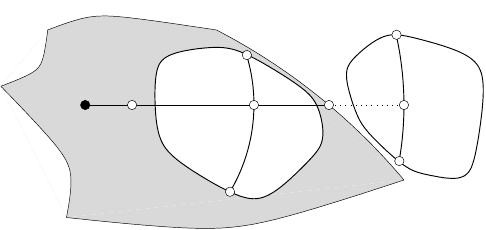}

\caption{\label{lcontinuous}
As the arc $a^+$ does not intersect $I_v$ and
the arc $a^-$ is in the interior of $I_v$,
every line segment $e' \in E$ such that $\angle (e, e') <
\min \{ \angle xvp_1^+, \angle xvp_2^+, \angle xvp_1^-, \angle xvp_2^- \}$
has length between $|vz^-|$ and~$|vz^+|$.
}
\end{figure}

Let $z^-$ be the point on $e$ at a distance of $\delta$ from $z$,
as illustrated in Figure~\ref{lcontinuous}.
We assume $\delta < l(x)$; otherwise,
place $z^-$ anywhere on $e \setminus \{ v, z \}$.
Let $z^+ \in T_v\Sigma$ be the point at a distance of $\delta$ from $z$
in the other direction from $z$, so that $z$ is between $z^-$ and $z^+$,
as illustrated.
As $C$~is compact, $I_v = \varphi(C)$ is compact.
% (with respect to the space $T_v\Sigma$).
As $z^+ \not\in I_v$, there is an open neighborhood
$N^+ \subset T_v\Sigma$ of~$z^+$ that is disjoint from $I_v$.

There is also an open neighborhood $N^- \subset T_v\Sigma$ of $z^-$ such that
$N^- \subset I_v$ but
$N^-$ is disjoint from the outer endpoints of the line segments in $E$;
the reasons are a bit more complicated.
Let $\gamma \in \Gamma$ be the radial path such that
$e = \varphi(\gamma)$.
As $\varphi|_{\gamma}$ is a homeomorphism from $\gamma$ to $e$ and
$z^-$ lies on $e$ but is not an endpoint of $e$,
the point $z^C = \varphi|_{\gamma}^{-1}(z^-)$ lies on~$\gamma$ but
is not an endpoint of~$\gamma$.
Hence $z^C$~lies in the interior of $\Vor \, v$
(by the definition of radial path).
Let $N^C \subset \Sigma$ be an open neighborhood of $z^C$ that lies
entirely in the interior of $\Vor \, v$; hence $N^C \subset C$.
Let $N^- = \varphi(N^C)$, thus $N^- \subset I_v$.
By Lemma~\ref{lem:pathsincell}, $\varphi|_C$ is a homeomorphism, so
$N^- \subset T_v\Sigma$ is an open neighborhood of $z^-$.
By the definition of radial path,
$\varphi|_C$ maps every point in the interior of $\Vor \, v \setminus \{ v \}$
to the relative interior of a line segment in $E$, so
$N^-$ does not contain the outer endpoint of any line segment in $E$.

Let $a^+ \subset N^+$ be an arc with center $v$ and
with $z^+$ in its relative interior, as illustrated.
Likewise, let $a^- \subset N^-$ be an arc with center $v$ and
with $z^-$ in its relative interior.
Let $p_1^+$ and $p_2^+$ be the endpoints of $a^+$ and
let $p_1^-$ and $p_2^-$ be the endpoints of~$a^-$.
Let \mbox{$\theta = \min \{ \angle xvp_1^+, \angle xvp_2^+,
\angle xvp_1^-, \angle xvp_2^- \}$}, which is positive.

As $a^+$ is disjoint from $I_v$, every line segment $e' \in E$
such that $\angle (e, e') < \theta$ has length less than $l(x) + \delta$.
As $a^- \subset I_v$ and $a^-$ does not intersect
the outer endpoint of any line segment in~$E$,
every line segment $e' \in E$ such that $\angle (e, e') < \theta$ has
length greater than $l(x) - \delta$.
Therefore, for every point $y \in I_v \setminus \{ v \}$ such that
$\angle xvy < \theta$, $l(y) \in ( l(x) - \delta, l(x) + \delta)$.

Let $N = \{ y \in I_v \setminus \{ v \} : \angle xvy < \theta \}$
and observe that $x$ is in the interior of $N$
(where {\em interior} is defined relative to
the topological space $I_v \setminus \{ v \}$;
note that $x$ might be on the boundary of~$I_v$
relative to the plane $T_v\Sigma$, but
the continuity of $l$ over $I_v \setminus \{ v \}$ depends only on
open sets in the space $I_v \setminus \{ v \}$).
For every point $y \in N$, $l(y) \in (l(x) - \delta, l(x) + \delta)$.
As we can identify such a neighborhood for
every $x \in I_v \setminus \{ v \}$ and every $\delta > 0$, $l$ is continuous.
\end{proof}

% It is notable that the condition $|vy| < \xi \, \lfs(w)$ of
% Lemma~\ref{lem:pathincell} and Theorem~\ref{thm:homeocell} implies,
% by the Normal Variation Lemma (Lemma~\ref{lem:nvl}), that
% $\angle (n_v, n_y) < \eta(\xi) = 60^\circ$.
% The sixty-degree bound is exact.
% (We think it is just a coincidence that that number comes out so cleanly.
% The bound is probably not tight.)

% See Appendix~\ref{2ballproof}.
If we impose the condition of Theorem~\ref{thm:homeocell} on
{\em all} the restricted Voronoi cells,
% (or the $\epsilon$-sample condition of
% Corollary~\ref{cor:homeocelleps}, below),
every connected component of $\Sigma$ has at least six sites on it.
% (Lemma~\ref{lem:sixsites} follows easily from
% Lemmas~\ref{lem:pathincell} and~\ref{lem:nvl}, but
% there isn't quite enough space here for the proof;
% see the full-length paper~\cite{shewchuk26b}.)

\begin{lemma}
\label{lem:sixsites}
Let $V$ be an $\epsilon$-Voronoi sample of $\Sigma$ for some
$\epsilon < \xi = \sqrt{(\sqrt{5} - 1) / 2} \doteq 0.786151$.
%Let $V \subset \Sigma$ be a nonempty, finite set of points (sites) on $\Sigma$.
%Suppose that for every site $v \in V$ and every point $x \in \Vor|_\Sigma v$,
%$|vx| < \xi \, \lfs(v)$,
%where $\xi = \sqrt{(\sqrt{5} - 1) / 2} \doteq 0.786151$.
% (Alternatively, suppose that $V$ is a $\epsilon$-sample of
% $\Sigma$ for $\epsilon < \frac{\xi}{\xi + 1} \doteq 0.440137$.)
% Suppose, moreover, that for every three distinct sites $p, p', p'' \in V$,
% the intersection of their restricted Voronoi cells contains either
% one point (i.e., a restricted Voronoi vertex) or zero.
%
%%%UNCUT
% Then ... at least
Every connected component of $\Sigma$ has
at least six sites and six restricted Voronoi cells on it.
% Then every restricted Voronoi edge has two distinct vertices,
% every restricted Voronoi cell has at least two restricted Voronoi edges and
% at least two restricted Voronoi vertices on its boundary, and
% every connected component of $\Sigma$ has at least
% six restricted Voronoi cells, six restricted Voronoi edges, and
% two restricted Voronoi vertices on it.
\end{lemma}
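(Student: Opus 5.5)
The plan is a Gauss-map covering argument. Let $S$ be a connected component of $\Sigma$ and orient it by its outward unit normal $n_y$ at each point $y \in S$. The proof has four steps.

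\emph{Step 1: every cell meeting $S$ lies in $S$.} By Theorem~\ref{thm:homeocell} (its hypothesis holds because $\epsilon < \xi$), each restricted Voronoi cell $\Vor|_\Sigma v$ is a closed topological disk, so it is connected. It contains $v$, so it lies in the component of $\Sigma$ containing $v$. Since every point of $\Sigma$ belongs to some restricted Voronoi cell, $S$ is the union of the cells of the sites lying on $S$. Distinct sites have distinct cells, because each cell $\Vor|_\Sigma v$ contains its own site $v$ and no other site in its interior. Hence it suffices to show that at least six cells are needed to cover $S$.

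\emph{Step 2: every direction is a normal of $S$.} For each unit vector $u$, consider a point $y \in S$ maximizing $\langle u, y \rangle$. At such a point $n_y = u$: the inner side of $S$ near $y$ faces away from $u$ because $S$ bounds a bounded region. So the Gauss map $y \mapsto n_y$ sends $S$ onto the whole unit sphere $\mathbb{S}^2$.

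\emph{Step 3: each cell's normals lie in a cap of radius less than $60^\circ$.} For every $y \in \Vor|_\Sigma v$ we have $|vy| \le \epsilon\,\lfs(v) < \xi\,\lfs(v)$. The Normal Variation Lemma (Lemma~\ref{lem:nvl}, with $\eta(\xi) = 60^\circ$) then gives that the angle between the normal lines at $v$ and $y$ is less than $60^\circ$. This is the step I expect to need the most care, for two reasons.
\begin{itemize}
\item The lemma controls the angle between unoriented lines, but I need the angle between the oriented normals $n_v$ and $n_y$. Connect $y$ to $v$ by its radial path $\gamma$ (Lemma~\ref{lem:pathincell}). Along $\gamma$, Lemma~\ref{lem:abovebelow} says $T_{y'}\Sigma$ is never perpendicular to $T_v\Sigma$, so $\langle n_v, n_{y'}\rangle$ never vanishes. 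By continuity it stays positive, so $\angle(n_v, n_y) < 60^\circ$ as oriented vectors.
\item I need the bound to be strict and uniform. Since the cell is compact, the supremum of $\angle(n_v,n_y)$ over the cell is attained, so it is some $\theta_v < 60^\circ$.
\end{itemize}
Consequently, the image of $\Vor|_\Sigma v$ under the Gauss map lies in the closed spherical cap of angular radius $\theta_v < 60^\circ$ centered at $n_v$.

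\emph{Step 4: count the caps.} Combining the previous steps, if $S$ contains $k$ sites, then $k$ spherical caps of angular radius less than $60^\circ$ cover $\mathbb{S}^2$.
\begin{itemize}
\item An area bound alone gives only part of the count. Each cap has area $2\pi(1-\cos\theta) < \pi$, while the sphere has area $4\pi$, so $k \ge 5$.
\item To exclude $k = 5$, I would cite the classical covering result of Schütte and Fejes T\'oth. The minimal angular radius for covering $\mathbb{S}^2$ by five equal caps is $\arctan 2 \approx 63.43^\circ$, the same as for six caps in the octahedral arrangement only at $54.74^\circ$; the relevant fact is that five caps require radius at least $63.43^\circ > 60^\circ$.
\end{itemize}
Hence $k \ge 6$, and by Step 1 there are at least six distinct restricted Voronoi cells on $S$ as well.

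If citing the five-cap covering bound is undesirable, I would try a direct argument instead. Take five centers on the sphere. By pigeonhole on hemispheres, one of two antipodal points must be at angular distance at least $60^\circ$ from every center; finding and justifying that pair of points is the fallback obstacle.
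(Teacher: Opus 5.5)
Your proposal is correct and follows essentially the same route as the paper. Both use connectedness of each cell (the paper gets it from Lemma~\ref{lem:pathincell}, you from Theorem~\ref{thm:homeocell}), the Normal Variation Lemma bound $\eta(\xi)=60^\circ$, surjectivity of the outward Gauss map on each component, and the fact that five caps of radius less than $60^\circ$ cannot cover the sphere. Your orientation detour via radial paths is harmless but unnecessary, because Lemma~\ref{lem:nvl} is already stated for outward normals. Likewise, $\eta(\epsilon)<\eta(\xi)=60^\circ$ gives the uniform strict bound directly, so the compactness argument is not needed.
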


\begin{proof}
% By Theorem~\ref{thm:homeocell} (or Corollary~\ref{cor:homeocelleps}),
Lemma~\ref{lem:pathincell} implies that
every restricted Voronoi cell in $\Vor|_\Sigma V$ is connected.
% homeomorphic to a disk.
Therefore, each cell lies on just one connected component of $\Sigma$.
% Every point $x$ in $v$'s restricted Voronoi cell $\Vor|_\Sigma v$
% lies on the same connected component of $\Sigma$ as $v$ does, because
% if they lay on different connected components, then
% the line segment $vx$ would intersect the medial axis, contradicting
% the fact that $|vx| < \lfs(v)$.
By the Normal Variation Lemma (Lemma~\ref{lem:nvl}),
for every site $v \in V$ and every point $x \in \Vor|_\Sigma v$,
$\angle (n_v, n_x) < \eta(\xi) = 60^\circ$.

Let $\mathring{\Sigma}$ be a connected component of $\Sigma$.
% For any unit vector $u$ on the unit sphere, let $y$ be
% a point on $\mathring{\Sigma}$ that is most extreme in the direction $u$.
% As $\mathring{\Sigma}$ is a smooth surface without boundary in $\R^3$,
% $u$ is normal to $\mathring{\Sigma}$ at $y$ and
% oriented to the outside of $\mathring{\Sigma}$; that is,
% the unit vector $n_y = u$ is an outside-facing normal vector at $y$.
% It follows that the outside-facing unit normal vectors on $\mathring{\Sigma}$
% constitute the entire sphere of directions.
It is well known that the outside-facing unit normal vectors on
a closed, smooth surface $\mathring{\Sigma}$ constitute the entire unit sphere.
A restricted Voronoi cell $\Vor|_\Sigma v$ can contain
only points on $\Sigma$
whose outside-facing normals are less than $60^\circ$ from~$n_v$.
At least six sites are required on a sphere so that
every point on the sphere is less than~$60^\circ$ from one of the sites;
five do not suffice~\cite{cohn26}.
% (See {\em Coverings by points on a sphere},
% R.~H.\ Hardin, N.~J.~A.\ Sloane, and W.~D.\ Smith, February 1994.
(The six points where the coordinate axes intersect the unit sphere suffice.)
Hence there are at least six sites and six restricted Voronoi cells
on~$\mathring{\Sigma}$.
%
% The same reasoning applies to every connected component of~$\Sigma$.
The same applies to every connected component of~$\Sigma$.
\end{proof}

To apply Theorem~\ref{thm:homeocell} to $\epsilon$-samples,
observe that $0.44$-samples are $0.786$-Voronoi samples.
% the sampling condition of Theorem~\ref{thm:homeocell}.

% The following simple lemma is implicit in Amenta and Bern~\cite{amenta99b} and
% explicit in Amenta, Choi, Dey, and Leekha~\cite{amenta02}.

\begin{lemma}[Feature Translation Lemma~\cite{amenta02,dey07}]
\label{lem:ftl}
Let $\Sigma \subset \R^3$ be a smooth surface and
let $p, q \in \Sigma$ be points such that
$|pq| \leq \epsilon \, \lfs(p)$ for some $\epsilon < 1$.
Then
\[
\lfs(p) \leq \frac{1}{1 - \epsilon} \lfs(q)
\hspace{.2in}  \mbox{and}  \hspace{.2in}
|pq| \leq \frac{\epsilon}{1 - \epsilon} \lfs(q).
\]
\end{lemma}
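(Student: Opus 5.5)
The plan is to use the $1$-Lipschitz property of $\lfs$ together with the triangle inequality. The first step is to show that $\lfs$ is $1$-Lipschitz, that is, $\lfs(p) \leq \lfs(q) + |pq|$ for all $p, q \in \Sigma$. Let $m \in M$ be a point of the medial axis nearest to $q$. Such a point exists because $M$ is closed and, for a smooth closed surface, nonempty, and it gives $\lfs(q) = |qm|$. Since $\lfs(p) = d(p, M)$, we get
\[
\lfs(p) \leq |pm| \leq |pq| + |qm| = |pq| + \lfs(q).
\]
If the paper allows the degenerate case where $M$ is empty, so that $\lfs$ is infinite, the inequalities hold trivially, and I will remark on this only briefly.

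Next comes the first inequality. Substituting the hypothesis $|pq| \leq \epsilon \, \lfs(p)$ into the Lipschitz bound gives $\lfs(p) \leq \epsilon \, \lfs(p) + \lfs(q)$, which rearranges to $(1-\epsilon)\,\lfs(p) \leq \lfs(q)$. Because $\epsilon < 1$, the factor $1 - \epsilon$ is positive, and dividing by it yields $\lfs(p) \leq \lfs(q)/(1-\epsilon)$.

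The second inequality follows by chaining the hypothesis with the first inequality:
\[
|pq| \leq \epsilon \, \lfs(p) \leq \frac{\epsilon}{1-\epsilon}\,\lfs(q).
\]

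There is no real obstacle in this argument. The only points that need care are that the minimum in the definition of $\lfs$ is attained, which holds because $M$ is closed, and that $1-\epsilon > 0$, which is what justifies dividing by it.
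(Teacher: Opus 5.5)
Your proposal is correct and follows the paper's own proof: take a medial axis point $m$ with $|qm| = \lfs(q)$, apply the triangle inequality to get $\lfs(p) \leq |pq| + \lfs(q) \leq \epsilon\,\lfs(p) + \lfs(q)$, rearrange using $1 - \epsilon > 0$, and chain with the hypothesis for the second bound. Packaging the middle step as the $1$-Lipschitz property of $\lfs$ is only a change of wording.
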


\begin{proof}
By the definition of the local feature size,
there is a medial axis point $m$ such that $|qm| = \lfs(q)$.
By the Triangle Inequality,
\[
\lfs(p) \leq |pm| \leq |pq| + |qm| \leq \epsilon \, \lfs(p) + \lfs(q).
\]
Rearranging terms gives $\lfs(p) \leq \lfs(q) / (1 - \epsilon)$.
The second claim follows immediately.
\end{proof}

\begin{corollary}
\label{cor:homeocelleps}
Let $V$ be an $\epsilon$-sample of $\Sigma$ for
$\epsilon < \frac{\xi}{\xi + 1} \doteq 0.440137$.

Then every restricted Voronoi cell in $\Vor|_\Sigma V$ is
homeomorphic to a closed disk (closed disk property~A),
$\Sigma$~does not intersect any $2$-face of $\Vor \, V$ tangentially
(generic intersection property~E),
every connected component of $\Sigma$ has
at least six sites and six restricted Voronoi cells on it, and
no restricted Voronoi cell intersects
%%%UNCUT
the interior of another. % restricted Voronoi cell.
\end{corollary}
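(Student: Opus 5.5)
The plan is to reduce the corollary to Theorem~\ref{thm:homeocell}, Lemma~\ref{lem:raybisector}, and Lemma~\ref{lem:sixsites} by showing that an $\epsilon$-sample with $\epsilon < \xi/(\xi+1)$ is an $\epsilon'$-Voronoi sample for some $\epsilon' < \xi$.

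\textbf{Step 1: the sample is a Voronoi sample.} Take a site $v$ and a point $x \in \Vor|_\Sigma v$. By the $\epsilon$-sample condition some site $w$ satisfies $|xw| \leq \epsilon\,\lfs(x)$. Since $v$ is a nearest site to $x$, we get $|xv| \leq |xw| \leq \epsilon\,\lfs(x)$. Apply the Feature Translation Lemma (Lemma~\ref{lem:ftl}) with $p = x$ and $q = v$, which is valid since $\epsilon < 1$. This gives
\[
|xv| \leq \frac{\epsilon}{1-\epsilon}\,\lfs(v).
\]
The map $\epsilon \mapsto \epsilon/(1-\epsilon)$ is increasing on $[0,1)$ and equals $\xi$ exactly at $\epsilon = \xi/(1+\xi)$. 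Hence $\epsilon' = \epsilon/(1-\epsilon) < \xi$, and $V$ is an $\epsilon'$-Voronoi sample. (The set $V$ is nonempty because $\Sigma$ is nonempty.) In particular, every $y \in \Vor|_\Sigma v$ satisfies $|vy| < \xi\,\lfs(v)$.

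\textbf{Step 2: properties A and E.} Theorem~\ref{thm:homeocell} now applies to every site, so each restricted Voronoi cell is homeomorphic to a closed disk; this is property~A. For property~E, suppose $\Sigma$ met a Voronoi polygon $F$ tangentially at a point $x$. Then $x$ lies in $\Vor|_\Sigma v \cap \Vor|_\Sigma w$ for the two sites $v, w$ whose bisector $\Lambda$ contains $F$. By Step~1, $|vx| < \xi\,\lfs(v)$, so Lemma~\ref{lem:raybisector} applies. It produces an open ray $a \subset T_x\Sigma$, with origin $x \in \Lambda$, lying strictly on one side of $\Lambda$. Therefore $T_x\Sigma \neq \Lambda$. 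Tangency would require $F \subset T_x\Sigma$, hence $\Lambda = \aff F = T_x\Sigma$, which is a contradiction.

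\textbf{Step 3: six sites and disjoint interiors.} The six-sites claim follows directly from Lemma~\ref{lem:sixsites}, since $\epsilon' < \xi$.

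For the interiors, let $x$ be an interior point of $C = \Vor|_\Sigma v$ with respect to $\Sigma$, and suppose $x \in \Vor|_\Sigma w$ for some $w \neq v$. Then $x$ lies on the boundary of $\Vor\,v$, so by Lemma~\ref{lem:pathsincell} it is the outer endpoint $z$ of its radial path $\gamma$. Let $r$ be the ray from $v$ through $\varphi(x)$, and take a small neighborhood $N$ of $x$ on which $\varphi$ is a homeomorphism (Lemma~\ref{lem:neighborinject}). Inside $N$, the set $\varphi^{-1}(r)$ is a path through $x$ that extends beyond $z$. By the last part of Lemma~\ref{lem:pathincell}, $\varphi|_C^{-1}(r) = \gamma$, so points of this path just beyond $z$ are not in $C$. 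This contradicts $x$ being interior to $C$. Hence interior points of $C$ lie in no other cell, and no restricted Voronoi cell meets the interior of another.

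I expect this last step to be the only delicate one, because it needs the uniqueness part of Lemma~\ref{lem:pathincell} together with a local-homeomorphism argument.
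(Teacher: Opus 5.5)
Your proposal is correct and follows the paper's proof. The Feature Translation Lemma gives $|vx| \leq \frac{\epsilon}{1-\epsilon}\,\lfs(v) < \xi\,\lfs(v)$ for every $x \in \Vor|_\Sigma v$, and the four claims then come from Theorem~\ref{thm:homeocell}, Lemma~\ref{lem:raybisector}, Lemma~\ref{lem:sixsites}, and Lemma~\ref{lem:pathsincell}, respectively. Your argument for the last claim, via the local homeomorphism and $\varphi|_C^{-1}(r) = \gamma$, just spells out why an outer endpoint $z$ cannot be an interior point of $C$; the paper compresses that step into ``by the definition of radial path.''
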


\begin{proof}
Consider any site $v \in V$ and
any point $x \in \Vor|_{\Sigma} v$.
As $V$ is an $\epsilon$-sample, $|vx| \leq \epsilon \, \lfs(x)$.
By the Feature Translation Lemma (Lemma~\ref{lem:ftl}),
$|vx| \leq \frac{\epsilon}{1 - \epsilon} \, \lfs(v) < \xi \, \lfs(v)$.

The first three claims follow from Theorem~\ref{thm:homeocell} and
Lemmas~\ref{lem:raybisector} and~\ref{lem:sixsites}, respectively.
The final claim follows from Lemma~\ref{lem:pathsincell} because
every point in the interior of $\Vor|_{\Sigma} v$
is in the interior of $\Vor \, v$ (by the definition of radial path), and
cannot be shared with another cell.
\end{proof}

\section{Restricted Voronoi vertices are lone points}
\label{0ball}

% Assuming $\Sigma$ is a $2$-manifold,
A restricted Voronoi vertex is
a nonempty intersection of $\Sigma$ with an edge in $\Vor \, V$.
However, without suitable sampling conditions,
such an intersection might contain many points, even infinitely many.
% (For instance, there can be infinitely many when
% the intersection of $\Sigma$ with
% an edge of the full-dimensional Voronoi diagram is a line segment.)
This section describes sampling conditions that guarantee
closed ball property~C:
an intersection of three distinct restricted Voronoi cells contains
at most one point, thereby justifying the name ``vertex.''
Generic intersection property F comes as a~byproduct.
Lemma~\ref{lem:onlyonevertex} shows that a $0.49$-Voronoi sample suffices,
and Corollary~\ref{cor:onlyonevertexeps} shows that a $0.33$-sample suffices
(which follows from Lemmas~\ref{lem:onlyonevertex} and~\ref{lem:ftl}).
% the Feature Translation Lemma, Lemma~\ref{lem:ftl}).
% We postpone both proofs to the full-length paper~\cite{shewchuk26b}, as
% restricted Voronoi vertices are not the bottleneck limiting our
% homeomorphism theorems' sample bounds.
% (Restricted Voronoi edges are the bottleneck; see Section~\ref{1ball}.)
%%%CUT
Note that Cheng et al.~\cite{cheng12} prove Lemma~\ref{lem:onlyonevertex} for
a $0.15$-Voronoi sample.

We need four preliminary lemmas to prove Lemma~\ref{lem:onlyonevertex}.
The first three are simple statements in surface sampling theory,
proved in other sources.

\begin{lemma}[Feature Ball Lemma~\cite{dey07,cheng12}]
\label{lem:fbl}
If a geometric closed $d$-ball $B$ intersects
a \mbox{$k$-manifold} $\Sigma \subset \R^d$ without boundary
at more than one point and either
(i)~$\Sigma \cap B$ is not a topological $k$-ball or
(ii)~$\Sigma \cap \Bd B$ is not a topological $(k - 1)$-sphere, then
$B$ contains a medial axis point.
\end{lemma}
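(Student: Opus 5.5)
We prove the contrapositive: assume the closed ball $B$, with center $c$ and radius $\rho$, contains no medial axis point, and that $\Sigma \cap B$ has more than one point. We show that $\Sigma \cap B$ is a topological $k$-ball and $\Sigma \cap \Bd B$ is a topological $(k-1)$-sphere. The tool is the squared-distance function $f(x) = |xc|^2$ restricted to $\Sigma$, treated by Morse-theoretic flow arguments. Its critical points are the points $x \in \Sigma$ at which $c - x$ is normal to $\Sigma$ (that is, orthogonal to $T_x\Sigma$), together with $x = c$ if $c \in \Sigma$.

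The first and crucial step is to show that $f|_\Sigma$ has exactly one critical point in $B$, namely the point $x_0 \in \Sigma$ nearest to $c$, and that it is a strict local minimum. Let $x \in \Sigma \cap B$ be critical with $x \neq c$, and consider the points $y(t) = x + t(c - x)/|c - x|$ on the normal ray from $x$ through $c$. For small $t > 0$, $x$ is the unique nearest point of $y(t)$, by smoothness (positive reach, $\inf \lfs > 0$). Let $t^*$ be the supremum of the $t$ for which $x$ remains a nearest point of $y(t)$. At $t^*$, either the distance ball centered at $y(t^*)$ touches $\Sigma$ at a second point, or it is a limit of such configurations (its curvature matches $\Sigma$ at $x$). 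In both cases $y(t^*)$ lies in the closure of the set of points with non-unique nearest point, i.e.\ $y(t^*) \in M$. Since $B$ is convex and contains $x$ and $c$, the segment $xc$ lies in $B$, so by hypothesis $t^* > |xc|$. Hence $x$ is the unique nearest point of $c$ on $\Sigma$, that is, $x = x_0$. The same argument shows $x_0$ is a strict local minimum: points of $\Sigma$ near $x_0$ lie outside the open ball around $c$ of radius $|cx_0|$. Moreover, the tangent sphere at $x_0$ has radius below the reach, so a local graph argument gives a neighborhood of $x_0$ in which the sublevel sets of $f|_\Sigma$ are $k$-balls. The case $c \in \Sigma$ is handled identically, with $x_0 = c$.

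Second, if $x_0 \in \Bd B$, then $\Sigma \cap B = \{x_0\}$, which is excluded. So $|cx_0| < \rho$, and $\Sigma$ meets the sphere $\Bd B$ transversally, because $\Bd B$ contains no critical point of $f|_\Sigma$. We then flow along the normalized gradient of $f|_\Sigma$ on $\Sigma \cap B$ minus a small sublevel neighborhood of $x_0$. Since there are no critical points there, this flow gives a homeomorphism from that region to $(\Sigma \cap \Bd B) \times [0,1]$. Gluing in the small ball around $x_0$, we conclude that $\Sigma \cap B$ is a $k$-ball whose boundary, $\Sigma \cap \Bd B$, is a $(k-1)$-sphere, which is the contrapositive of the lemma.

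I expect the main obstacle to be the first step: justifying that the first failure point $y(t^*)$ on the normal ray really lies in $M$, including the degenerate case where no second touching point exists. Here we use that $M$ is defined as a closure, together with the $C^{1,1}$ smoothness of $\Sigma$. The second obstacle is handling the possibly degenerate minimum at $x_0$ without assuming that $f$ is Morse. For that I would rely on the local graph description of $\Sigma$ near $x_0$ rather than on nondegeneracy.
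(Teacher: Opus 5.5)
The paper does not prove the Feature Ball Lemma itself; it quotes it from Dey and from Cheng et al., so there is no in-paper argument to compare against. Your argument is the standard critical-point route, and it is correct. You show that the nearest point $x_0$ of $\Sigma$ to the center $c$ is the only critical point of $f(x) = |xc|^2$ on $\Sigma \cap B$. You then collar $\Sigma \cap B$ down to a small sublevel ball by the gradient flow. Your first step (a critical point $x \neq x_0$ would be the unique nearest point of $c$, because the normal segment $xc \subset B$ misses $M$) is exactly right.

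Both obstacles you flag can be closed with standard tools.

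\emph{First obstacle.} Suppose $x$ is the unique nearest point of $y^* = y(t^*)$ but $y^* \notin M$. On a neighborhood $U$ of $y^*$ disjoint from $M \cup \Sigma$, the nearest-point map $\pi$ is continuous. So $y \mapsto (\pi(y), y - \pi(y))$ is a continuous injection of $U$ into the normal bundle of $\Sigma$, which is a topological $d$-manifold; its left inverse is $(p, w) \mapsto p + w$. By invariance of domain, its image is an open neighborhood of $(x, y^* - x)$. This forces $\pi(y(t)) = x$ for $t$ slightly larger than $t^*$, contradicting the definition of $t^*$.

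\emph{Second obstacle.} When $\Sigma$ is $C^2$, the minimum is never degenerate. Let $\rho_0 = |cx_0|$, $n = (c - x_0)/\rho_0$, and take any $t \in (\rho_0, t^*)$. The empty ball of radius $t$ tangent to $\Sigma$ at $x_0$ on $c$'s side bounds the second fundamental form in direction $n$ above by $1/t$. Hence the Hessian of $f|_\Sigma$ at $x_0$ is at least $2(1 - \rho_0/t)$ times the identity, and small sublevel sets are star-shaped in a graph chart.

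For merely $C^{1,1}$ surfaces, the information at $x_0$ alone does not control the curvature at nearby points, but the hypothesis $M \cap B = \emptyset$ does. For $x$ near $x_0$, the normal segment from $x$ toward $c$'s side, of length nearly $\rho_0 + \rho$, lies in $B$. Therefore the maximal empty tangent balls there have radii bounded below by a constant exceeding $\rho_0$; this uses the first fact, that their centers lie in $M$. The resulting one-sided (semiconcavity) bound on the graph makes $f$ strictly increasing along rays of the chart, so small sublevel sets are again $k$-balls.
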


As $\Sigma$ is a closed surface, it cuts space into two pieces,
a bounded {\em inside} piece and an unbounded {\em outside} piece.
For a point $x \in \Sigma$, let $n_x$ denote the outside-facing vector
normal to $\Sigma$ at $x$.
Let $\angle (n_x, n_y) \in [0^\circ, 180^\circ]$ denote
the angle separating two vectors.

\begin{lemma}[Normal Variation Lemma~\cite{khoury19}]
\label{lem:nvl}
%%% UNCUT
% Let $\Sigma \subset \R^3$ be a smooth, closed surface.
Consider two points $p, q \in \Sigma$ and let $\delta = |pq| / \lfs(p)$.
% Let $n_p$ and $n_q$ be
% outside-facing vectors normal to $\Sigma$ at $p$ and $q$, respectively.
If $\delta < \sqrt{4 \sqrt{5} - 8} \doteq 0.971736$, then
$\angle (n_p, n_q) \leq \eta(\delta)$ where
\[
\eta(\delta) = \arccos \left( 1 - \frac{\delta^2}{2 \sqrt{1 - \delta^2}}
                              \right)
\approx
\delta + \frac{7}{24} \delta^3 + % \frac{123}{640} \delta^5 +
% \frac{1\mbox{,}083}{7\mbox{,}168} \delta^7 + O(\delta^9).
O(\delta^5).
% \label{codim1bound}
\]
\end{lemma}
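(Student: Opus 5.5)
We reduce the statement to a two-dimensional extremal problem by placing the configuration in a normal cross-section through $p$ and $q$. Set $\lfs(p) = 1$, so $|pq| = \delta$. Let $B$ and $B'$ be the two open balls of radius $1$ tangent to $\Sigma$ at $p$, with centers $o = p + n_p$ and $o' = p - n_p$. Neither ball meets $\Sigma$, so $|qo| \geq 1$ and $|qo'| \geq 1$. Hence $q$ lies in the region between the two balls, and its height above $T_p\Sigma$ satisfies $|h| \leq 1 - \sqrt{1 - s^2}$, where $s$ is the horizontal distance. This alone does not bound the normal at $q$. The essential input is the tangent ball at $q$, as in the proof of Lemma~\ref{lem:abovebelow}. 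Let $B_q$ be the medial ball at $q$ lying on the same side as $B'$. If $\theta = \angle(n_p, n_q)$ is large, $B_q$ tilts toward $B$. We aim to show that $B_q$, or a ball of radius $\rho$ inscribed in it and tangent at $q$, must then intersect $B$, which is a contradiction since the two balls are separated by $\Sigma$.

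The delicate point is that the radius of $B_q$ is $\lfs$-controlled at $q$, not at $p$. We therefore imitate the proof of Lemma~\ref{lem:abovebelow}. The center $m$ of $B_q$ lies on the medial axis, so $|pm| \geq 1$. For a shrunken ball of radius $\rho$ centered at $m_\rho = q - \rho n_q$, disjointness from $B$ gives $|o m_\rho| \geq 1 + \rho$, and this holds for all $\rho \leq |qm|$. We take $\rho$ as large as allowed. The constraint $|pm| \geq 1$ together with $|pq| = \delta$ gives $|qm| \geq$ something only when the geometry forces it, so the cleaner route is the following. Choose $\rho$ so that the ball of radius $\rho$ at $q$ is the largest one guaranteed empty, namely $\rho = \lfs(q) \geq 1 - \delta$ by Lemma~\ref{lem:ftl}. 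Better, use both balls at $q$ (inside and outside), each of radius at least $\lfs(q)$, and require each to avoid the opposite ball at $p$. Expanding $|o - (q - \rho n_q)|^2 \geq (1+\rho)^2$ and the symmetric condition for $o'$ and $q + \rho n_q$, then adding or subtracting them, eliminates the position of $q$ up to $|q-p|^2 = \delta^2$. This yields an inequality of the form $\cos\theta \geq 1 - \delta^2/(2\rho')$ for an effective radius $\rho'$.

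To reach the stated form $1 - \delta^2/(2\sqrt{1-\delta^2})$, we need the effective radius to be $\sqrt{1-\delta^2}$. This is exactly the quantity $|mx| \geq \sqrt{\lfs(v)^2 - |vx|^2}$ that appears in the proof of Lemma~\ref{lem:abovebelow}: the medial center $m$ lies outside the open unit ball about $p$, and when $\angle pqm \leq 90^\circ$ Pythagoras gives $|qm| \geq \sqrt{1-\delta^2}$. So the plan is to split into two cases on the angle at $q$ between $qp$ and $qm$. When the angle is acute, we use $\rho = \sqrt{1-\delta^2}$ and the separation inequality $|om| \geq 1 + |qm|$, whose expansion in coordinates gives the cosine bound. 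When the angle is obtuse, $q$'s ball lies on the far side, and the normal tilt is bounded directly, and more strongly, because $p$ is then on the near side of $T_q\Sigma$ (the configuration of Lemma~\ref{lem:abovebelow}). One of the two medial balls at $q$ always falls into the favorable case, and the two balls share the normal line, so bounding either suffices.

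We expect the main obstacle to be the algebra in the acute case. We need to show that the worst case occurs with $|qm|$ exactly $\sqrt{1-\delta^2}$ and $|oq| = 1$ (that is, $q$ on $\partial B$), and that the resulting optimization over the position of $q$ is monotone, so that plugging in these extremes yields exactly $\arccos(1 - \delta^2/(2\sqrt{1-\delta^2}))$. The threshold $\delta < \sqrt{4\sqrt5 - 8}$ should appear as the condition that the argument of $\arccos$ exceeds $-1$, or that the extremal configuration remains valid. We would check this by solving $1 - \delta^2/(2\sqrt{1-\delta^2}) = $ the critical value, and we would otherwise cite \cite{khoury19} for the final computation.
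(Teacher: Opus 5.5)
The paper gives no proof of this lemma; it is imported from Khoury and Shewchuk~\cite{khoury19}. Your proposal therefore has to stand on its own, and the step you end up relying on fails.

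Normalize $\lfs(p)=1$ and write $a=(q-p)\cdot n_q$, $b=(q-p)\cdot n_p$, $c=\cos\theta$. For a ball of radius $\rho$ tangent at $q$ on $B'$'s side, with center $m=q-\rho n_q$, the separation $|om|\geq 1+\rho$ expands to $2\rho(1-c)\leq\delta^2-2b-2\rho a$.

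In your acute case $a\geq 0$ helps, but $-2b$ can be as large as $\delta^2$. That happens when $|qo'|=1$; your guessed extremum $|qo|=1$ actually forces $\theta=0$. So with $\rho\geq\sqrt{1-\delta^2}$ you get only $1-c\leq\delta^2/\sqrt{1-\delta^2}$, which is twice the target.

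Sharper algebra cannot rescue this. Take $\delta=1/2$, $p=0$, $n_p=e_3$, $q=(\sqrt{15}/8,0,-1/8)$, $n_q=(\sqrt{15}/20,\sqrt{10}/5,3/4)$ and $\rho=\sqrt3/2$. Then:
$|pq|=1/2$;
$|qo'|=1<|qo|$;
$a=0$, a right angle, which your split counts as acute;
$|pm|=1$;
$|om|>1+\rho$.
Every constraint your one-ball argument uses holds, yet $\theta=\arccos(3/4)\approx 41.4^\circ>\eta(1/2)\approx 31.2^\circ$.

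What excludes this configuration is the \emph{other} medial ball at $q$: its separation from $B'$ reads $2\rho_1(1-c)\leq\delta^2+2\rho_1a+2b=0$. So ``bounding either ball suffices'' is false. The example is also genuinely three-dimensional, since $n_q$ does not lie in the plane of $n_p$ and $q-p$. Your opening reduction to a planar cross-section is therefore unjustified, and it is exactly what hides the difficulty.

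Your second paragraph had the right mechanism; the missing idea is how to choose the two radii. Adding the two separation inequalities cancels $b$, but the $a$-terms cancel only for equal radii. Equal radii $\lfs(q)\geq 1-\delta$ yield only $1-c\leq\delta^2/(2(1-\delta))$.

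Instead set $s=\sqrt{a^2+1-\delta^2}>|a|$. Because $|pq|<1$, the normal line $q+tn_q$ lies in the open unit ball about $p$ exactly for $t\in(-(s+a),\,s-a)$. The medial centers at $q$ are medial-axis points, so they are not in that ball. Hence the outer and inner medial balls at $q$ have radii at least $\rho_1=s-a$ and $\rho_2=s+a$ (trivially so if one of them is a halfspace).

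Consider the balls tangent at $q$ with centers $m_1=q+\rho_1n_q$ and $m_2=q-\rho_2n_q$ and these radii. They are empty and lie in the outer and inner components of $\R^3\setminus\Sigma$ respectively. So they satisfy $|m_1o'|\geq 1+\rho_1$ and $|m_2o|\geq 1+\rho_2$.

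Adding the two expansions gives $2(\rho_1+\rho_2)(1-c)\leq 2\delta^2+2(\rho_1-\rho_2)a$, that is, $4s(1-c)\leq 2\delta^2-4a^2$. Hence $1-c\leq(\delta^2-2a^2)/(2s)\leq\delta^2/(2\sqrt{1-\delta^2})$. That is the lemma, with no case split and nothing deferred to the citation. The threshold $\sqrt{4\sqrt5-8}$ is simply where this bound on $1-c$ reaches $2$.
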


\begin{lemma}[Triangle Normal Lemma~\cite{khoury19}]
\label{lem:tnl}
%%%UNCUT
% Let $\Sigma \subset \R^3$ be a smooth, closed $2$-manifold.
Let $\tau$ be a triangle whose vertices lie on $\Sigma$.
% Let $r$ be $\tau$'s circumradius.
Let $r$ be the radius of $\tau$'s circumscribing circle.
Let $v$ be a vertex of $\tau$ and let $\phi$ be $\tau$'s plane angle at $v$.
Let $n_\tau$ be a vector normal to $\tau$.
Let $\aff \tau$ denote the affine hull of $\tau$.
% directed so that $\angle(n_\tau, n_v) \leq 90^\circ$.
Then
\[
\sin \angle(n_\tau, n_v) = \sin \angle(\aff \tau, T_v\Sigma) \leq
\frac{r}{\lfs(v)} \max \left\{ \cot \frac{\phi}{2}, 1 \right\}.
% \label{tnlbound1}
\]
% (Note that the argument $\cot \frac{\phi}{2}$ dominates if $\phi$ is acute and
% the argument $1$ dominates if $\phi$ is obtuse.)
% In particular, if $v$ is the vertex at $\tau$'s largest plane angle (so
% $\phi \geq 60^\circ$) and $r < \lfs(v) / \sqrt{3} \doteq 0.577 \, \lfs(v)$,
% then
% \[
% \angle(N_\tau, N_v\Sigma) = \angle(\aff \tau, T_v\Sigma) \leq
% \arcsin \frac{\sqrt{3}R}{\lfs(v)}.
% % \label{tnlbound2}
% \]
\end{lemma}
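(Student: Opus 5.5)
The plan is to reduce the lemma to a planar problem in $\aff \tau$ using the two empty balls tangent at $v$.
As in Section~\ref{tangentsites}, let $B$ and $B'$ be the open balls of radius $\lfs(v)$ tangent to $\Sigma$ at $v$, with centers $o = v + \lfs(v)\, n_v$ and $o' = v - \lfs(v)\, n_v$.
They are subsets of medial balls, so they contain no point of $\Sigma$, and in particular neither of the other two vertices $p, q$ of $\tau$.
Let $\theta = \angle(\aff \tau, T_v\Sigma)$, so that $\sin\angle(n_\tau, n_v) = \sin\theta$; the first equality in the statement needs no further argument.
If $\theta = 0$ there is nothing to prove, so assume $\theta > 0$.

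\textbf{Step 1: cross-sections.}
The plane $\aff \tau$ passes through $v$, and the distance from $o$ (and from $o'$) to that plane is $\lfs(v)\cos\theta$.
Hence $D = B \cap \aff \tau$ and $D' = B' \cap \aff \tau$ are open disks of the same radius $\rho = \lfs(v)\sin\theta$.
Both boundary circles pass through $v$, where they are tangent to the line $\ell = T_v\Sigma \cap \aff\tau$, and the disks lie on opposite sides of $\ell$.
The vertices $p$ and $q$ lie outside $D \cup D'$, and the target inequality becomes the purely planar bound
\[
\rho \le r \max\left\{\cot\frac{\phi}{2}, 1\right\}.
\]

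\textbf{Step 2: polar description.}
Measure directions at $v$ by the angle $\beta$ from $\ell$.
A ray from $v$ at angle $\beta$ leaves $D$ or $D'$ after a chord of length $2\rho\,|\sin\beta|$.
Let $\alpha$ be the angle between $\ell$ and the tangent to $\tau$'s circumcircle at $v$.
The ray at angle $\beta$ meets the circumcircle again after a chord of length $2r\,|\sin(\beta-\alpha)|$.
So a point of the circumcircle in direction $\beta$ lies outside $D \cup D'$ exactly when
\[
r\,|\sin(\beta-\alpha)| \ge \rho\,|\sin\beta|.
\]
The directions $\beta_p$ and $\beta_q$ of $p$ and $q$ both lie in the open half-turn of directions into the circumcircle from $v$, and they differ by exactly $\phi$.

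\textbf{Step 3: the optimization.}
I must show that if both $\beta_p$ and $\beta_q$ satisfy the inequality of Step~2, then $\rho/r \le \max\{\cot(\phi/2), 1\}$.
Equivalently, for every $\alpha$ and every admissible pair with $\beta_q - \beta_p = \phi$,
\[
\min_{\beta \in \{\beta_p, \beta_q\}} \frac{|\sin(\beta-\alpha)|}{|\sin\beta|} \le \max\left\{\cot\frac{\phi}{2}, 1\right\}.
\]

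\begin{itemize}
\item \emph{Case $\phi \ge 90^\circ$.} Here $\cot(\phi/2) \le 1$, and I expect the bound $\rho \le r$ to follow directly: a circle of radius $r < \rho$ through $v$ can reach outside $D \cup D'$ only within a narrow wedge of directions around $\ell$.
\item \emph{Case $\phi < 90^\circ$.} I expect the extremal configuration to be symmetric about the line through $v$ and the circumcenter, with $p$ and $q$ on the boundaries of $D$ and $D'$ respectively. In that configuration the ratio equals $\cot(\phi/2)$.
\end{itemize}

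This trigonometric extremal argument is the main obstacle.
The first thing I will try is to fix $\alpha$ and note that $|\sin(\beta-\alpha)|/|\sin\beta|$ is monotone in $\beta$ on each side of $\ell$, so the minimum over the pair is attained at an endpoint of the allowed wedge.
Then I will balance the two endpoints against the constraint $\beta_q - \beta_p = \phi$ to extract $\cot(\phi/2)$.
Substituting $\rho = \lfs(v)\sin\theta$ into the resulting planar bound gives the lemma.
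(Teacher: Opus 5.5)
Steps~1 and~2 are a correct reduction. The lfs-balls at $v$ cut $\aff \tau$ in open disks of radius $\rho = \lfs(v)\sin\theta$, tangent to $\ell$ at $v$ on opposite sides, and the admissibility condition $r|\sin(\beta-\alpha)| \ge \rho|\sin\beta|$ is exact. Your predicted extremes are also right. (The paper gives no proof of this lemma; it imports it from Khoury and Shewchuk.)

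The gap is that Step~3, which is the entire content of the lemma once the reduction is made, is only conjectured. ``Narrow wedge'' in the obtuse case is never quantified. In the acute case, your monotonicity remark only handles a fixed $\alpha$. It shows that the admissible directions form one closed wedge containing the direction of $\ell$, so $p$ and $q$ can both be admissible only if that wedge is at least $\phi$ wide. You still must bound the wedge's width uniformly over $\alpha$, and that is where the constant comes from.

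This closes in a few lines. Put $\gamma = \beta - \alpha \in (0,\pi)$ and $k = r/\rho$. Admissibility becomes $|h(\gamma)| \le k$, where $h(\gamma) = \sin(\gamma+\alpha)/\sin\gamma = \cos\alpha + \sin\alpha\cot\gamma$.
If $\sin\alpha = 0$, then $|h| \equiv 1$ and admissibility forces $k \ge 1$.
Otherwise take $\sin\alpha > 0$ (the other sign is a reflection). Then $h$ decreases strictly from $+\infty$ to $-\infty$, so the admissible set is $[\gamma_1,\gamma_2]$ with $\cot\gamma_1 = (k-\cos\alpha)/\sin\alpha$ and $\cot\gamma_2 = -(k+\cos\alpha)/\sin\alpha$. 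Its width $W = \gamma_2 - \gamma_1$ satisfies
\[
\cot W = \frac{1+\cot\gamma_1\cot\gamma_2}{\cot\gamma_1-\cot\gamma_2} = \frac{1-k^2}{2k\sin\alpha}.
\]
If $k < 1$, then $\cot W \ge (1-k^2)/(2k) = \cot(2\arctan k)$, so $\phi \le W \le 2\arctan k < 90^\circ$. This gives $\rho/r = 1/k \le \cot(\phi/2)$, and shows the case cannot occur when $\phi \ge 90^\circ$. If $k \ge 1$, then $\rho \le r$.
Either way $\rho \le r\max\{\cot(\phi/2),1\}$. The supremum is approached at $\sin\alpha = 1$ (circumcenter on $\ell$), which is exactly your predicted extremal configuration.

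An equivalent picture uses inversion about $v$ with unit radius. It turns $\partial D$ and $\partial D'$ into the boundary lines of a strip of half-width $1/(2\rho)$ centered on $\ell$, and the circumcircle into a line at distance $1/(2r)$ from $v$. The question becomes the largest angle at $v$ subtended by that line's intersection with the strip.
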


\begin{lemma}
\label{lem:projinject}
Let $\Sigma$ be a smooth, closed surface in $\mathbb{R}^3$.
Let $Q \subset \mathbb{R}^3$ be a closed, convex point set.
Let $\widetilde{\Sigma} = Q \cap \Sigma$ and
suppose that $\widetilde{\Sigma}$ is connected.
Let $A \subset \mathbb{R}^3$ be a plane, and let $\psi$ be
the map that orthogonally projects $\R^3$ onto $A$.
% (The direction of that projection is parallel to $n_A$).
Suppose that for every point $p \in \widetilde{\Sigma}$,
the vector $n_p$ normal to $\Sigma$ at $p$ is not parallel to $A$.
Then the restricted projection $\psi|_{\widetilde{\Sigma}}$ is
% restriction of $\psi$ to $\widetilde{\Sigma}$ is
a homeomorphism from $\widetilde{\Sigma}$ to
its image $\psi(\widetilde{\Sigma})$ on $A$.
\end{lemma}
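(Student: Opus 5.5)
The plan is to mirror the proof of Lemma~\ref{lem:pathsincell}. First show that $\psi|_{\widetilde{\Sigma}}$ is a local homeomorphism. Then upgrade it to global injectivity using the convexity of $Q$ together with the connectedness of $\widetilde{\Sigma}$. Finally, invoke compactness: $\widetilde{\Sigma}$ is closed and bounded because $Q$ is closed and $\Sigma$ is compact, so a continuous injection from a compact space is a homeomorphism onto its image~\cite{sutherland09}.

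For the local step, fix $p \in \widetilde{\Sigma}$. Since $n_p$ is not parallel to $A$, the tangent plane $T_p\Sigma$ is not perpendicular to $A$. As in Lemma~\ref{lem:neighborinject}, the smoothness of $\Sigma$ and the inverse function theorem give an open neighborhood $N_p \subset \Sigma$ of $p$ on which $\psi$ is injective. Equivalently, $N_p$ is the graph of a function over an open subset of $A$.

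For global injectivity, suppose for contradiction that $p \neq q$ in $\widetilde{\Sigma}$ with $\psi(p) = \psi(q)$. Then the segment $pq$ is parallel to the projection direction $n_A$, and $pq \subset Q$ by convexity. Consider the line $\ell \supset pq$ and the function $h$ on $\widetilde{\Sigma}$ that measures the signed height in direction $n_A$. Let $D = \psi(\widetilde{\Sigma})$.
\begin{itemize}
\item Because each point of $\widetilde{\Sigma}$ has a graph neighborhood, $\psi|_{\widetilde{\Sigma}}$ is a local homeomorphism onto its image, at least at points where $N_p \cap Q$ projects to a neighborhood in $D$.
\item The fiber $\psi|_{\widetilde{\Sigma}}^{-1}(y)$ over each $y$ is finite, and its cardinality should be locally constant as long as sheets do not end.
\end{itemize}
The plan is to show that the number of sheets over each point of $D$ is constant, and then to show that it equals one somewhere.

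The key observation for the second part is this. Along the segment $pq \subset Q$, the surface $\Sigma$ crosses $\ell$ at $p$ and again at $q$, each time transversally, since $n_p$ and $n_q$ are not parallel to $A$, hence not perpendicular to $\ell$. Take $p$ and $q$ to be consecutive intersections of $\Sigma$ with $\ell$ along $pq$. The inside/outside structure of $\Sigma$ then forces $n_p \cdot n_A$ and $n_q \cdot n_A$ to have opposite signs. On the other hand, the sign of $n_x \cdot n_A$ never vanishes on $\widetilde{\Sigma}$, so by connectedness it is constant on $\widetilde{\Sigma}$. These two facts contradict each other. I expect this sign argument to be the cleanest route. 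It uses only the connectedness of $\widetilde{\Sigma}$, the convexity of $Q$ (which guarantees the intermediate intersections of $\ell$ with $\Sigma$ lying on $pq$ are in $\widetilde{\Sigma}$), and the hypothesis on normals. This avoids the sheet-counting argument entirely.

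The main obstacle is making the orientation step rigorous: consecutive transversal crossings of a line with a closed surface have opposite normal orientations relative to the line direction. I would prove it by noting that the open subsegment between consecutive crossings lies entirely inside or entirely outside $\Sigma$. A transversal crossing in direction $n_A$ goes from inside to outside exactly when $n_x \cdot n_A > 0$. So entering the subsegment at $p$ and leaving it at $q$ correspond to opposite signs. Finitely many crossings on $pq$ follow from transversality and compactness. Injectivity plus compactness then finish the proof.
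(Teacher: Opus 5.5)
Your sign argument is correct and is essentially the paper's proof. In both, you take the crossing of $\Sigma$ with the segment nearest the first point, which lies in $\widetilde{\Sigma}$ by convexity. Consecutive transversal crossings of the line give opposite signs of $n_x \cdot n_A$, which contradicts the constancy of that sign on the connected set $\widetilde{\Sigma}$, and compactness finishes the proof. The local-homeomorphism and sheet-counting preamble is unnecessary. Your direct use of connectedness (a nonvanishing continuous function has constant sign) is slightly cleaner than the paper's path in $\widetilde{\Sigma}$, which tacitly assumes path-connectedness.
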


\begin{proof}
Most of the effort is to show that $\psi|_{\widetilde{\Sigma}}$ is injective.
Suppose for the sake of contradiction that
two distinct points $x, z \in \widetilde{\Sigma}$ have $\psi(x) = \psi(z)$.
All the points in $\widetilde{\Sigma}$ that map to $\psi(x)$,
including $x$ and $z$, lie on
a common line $\ell$, perpendicular to $A$ and passing through $x$.
% By the Feature Ball Lemma (Lemma~\ref{lem:fbl}), $\widetilde{\Sigma}$ is
% a topological disk, so every point in $\widetilde{\Sigma}$ inherits
% a normal vector from $\Sigma$ and $\widetilde{\Sigma}$ is smooth.
As $Q$ is convex, $xz \subseteq Q$, so
every point in $\Sigma \cap xz$ is in $\widetilde{\Sigma}$.
By assumption, $n_p$~is not parallel to $A$
for any $p \in \widetilde{\Sigma}$, so
$\ell$ does not intersect $\Sigma$ tangentially at
any point in~$\widetilde{\Sigma}$, and
$xz$ does not intersect $\Sigma$ tangentially at any point.
Let $y$ be the point in $\Sigma \cap xz$ that is nearest $x$ but is not~$x$.
(The point $y$ might be $z$, or there might be a point on $xz$ closer to $x$.)
No point in $\Sigma$ is between $x$ and~$y$; that is,
among the points in $\Sigma \cap \ell$,
$x$ and $y$ are successive along $\ell$.
Observe that $y \in \widetilde{\Sigma}$.

Let $n_A$ be a vector normal to $A$ (and parallel to~$\ell$),
directed so that $\angle (n_A, n_x) < 90^\circ$.
As we walk along $\ell$, if $x$ represents a transition from inside to outside,
then $y$ represents a transition from outside to inside (and vice versa), so
$\angle (n_A, n_y) > 90^\circ$.

% By the Feature Ball Lemma (Lemma~\ref{lem:fbl}),
As $\widetilde{\Sigma}$ is connected,
there is a path $\gamma \subset \widetilde{\Sigma}$ connecting $x$ to $y$.
As $\Sigma$ is smooth, the outside-facing normal vector $n_p$
varies continuously for $p \in \gamma$.
Therefore, there is a point $q \in \gamma$ for which
$\angle (n_A, n_q) = 90^\circ$, which contradicts the assumption that
for every $p \in \widetilde{\Sigma}$, $n_p$~is not parallel to $A$.
Hence $\psi|_{\widetilde{\Sigma}}$ is injective.

As $Q$ is closed and $\Sigma$ is compact,
$\widetilde{\Sigma} = Q \cap \Sigma$ is compact.
As $\widetilde{\Sigma}$ is compact and $\psi|_{\widetilde{\Sigma}}$ is
injective and continuous, $\psi|_{\widetilde{\Sigma}}$ is
a homeomorphism~\cite[Corollary~13.27]{sutherland09}.
\end{proof}

\begin{lemma}
\label{lem:onlyonevertex}
% Let $\Sigma \subset \R^3$ be a smooth, closed surface.
% Let $V \subset \Sigma$ be a finite set of points (sites) on $\Sigma$.
Consider three distinct sites $v, v', v'' \in V$ and
the triangle $\tau = \triangle vv'v''$, where
$v$~is the vertex at $\tau$'s largest plane angle.
Let $f = \Vor|_\Sigma v \cap \Vor|_\Sigma v' \cap \Vor|_\Sigma v''$,
the restricted Voronoi face dual to $\tau$.
% generated by the sites $v$, $v'$, and $v''$.
Let $\ell_\tau \subset \R^3$ be the line containing
all points equidistant to the sites $v$, $v'$, and~$v''$
(thus $f \subset \ell_\tau$).
Suppose that for every point $y \in f$, $|vy| < \kappa \, \lfs(v)$, where
$\kappa$~is the positive real root of
$\kappa^4 = 4 (1 - \kappa^2) (1 - \sqrt{3} \kappa)^2$,
with approximate value $\kappa \doteq 0.495683$.

Then $f$ contains at most one point (i.e., a restricted Voronoi vertex).
Moreover, if $f$~contains a point,
$\Sigma$ is not tangent to $\ell_\tau$ at that point.
\end{lemma}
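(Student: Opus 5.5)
The plan is to apply Lemma~\ref{lem:projinject} with $Q = \ell_\tau$, a closed convex set, and $A$ a plane perpendicular to $\ell_\tau$, so that $\psi$ collapses all of $\ell_\tau$ to one point. If $f = \Sigma \cap \ell_\tau$ is connected and no normal $n_p$ at $p \in f$ is parallel to $A$ (equivalently, $\ell_\tau$ is not tangent to $\Sigma$ at $p$), then $\psi|_f$ is injective. Since $\psi(f)$ is a single point, $f$ has at most one point. Connectedness cannot be assumed, so I will instead inline the injectivity argument of Lemma~\ref{lem:projinject}. Suppose $f$ has two points. Take successive points $x,y$ of $\Sigma \cap \ell_\tau$ inside $f$; the segment $xy$ lies in $\ell_\tau$, and every point of it is at distance less than $\kappa\,\lfs(v)$ from $v$ by convexity of the ball. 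At $x$ and $y$ the line $\ell_\tau$ crosses $\Sigma$ in opposite directions, so $n_x$ and $n_y$ make angles on opposite sides of $90^\circ$ with the direction $n_\tau$ of $\ell_\tau$. (Tangential contact is handled by the second claim, proved first.)

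\emph{Step 1 (transversality, the second claim).} Let $p \in f$. Then $|vp| < \kappa\,\lfs(v)$, and $p$ is equidistant from $v,v',v''$. Hence the circumradius satisfies $r \le |vp| < \kappa\,\lfs(v)$, because the circumcenter is the point of $\ell_\tau$ closest to the sites. Since $\phi \ge 60^\circ$, the Triangle Normal Lemma gives $\sin\angle(n_\tau,n_v) \le \sqrt{3}\,\kappa$, using $\max\{\cot(\phi/2),1\} \le \sqrt{3}$. The Normal Variation Lemma gives $\angle(n_v,n_p) \le \eta(\kappa)$. Therefore $\angle(n_\tau,n_p) \le \arcsin(\sqrt3\kappa) + \eta(\kappa)$. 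Showing this is strictly less than $90^\circ$ is a cosine inequality:
\[
\cos\eta(\kappa) > \sin\arcsin(\sqrt3\kappa) = \sqrt3\kappa, \qquad \cos\eta(\kappa) = 1 - \frac{\kappa^2}{2\sqrt{1-\kappa^2}} .
\]
At equality this becomes $\kappa^2 = 2\sqrt{1-\kappa^2}\,(1-\sqrt3\kappa)$, i.e.\ $\kappa^4 = 4(1-\kappa^2)(1-\sqrt3\kappa)^2$. This is exactly the defining equation of $\kappa$, so the strict hypothesis $|vp| < \kappa\,\lfs(v)$ yields strict inequality. I will check monotonicity in $\delta$ so that the strict inequality holds for every $\delta < \kappa$. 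Consequently $n_p$ is never perpendicular to $\ell_\tau$, and $\ell_\tau$ is not tangent to $\Sigma$ at $p$.

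\emph{Step 2 (at most one point).} By Step 1, every $p \in \Sigma$ with $|vp| < \kappa\,\lfs(v)$ has $\angle(n_\tau,n_p) < 90^\circ$ once $n_\tau$ is oriented toward $n_v$. This covers every point at that distance, not only points of $f$, since Step 1 used only $|vp|$ and $r$, and $r$ is fixed by $f \ne \emptyset$. So all normals in the open ball $D$ of radius $\kappa\,\lfs(v)$ about $v$ point into the same half-space relative to $n_\tau$. The points $x,y$ from the setup both lie in $D$, yet $n_x$ and $n_y$ fall on opposite sides of $90^\circ$ with $n_\tau$. This is a contradiction, and it needs no path in $\Sigma$. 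Hence $f$ has at most one point.

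The main obstacle is Step 1's trigonometry: confirming that the Triangle Normal bound at the largest angle, combined with $r \le |vp|$, gives exactly $\sqrt3\kappa$. I also need to check that $\kappa < 1/\sqrt3$ and $\kappa < 0.9717$, so that both the arcsine and $\eta$ are defined.
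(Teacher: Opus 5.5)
Your proof is correct and follows the paper's route: the Triangle Normal Lemma at the largest angle gives $\angle(n_v,n_\tau) < \arcsin(\sqrt{3}\kappa)$. The Normal Variation Lemma gives $\angle(n_v,n_p) < \eta(\kappa) = \arccos(\sqrt{3}\kappa)$ for every $p$ in the open ball of radius $\kappa\,\lfs(v)$ about $v$, and two points of $f$ on $\ell_\tau$ then contradict injectivity of the projection along $\ell_\tau$. The one difference is that the paper gets connectedness of the ball's intersection with $\Sigma$ from the Feature Ball Lemma and invokes Lemma~\ref{lem:projinject}, which uses a path and the intermediate value theorem; you instead note that the uniform sign of $n_p \cdot n_\tau$ on the whole ball already contradicts the alternating crossing directions at successive intersection points, so connectedness is not needed.
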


\begin{proof}
% Define $r$, $c$, $n_\tau$, $A$, and $\psi$ as in
% the proof of Lemma~\ref{lem:onlyonevertexeps}.
% (As in that proof, $v$, $v'$, and $v''$ cannot be collinear when
% $f \neq \emptyset$.)
If $f = \emptyset$ then the result follows.
Every point in $f$ is equidistant to the three sites $v$, $v'$, and $v''$, so
$f \subset \ell_\tau$.
If the three sites are collinear, then $f = \ell_\tau = \emptyset$ and
the result follows.

Henceforth, assume that $f \neq \emptyset$ and
the three sites are not collinear.
Then the line $\ell_\tau$ is perpendicular to $\tau$ and
passes through $\tau$'s circumcenter.
Let $r$ and $c$ be the circumradius and circumcenter of $\tau$, respectively.
Let $y$ be any point in $f$.
Observe that $c$ is the point on $\ell_\tau$ closest to the vertices of $\tau$,
so $r = |vc| \leq |vy| < \kappa \, \lfs(v)$.
Let $n_\tau$ be a vector normal to $\tau$ (and parallel to $\ell_\tau$)
directed so that $\angle (n_v, n_\tau) \leq 90^\circ$.
% As $r < \kappa \, \lfs(v)$,
By the Triangle Normal Lemma (Lemma~\ref{lem:tnl}),
$\angle (n_v, n_\tau) < \arcsin(\sqrt{3} \kappa)$.

% The present proof is similar to the proof of
% Lemma~\ref{lem:onlyonevertexeps},
% but we have a weaker bound on $|vp|$ for points $p \in \Vor|_\Sigma v$ and
% a stronger bound on $r / \lfs(v)$.
Let $Q$ be the open ball of radius $\kappa \, \lfs(v)$ centered at $v$
and let $\widetilde{\Sigma} = Q \cap \Sigma$.
As $Q$ does not intersect $\Sigma$'s medial axis,
$\widetilde{\Sigma}$ is a topological open disk by
the Feature Ball Lemma (Lemma~\ref{lem:fbl}).
By assumption, $f \subseteq Q$, so $f \subseteq \widetilde{\Sigma}$.

We show that for every point $p \in \widetilde{\Sigma}$,
$\angle (n_\tau, n_p) < 90^\circ$.
Let $p$ be a point in $\widetilde{\Sigma}$.
As $|vp| < \kappa \, \lfs(v)$,
by the Normal Variation Lemma (Lemma~\ref{lem:nvl}),
% $\angle (n_v, n_p) < \eta(\kappa)$ where $\eta(\delta) =
% \arccos \left( 1 - \frac{\delta^2}{2 \sqrt{1 - \delta^2}} \right)$.
% Observe that $\eta(\kappa) =
$\angle (n_v, n_p) < \eta(\kappa) =
\arccos \left( 1 - \frac{\kappa^2}{2 \sqrt{1 - \kappa^2}} \right) =
\arccos \left( 1 - (1 - \sqrt{3} \kappa) \right) = \arccos (\sqrt{3} \kappa)$.
Hence $\angle (n_\tau, n_p) \leq \angle (n_v, n_\tau) + \angle (n_v, n_p) <
\arcsin(\sqrt{3} \kappa) + \arccos (\sqrt{3} \kappa) = 90^\circ$.
(It is interesting to compare the magnitudes of the parts:
$\angle (n_v, n_\tau) < 59.16^\circ$ and $\angle (n_v, n_p) < 30.85^\circ$.)

Consider the first claim:  that $f$ contains at most one point.
Let $A$ be the affine hull of~$\tau$ (a plane).
Let $\psi$ be the continuous map that orthogonally projects
$\mathbb{R}^3$ onto~$A$.
The direction of that projection is parallel to $\ell_\tau$.
As $f \subset \ell_\tau$, for every pair of points $y, z \in f$,
$\psi(y) = \psi(z)$.
As $\angle (n_\tau, n_p) < 90^\circ$ for all $p \in \widetilde{\Sigma}$,
by Lemma~\ref{lem:projinject},
the orthogonal projection $\psi|_{\widetilde{\Sigma}}$ is a homeomorphism.
As $f \subseteq \widetilde{\Sigma}$, there cannot be
two distinct points $y, z \in f$ such that $\psi(y) = \psi(z)$.
Therefore, $f$ contains at most one point.

If $f$ contains a point $p$, then
the fact that $\angle (n_\tau, n_p) \neq 90^\circ$ implies the second claim:
$\Sigma$ is not tangent to $\ell_\tau$ at $p$.
\end{proof}

\begin{corollary}
\label{cor:onlyonevertexeps}
% Let $\Sigma \subset \R^3$ be a smooth, closed surface.
Let $V$ be an $\epsilon$-sample of $\Sigma$ for
$\epsilon < \frac{\kappa}{\kappa + 1} \doteq 0.331409$.
Consider three distinct sites $v, v', v'' \in V$ and
let $f = \Vor|_\Sigma v \cap \Vor|_\Sigma v' \cap \Vor|_\Sigma v''$.
Let $\ell_\tau \subset \R^3$ be the line containing
all points equidistant to the sites $v$, $v'$, and $v''$.

%%%UNCUT
% Then $f$ contains at most one point (i.e., a restricted Voronoi vertex).
% Moreover, if $f$~contains a point,
% $\Sigma$ is not tangent to $\ell_\tau$ at that point.
Then $f$ contains at most one point, and
$\Sigma$ is not tangent to $\ell_\tau$ at that point.
\end{corollary}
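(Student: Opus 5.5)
The plan is to reduce the corollary to Lemma~\ref{lem:onlyonevertex} by means of the Feature Translation Lemma, in the same way that Corollary~\ref{cor:homeocelleps} reduces to Theorem~\ref{thm:homeocell}.

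First, I would relabel the three sites so that $v$ is the vertex at the largest plane angle of $\tau = \triangle vv'v''$. This labeling is what Lemma~\ref{lem:onlyonevertex} requires, and it does not change $f$ or $\ell_\tau$, since both are symmetric in the three sites. If $f = \emptyset$ there is nothing to prove. If the sites are collinear, $\ell_\tau$ is empty and so is $f$, and again there is nothing to prove.

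Next, I would verify the hypothesis of Lemma~\ref{lem:onlyonevertex}. Take any $y \in f$. Then $y \in \Vor|_\Sigma v$, so $v$ is a nearest site to $y$. Because $V$ is an $\epsilon$-sample, some site lies within $\epsilon\,\lfs(y)$ of $y$, and therefore $|vy| \leq \epsilon\,\lfs(y)$.

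Since $\epsilon < 1$, I would apply Lemma~\ref{lem:ftl} with $p = y$ and $q = v$. This gives
\[
|vy| \leq \frac{\epsilon}{1-\epsilon}\,\lfs(v).
\]
The map $\epsilon \mapsto \epsilon/(1-\epsilon)$ is increasing on $[0,1)$, and $\epsilon < \kappa/(\kappa+1)$ is equivalent to $\epsilon/(1-\epsilon) < \kappa$. So $|vy| < \kappa\,\lfs(v)$ for every $y \in f$.

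Lemma~\ref{lem:onlyonevertex} then applies. It shows that $f$ contains at most one point, and that if $f$ contains a point $p$, then $\Sigma$ is not tangent to $\ell_\tau$ at $p$.

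The only step needing care is that Lemma~\ref{lem:onlyonevertex} measures distances from the largest-angle vertex. This holds because every $y \in f$ lies in all three restricted Voronoi cells, so the sampling bound applies to whichever vertex is designated $v$. I do not expect any real obstacle beyond this.
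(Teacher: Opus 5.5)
Your proposal is correct and follows the paper's proof: make $v$ the largest-angle vertex, combine the $\epsilon$-sample condition with the Feature Translation Lemma to get $|vy| \leq \frac{\epsilon}{1-\epsilon}\,\lfs(v) < \kappa\,\lfs(v)$ for every $y \in f$, and then invoke Lemma~\ref{lem:onlyonevertex}. Handling the empty and collinear cases separately is harmless but redundant, because Lemma~\ref{lem:onlyonevertex} already disposes of them.
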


\begin{proof}
Suppose without loss of generality that $v$ is the vertex at
the largest plane angle of $\triangle vv'v''$.
As $V$ is an $\epsilon$-sample, for every point $y \in \Vor|_\Sigma v$,
$|vy| \leq \epsilon \, \lfs(y)$.
By the Feature Translation Lemma (Lemma~\ref{lem:ftl}),
$|vy| \leq \frac{\epsilon}{1 - \epsilon} \lfs(v) < \kappa \, \lfs(v)$.
The result follows by Lemma~\ref{lem:onlyonevertex}.
\end{proof}

\section{Restricted Voronoi edges are topological intervals}
\label{1ball}

Theorem~\ref{thm:homeocell}
%%% UNCUT
% , Lemma~\ref{lem:pathsincell},
and Corollary~\ref{cor:homeocelleps} give
conditions under which restricted Voronoi cells are
topological closed disks.
%%% UNCUT
% and no cell intersects the interior of another cell.
Lemma~\ref{lem:onlyonevertex} and Corollary~\ref{cor:onlyonevertexeps} give
conditions under which the intersection of
three distinct restricted Voronoi cells is at most one point.
% justifying the name ``restricted Voronoi vertex.''
What about an intersection of two distinct restricted Voronoi cells?
That could be empty, a restricted Voronoi vertex, or
a {\em restricted Voronoi edge}, the last being
a nonempty intersection of $\Sigma$ with a $2$-face of $\Vor \, V$.
Under suitable sampling conditions,
Lemma~\ref{lem:homeoedge}, below, guarantees closed ball property~B:
each restricted Voronoi edge is a topological interval.

The next four lemmas derive conditions in which
an intersection $\Vor|_\Sigma v \cap \Vor|_\Sigma w$ is
one interval or one isolated point.
Lemma~\ref{lem:trivertexnormalseps} applies to $0.32$-samples, whereas
Lemma~\ref{lem:trivertexnormals} applies to $0.41$-Voronoi samples, and
Lemma~\ref{lem:homeoedge} concludes both.
Lemma~\ref{lem:rvedge} has no sampling requirements,
only topological requirements.
% see the full-length paper for a proof~\cite{shewchuk26b}.

\begin{lemma}
\label{lem:rvedge}
Let $\Vor|_\Sigma V$ be a restricted Voronoi diagram.
Suppose that every restricted Voronoi cell is a topological closed disk and
every intersection of three distinct restricted Voronoi cells is
either empty or a lone point (henceforth called a restricted Voronoi vertex).
Suppose also that no restricted Voronoi cell intersects
the interior of another. % restricted Voronoi cell.

Then for every pair of distinct sites $v, w \in V$,
$\Vor|_\Sigma v \cap \Vor|_\Sigma w$ is one of the following:
\begin{itemize}
\item
empty,
\item
a topological circle containing no restricted Voronoi vertex, or
\item
a union of disjoint topological closed intervals and isolated points, where
each isolated point is a restricted Voronoi vertex and
each interval contains exactly two restricted Voronoi vertices
which are its endpoints.
\end{itemize}

Moreover, if each connected component of $\Sigma$ has
at least three sites in $V$ lying on it, then
the possibility that $\Vor|_\Sigma v \cap \Vor|_\Sigma w$ is
a topological circle is eliminated,
every restricted Voronoi cell has at least two restricted Voronoi vertices
% and at least two restricted Voronoi edges
on its boundary, and
%%%CUT
every connected component of $\Sigma$ has
at least two restricted Voronoi vertices on it.
\end{lemma}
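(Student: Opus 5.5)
Throughout, write $C_v = \Vor|_\Sigma v$ and $e = C_v \cap C_w$. The plan is to work inside the closed disk $C_v$ and study $e$ as a subset of its boundary circle $\Bd C_v$.

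\textbf{Step 1: $e \subseteq \Bd C_v$.} Since no cell meets the interior of another, $e$ lies in $\Bd C_v$, a topological circle, and also in $\Bd C_w$. Because $\Sigma$ is a closed surface covered by the cells, every point of $\Bd C_v$ lies in some other cell, so $\Bd C_v = \bigcup_{u \neq v} (C_v \cap C_u)$. This is a finite union of closed sets, one for each neighbor $u$.

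\textbf{Step 2: structure of the pieces.} Let $p \in \Bd C_v$ be a point lying in only one set $C_v \cap C_u$; then $p$ is not a restricted Voronoi vertex. Each other set $C_v \cap C_{u'}$ is closed and misses $p$, so a small arc of $\Bd C_v$ around $p$ lies entirely in $C_v \cap C_u$. Hence, within the circle $\Bd C_v$, the only points where membership in the sets can change are restricted Voronoi vertices. These are points lying in three cells, and each such intersection is a lone point, so there are finitely many of them (finitely many triples of sites).

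\textbf{Step 3: the two cases for $e$.}
\begin{itemize}
\item If $\Bd C_v$ contains no vertex, then by connectedness $\Bd C_v$ lies in a single $C_v \cap C_u$. If that $u$ is $w$, then $e = \Bd C_v$, a circle with no vertex.
\item Otherwise the finitely many vertices cut $\Bd C_v$ into open arcs, each lying entirely in one set $C_v \cap C_u$. The closure of each arc lies in that set, since the set is closed. So $e$ is the union of the closed arcs assigned to $w$ together with any vertices lying in $e$. Adjacent closed arcs assigned to $w$ would share an endpoint lying only in $C_v$ and $C_w$, which is not a vertex; merging them shows each maximal component is a closed interval whose endpoints are vertices. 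An interior point of such an interval that was a vertex would have been a cut point, and its two neighboring arcs could both belong to $w$. This is the subtle step, so I would handle it by examining which sites own the arcs on either side of it. Vertices in $e$ not on any such interval are the isolated points.
\end{itemize}
One subtle point in the third case is that an interval must contain exactly its two endpoints as vertices. To get this, I would use that a vertex in the interior of an arc of $e$ lies in a third cell $C_u$. Since $C_u \cap C_v$ and $C_u \cap C_w$ are then nonempty and $C_u$ is a disk not overlapping interiors, a local argument on $\Sigma$ near that point forces $C_u$ to meet the interior of $C_v$ or of $C_w$, unless the point is an endpoint. Concretely, near the point the three cells tile a neighborhood and $C_u$ must occupy a sector. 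I expect this local "three cells tile a neighborhood" argument to be the main obstacle. I would argue it via the circle $\Bd C_v$ together with $\Bd C_w$, using invariance of domain.

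\textbf{Step 4: the "moreover" part.} Suppose $e$ is a circle with no vertex. Then $\Bd C_v = e \subseteq C_w$, so $\Bd C_v \subseteq \Bd C_w$, and both are circles, hence $\Bd C_v = \Bd C_w$. Then $C_v \cup C_w$ is two disks glued along their common boundary, which is a sphere. This sphere is closed in $\Sigma$ and also open in $\Sigma$ (each point has a disk neighborhood in it), so it is a connected component of $\Sigma$ containing only $v$ and $w$, contradicting the assumption of at least three sites per component. Therefore each $\Bd C_v$ is covered by at least two distinct sets $C_v \cap C_u$. By Step 3, at a transition point between them there is a vertex, and a circle covered by two or more closed arcs needs at least two transition points, giving at least two vertices on $\Bd C_v$. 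Each component contains a cell, and hence at least two vertices.
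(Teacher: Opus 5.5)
Your route is the paper's: cut $\Bd C_v$ at the finitely many vertices, observe that each vertex-free arc is owned by a single neighbor, rule out a vertex flanked by two arcs owned by the same $w$, and glue two disks into a sphere for the ``moreover'' part. Your Step~2 justification of single ownership is more explicit than the paper's one-line assertion.

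\textbf{The decisive step is only announced.} The sentence before it is also wrong. The cut points are vertices by construction, so adjacent $w$-owned arcs always meet at a vertex, never at ``an endpoint lying only in $C_v$ and $C_w$.'' There is no harmless merging; that configuration is exactly what must be excluded. The paper also handles it only by appeal to Figure~\ref{voredges}, claiming $C_1$ or $C_2$ would fail to be a disk.

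Your invariance-of-domain plan does close it:
\begin{enumerate}
\item Suppose an arc $\beta \subseteq \Bd C_v$ with $p$ in its relative interior lies in $C_w$. Then $\beta \subseteq \Bd C_w$.
\item Choose closed half-disk neighborhoods $H_v \subseteq C_v$ and $H_w \subseteq C_w$ of $p$ with $H_v \cap \Bd C_v = H_w \cap \Bd C_w = \beta'$, for a sub-arc $\beta' \subseteq \beta$ around $p$.
\item Since $C_v \cap C_w \subseteq \Bd C_v$, we get $H_v \cap H_w = \beta'$. So $H_v \cup H_w$ is a closed disk with $p$ in its interior.
\item By invariance of domain, $p$ is an interior point of $C_v \cup C_w$ in $\Sigma$.
\item Because $p$ is a vertex, some third cell $C_u \ni p$ exists, and it has interior points in every neighborhood of $p$. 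So $\Int C_u$ meets $C_v$ or $C_w$, contradicting the hypothesis that no cell meets the interior of another.
\end{enumerate}
This contradicts interior-disjointness rather than disk-ness, which is easier to make rigorous than the paper's picture.

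\textbf{A second case is not treated explicitly.} Step~3 does not address $e$ being a circle \emph{through} a vertex. Examples are a single vertex on $\Bd C_v$ whose complementary arc is owned by $w$, or every arc owned by $w$. In those cases ``each maximal component is a closed interval'' is false, and the lemma's claim that a circular $e$ contains no vertex is unproved. Two arguments exclude it:
\begin{itemize}
\item Apply the local argument above at any vertex on $e$.
\item Use your Step~4 sphere argument, which needs no three-site hypothesis. If $e$ is a circle, then $\Bd C_v = e = \Bd C_w$, and $C_v \cup C_w$ is a sphere component of $\Sigma$. Any third cell meeting it is connected, hence contained in it, so its nonempty interior lies in $C_v \cup C_w$, a contradiction.
\end{itemize}
The same observation justifies your unargued claim that this component carries no third site. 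It also covers the case in Step~4 where $\Bd C_v$ lies in a single $C_v \cap C_u$ but passes through a vertex. With these repairs, your count of at least two vertices per cell, and hence per component, goes through as written.
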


\begin{proof}
As $\Sigma$ is a closed surface and
$\Sigma$ is also a union of restricted Voronoi cells,
which are topological closed disks,
each point on the boundary of each restricted Voronoi cell is shared with
at least one other restricted Voronoi cell.
By assumption, no interior point of a cell is shared with another cell;
cells intersect each other only on their boundaries.

Consider a site $v$, its restricted Voronoi cell $\Vor|_\Sigma v$, and
the cell's boundary $O$, which is a topological circle.
There are only finitely many restricted Voronoi vertices, as
there are only finitely many intersections of three restricted Voronoi cells.
If two or more restricted Voronoi vertices lie on $O$,
they subdivide $O$ into two or more topological closed intervals.
% (as restricted Voronoi vertices are isolated points).
Let $I$ be one of these topological intervals, or
let $I = O$ if $O$ contains fewer than two restricted Voronoi vertices.
The subset of $I$ obtained by removing its restricted Voronoi vertices is
path-connected, so the points in that subset are all shared with
one and only one other site~$w$; hence
$I \subseteq \Vor|_\Sigma v \cap \Vor|_\Sigma w$.
It follows that for every distinct $v, w \in V$,
$\Vor|_\Sigma v \cap \Vor|_\Sigma w$ is either a topological circle or
a union of intervals and restricted Voronoi vertices.

The intersection of two restricted Voronoi cells $C_1$ and $C_2$ cannot include
two intervals that are not disjoint---that is,
two intervals that share a restricted Voronoi vertex---because if they did,
the shared vertex would lie on the boundary of a third cell, which implies that
either $C_1$ or $C_2$ is not a topological closed disk,
% (as its boundary branches four or more ways at the vertex).
as Figure~\ref{voredges} (left) illustrates.
Likewise, if $C_1 \cap C_2$ is a topological circle,
no restricted Voronoi vertex can lie on the circle, because if one did,
either $C_1$ or $C_2$ would not be a topological closed disk,
as Figure~\ref{voredges} (center) illustrates.
This establishes the lemma's first claim.

\begin{figure}
\centerline{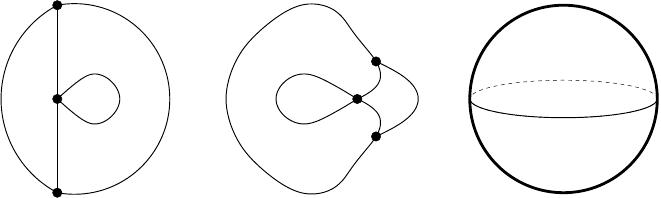}

\caption{\label{voredges}
Left:  Cells $C_1$ and $C_2$ share two edges that meet at a vertex;
either $C_1$ or $C_2$ is not a topological disk.
Center:  Cells $C_1$ and $C_2$ share a circle with one or more vertices on it;
either $C_1$ or $C_2$ is not a topological disk.
Right:  Cells $C_1$ and $C_2$ share a circle with no vertex on it;
if they are topological disks,
their union is topological sphere, a connected component of $\Sigma$.
}
\end{figure}

If the intersection of two cells is a topological circle
(with no restricted Voronoi vertex), then
as the two cells are topological disks, their union is a topological sphere,
as Figure~\ref{voredges} (right) illustrates.
This sphere is an entire connected component of $\Sigma$ with
no other cell on it.
Hence, if at least three sites in $V$ lie on
each connected component of $\Sigma$, then
no two cells have a topological circle as their intersection.
The lemma's final paragraph follows.
% Hence every cell has at least two restricted Voronoi vertices and
% at least two restricted Voronoi edges on its boundary, and
% every connected component of $\Sigma$ has
% at least two restricted Voronoi vertices on it.
\end{proof}

\begin{lemma}
\label{lem:trivertexnormalseps}
% Let $V$ be an $\epsilon$-sample of $\Sigma$ for some $\epsilon \leq 0.3245$.
Let $u$ be a restricted Voronoi vertex and
let $\tau = \triangle vv'v''$ be its dual restricted Delaunay triangle.
%%%CUT
(If $u$'s dual face is a polygon, you may choose any three of its vertices.)
% Let $n_u$ be an outside-facing vector normal to $\Sigma$ at $u$,
% let $n_v$ be an outside-facing vector normal to $\Sigma$ at $v$, and
Let $n_\tau$ be a vector normal to $\tau$,
directed so that $\angle (n_v, n_\tau) \leq 90^\circ$
(as illustrated in Figure~\ref{trivertex}).
% such that $n_u \cdot n_\tau \geq 0$.
% (Hence if we translate the vector $n_\tau$ to have its origin at $u$,
% it points toward the outside of $\Sigma$ or it is tangent to $\Sigma$.)
% Then $\angle (n_v, n_\tau) < 90^\circ$; equivalently, $n_v \cdot n_\tau > 0$.
Let $s = |vu| = |v'u| = |v''u|$ and suppose that $s \leq 0.3245 \, \lfs(u)$.

% Then the angles $\angle (n_u, n_\tau)$ and $\angle (n_v, n_\tau)$ are
% either both less than $90^\circ$ or both greater than $90^\circ$
% (depending on which way $n_\tau$ is directed).
% Equivalently, the dot products $n_u \cdot n_\tau$ and $n_v \cdot n_\tau$ are
% either both positive or both negative.
Then $\angle (n_u, n_\tau) < 90^\circ$ and $\angle (n_v, n_\tau) < 90^\circ$.
Equivalently, $n_u \cdot n_\tau$ and $n_v \cdot n_\tau$ are positive.
\end{lemma}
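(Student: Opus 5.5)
Let $\epsilon = 0.3245$, so $s \le \epsilon\,\lfs(u)$. I would bound both angles against a common reference, using the Triangle Normal Lemma (Lemma~\ref{lem:tnl}), the Normal Variation Lemma (Lemma~\ref{lem:nvl}) and the Feature Translation Lemma (Lemma~\ref{lem:ftl}).

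\textbf{Step 1: reduce to a bound on $\angle(n_u,n_v)$.} The vertices of $\tau$ lie on the sphere of radius $s$ centered at $u$. Hence the circumradius of $\tau$ satisfies $r \le s$. Since $|uv| = s \le \epsilon\,\lfs(u)$, the Feature Translation Lemma gives
\[
s \le \frac{\epsilon}{1-\epsilon}\,\lfs(v) \doteq 0.4804\,\lfs(v),
\qquad
\lfs(v) \ge (1-\epsilon)\,\lfs(u).
\]
If $\tau$ has a nonacute angle at some vertex, I would relabel so that this vertex is $v$. The statement's conclusion is symmetric once we show that $n_u$ and all vertex normals lie on the same side of $\aff\tau$. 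Then the Triangle Normal Lemma gives
\[
\sin\angle(n_\tau,n_v) \le \frac{r}{\lfs(v)}\cot\frac{\phi}{2} \le \sqrt{3}\,\frac{r}{\lfs(v)}
\]
when $v$ has the largest angle $\phi \ge 60^\circ$. This yields $\angle(n_\tau,n_v) \le \arcsin(\sqrt{3}\cdot 0.4804) \approx 56.3^\circ$, which in particular is strictly less than $90^\circ$. That settles the second claim for the largest-angle vertex.

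\textbf{Step 2: transfer to $u$.} The Normal Variation Lemma with $p=u$, $q=v$ and $\delta = s/\lfs(u) \le 0.3245$ gives
\[
\angle(n_u,n_v) \le \eta(0.3245) \approx 19.1^\circ.
\]
Therefore $\angle(n_u,n_\tau) \le 56.3^\circ + 19.1^\circ < 90^\circ$.

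Applying the Triangle Normal Lemma at $u$ directly is not possible, because $u$ is not a vertex of $\tau$. Combining the Triangle Normal Lemma at $u$'s frame with $\lfs(u)$ instead would give a better constant: $\sin\angle \le \sqrt{3}\cdot 0.3245 \cdot \lfs(u)/\lfs(v)$ is worse, so I would keep the $\lfs(v)$ version.

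\textbf{Step 3: the other vertices.} The statement names a particular $v$, which need not be the largest-angle vertex. For a general vertex $v$, I would use $\angle(n_v,n_\tau) \le \angle(n_v,n_u) + \angle(n_u,n_\tau)$, together with the same Normal Variation bound $\angle(n_u,n_v) \le 19.1^\circ$ applied to that vertex. This yields $56.3^\circ + 2(19.1^\circ) \approx 94.5^\circ$, which is too weak.

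\textbf{Main obstacle: tightening the estimate.} Closing the gap in Step 3 is where I expect the difficulty, and where the constant $0.3245$ is presumably tuned. I would try the following refinements in order.
\begin{itemize}
\item Bound $\angle(n_u,n_\tau)$ directly by a Triangle-Normal-type argument centered at $u$. The vertices lie on a sphere of radius $s$ around $u$ and outside the medial balls of radius $\lfs(u)$ tangent at $u$. This confines them to a thin slab about $T_u\Sigma$ of half-width roughly $s^2/(2\lfs(u))$.
\item If this confinement is too crude, use the sharper form of the tilt bound with $\lfs(u)$ in place of $\lfs(v)$. This replaces $0.4804$ by $0.3245$, giving $\arcsin(\sqrt{3}\cdot 0.3245) \approx 34.2^\circ$, and then add $19.1^\circ$ per vertex. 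A total of about $53^\circ$ would be comfortable.
\item The honest bottleneck is a short-edge triangle, which has small angles at the vertices other than the largest-angle one. I would handle it via the slab argument at $u$: all three vertices are within $s$ of $u$ and on $\Sigma$, so $\aff\tau$ cannot tilt past the angle permitted by the empty tangent balls at $u$. I would then compute the worst case numerically and check that it stays below $90^\circ - \eta(0.3245)$.
\end{itemize}
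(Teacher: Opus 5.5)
\textbf{There is a genuine gap: the case where the designated vertex $v$ is not at $\tau$'s largest angle is left unproved.}

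Your Steps 1--2 are correct when the designated vertex $v$ (the one that fixes the orientation of $n_\tau$) happens to be at $\tau$'s largest angle. Then $\angle(n_v,n_\tau) < 56.3^\circ$ and $\angle(n_u,n_\tau) < 56.3^\circ + \eta(0.3245) < 90^\circ$. But the lemma concerns the $v$ named in the statement, and that case stays open. As you observe, routing through the largest-angle vertex $w$ with only $\phi_w \ge 60^\circ$ gives a bound above $90^\circ$.

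Your relabeling remark also hides an orientation problem. The Triangle Normal Lemma bounds only $\sin\angle(n_w,n_\tau)$, so for $w \neq v$ you get $\angle(n_w,n_\tau) < 56.3^\circ$ \emph{or} $> 123.7^\circ$. The only available chain is $\angle(n_w,n_\tau) \le \angle(n_w,n_u)+\angle(n_u,n_v)+\angle(n_v,n_\tau) < 19.21^\circ+19.21^\circ+90^\circ$, and it cannot exclude the obtuse branch.

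Neither proposed refinement closes the gap. The empty tangent balls at $u$ only confine the vertices to within distance $s^2/(2\lfs(u))$ of $T_u\Sigma$ on the sphere of radius $s$ about $u$. That says nothing about the tilt of $\aff\tau$: three points on a tiny circle around a point of that sphere's equator lie in the slab, yet their plane is perpendicular to $T_u\Sigma$. The tilt has to be controlled by the empty balls at the vertices themselves, which is what the Triangle Normal Lemma encodes. You also cannot put $\lfs(u)$ in place of $\lfs(v)$ in that lemma, since $u$ is not a vertex of $\tau$. The Feature Translation Lemma only gives $r/\lfs(v) \le 0.3245/0.6755 \doteq 0.4804$.

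\textbf{The missing idea} is a case split on the plane angle $\phi_v$ at the designated vertex, with a tuned threshold. It uses the full factor $\cot(\phi/2)$ rather than its worst case $\sqrt{3}$.

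If $\phi_v \ge 53.932^\circ$, apply the Triangle Normal Lemma at $v$ itself: $\sin\angle(n_v,n_\tau) \le (0.3245/0.6755)\cot 26.966^\circ < 0.9442$. Since $n_\tau$ is oriented by $v$, this gives $\angle(n_v,n_\tau) < 70.77^\circ$ and $\angle(n_u,n_\tau) < 70.77^\circ + 19.21^\circ < 90^\circ$.

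If $\phi_v < 53.932^\circ$, the largest angle $\phi_w$ exceeds $(180^\circ-53.932^\circ)/2 = 63.034^\circ$, not merely $60^\circ$. Then $\sin\angle(n_w,n_\tau) \le (0.3245/0.6755)\cot 31.517^\circ < 0.78342$, so $\angle(n_w,n_\tau) < 51.575^\circ$ or $> 128.425^\circ$. Now the chain above, which is $< 128.42^\circ$, does rule out the obtuse branch. This gives $\angle(n_u,n_\tau) < 19.21^\circ + 51.575^\circ$ and $\angle(n_v,n_\tau) < 2(19.21^\circ) + 51.575^\circ = 89.995^\circ$.

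The margins are razor-thin, so your estimate matters: $\eta(0.3245) \doteq 19.207^\circ$, not $19.1^\circ$.
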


\begin{proof}
Let $w \in \{ v, v', v'' \}$ be the vertex at $\tau$'s largest plane angle.
As $|vu| = |wu| = s \leq 0.3245 \, \lfs(u)$,
by the Normal Variation Lemma (Lemma~\ref{lem:nvl}),
$\angle (n_v, n_u) \leq \eta(0.3245) < 19.21^\circ$ and
$\angle (n_w, n_u) \leq \eta(0.3245) < 19.21^\circ$, where $\eta(\delta) =
\arccos \left( 1 - \frac{\delta^2}{2 \sqrt{1 - \delta^2}} \right)$.
% and $\eta(0.3245) < 19.21^\circ$.

Let $r$ be $\tau$'s circumradius.
As Figure~\ref{trivertex} illustrates, $r \leq s$, because
$u$ lies on the line perpendicular to $\tau$ through $\tau$'s circumcenter
and $r$ is the distance from $v$ to that line.
By the Feature Translation Lemma (Lemma~\ref{lem:ftl}),
$\lfs(u) \leq \lfs(v) / (1 - 0.3245)$ and
$\lfs(u) \leq \lfs(w) / (1 - 0.3245)$.
Hence $r \leq 0.3245 \, \lfs(u) \leq \frac{0.3245}{0.6755} \, \lfs(v)$ and
$r \leq \frac{0.3245}{0.6755} \, \lfs(w)$.

\begin{figure}
\centerline{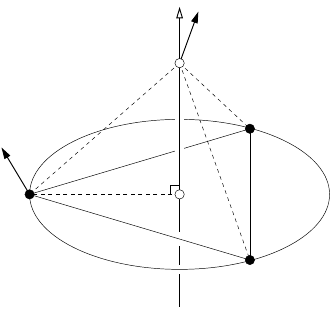}

\caption{\label{trivertex}
The sites $v$, $v'$, and $v''$ and the restricted Voronoi vertex $u$
lie on $\Sigma$ (not shown).
}
\end{figure}

If $\tau$'s plane angle at the vertex $v$ is $53.932^\circ$ or greater, then
by the Triangle Normal Lemma (Lemma~\ref{lem:tnl}),
$\sin \angle (n_v, n_\tau) \leq r \cot 26.966^\circ / \lfs(v)
% \leq \frac{0.3245}{0.6755} \, \cot 26.966^\circ
< (0.3245 / 0.6755) \cdot 1.9655 < 0.9442$.
Therefore, $\angle (n_v, n_\tau) < 70.77^\circ$ and
$\angle (n_u, n_\tau) \leq \angle (n_v, n_u) + \angle (n_v, n_\tau) <
19.21^\circ + 70.77^\circ = 89.98^\circ$.
% which is less than $90^\circ$ as claimed.

Otherwise, $\tau$'s plane angle at $v$ is less than $53.932^\circ$, so
$\tau$'s largest plane angle (at $w$) is
greater than $(180^\circ - 53.932^\circ) / 2 = 63.034^\circ$.
By the Triangle Normal Lemma, $\sin \angle (n_w, n_\tau) \leq
r \cot 31.517^\circ / \lfs(w)
% \leq \frac{0.3245}{0.6755} \, \cot 31.517^\circ
< (0.3245 / 0.6755) \cdot 1.6308 < 0.78342$.
Therefore, either $\angle (n_w, n_\tau) < 51.575^\circ$ or
$\angle (n_w, n_\tau) > 128.425^\circ$.
% depending on which way $n_\tau$ is directed.
% Suppose without loss of generality that $n_\tau$ is directed so that
% $\angle (n_w, n_\tau) < 51.575^\circ$.
% Then
The latter case
% ($\angle (n_w, n_\tau) > 128.425^\circ$)
is not possible, because
$\angle (n_w, n_\tau) \leq
\angle (n_w, n_u) + \angle (n_u, n_v) + \angle (n_v, n_\tau) <
19.21^\circ + 19.21^\circ + 90^\circ = 128.42^\circ$.
In the former case,
$\angle (n_u, n_\tau) \leq \angle (n_w, n_u) + \angle (n_w, n_\tau) <
19.21^\circ + 51.575^\circ = 70.785^\circ$ and
$\angle (n_v, n_\tau) \leq
\angle (n_v, n_u) + \angle (n_w, n_u) + \angle (n_w, n_\tau) <
19.21^\circ + 19.21^\circ + 51.575^\circ = 89.995^\circ$.
% confirming that both angles are less than $90^\circ$.
\end{proof}

\begin{lemma}
\label{lem:trivertexnormals}
Let $u$ be a restricted Voronoi vertex and
let $\tau = \triangle vv'v''$ be its dual restricted Delaunay triangle.
%%%CUT
(If $u$'s dual face is a polygon, you may choose any three of its vertices.)
Let $w \in \{ v, v', v'' \}$ be the vertex at $\tau$'s largest plane angle.
% Let $n_v$ be an outside-facing vector normal to $\Sigma$ at $v$,
% let $n_u$ be an outside-facing vector normal to $\Sigma$ at $u$, and
Let $n_\tau$ be a vector normal to~$\tau$,
directed so that $\angle (n_v, n_\tau) \leq 90^\circ$.
Let $s = |vu| = |v'u| = |v''u| = |wu|$ and suppose that
% $|vu| \leq 0.419542 \, \lfs(v)$ and $|wu| \leq 0.419542 \, \lfs(w)$.
$s \leq 0.4132 \, \lfs(v)$ and $s \leq 0.4132 \, \lfs(w)$.

% Then the angles $\angle (n_u, n_\tau)$ and $\angle (n_v, n_\tau)$ are
% either both less than $90^\circ$ or both greater than $90^\circ$
% (depending on which way $n_\tau$ is directed).
% Equivalently, the dot products $n_u \cdot n_\tau$ and $n_v \cdot n_\tau$ are
% either both positive or both negative.
Then $\angle (n_u, n_\tau) < 90^\circ$ and $\angle (n_v, n_\tau) < 90^\circ$.
Equivalently, $n_u \cdot n_\tau$ and $n_v \cdot n_\tau$ are positive.
\end{lemma}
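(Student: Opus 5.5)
The proof will follow the case structure of Lemma~\ref{lem:trivertexnormalseps}. The difference is that distances are now normalized by $\lfs(v)$ and $\lfs(w)$ rather than by $\lfs(u)$. This removes the lossy Feature Translation step in the circumradius bound, but it makes the normal-variation bounds between $u$ and the sites slightly weaker in form. Let $\delta = 0.4132$. Since $|vu| = s \leq \delta\,\lfs(v)$ and $|wu| = s \leq \delta\,\lfs(w)$, the Normal Variation Lemma (Lemma~\ref{lem:nvl}) applies with $p = v$ and with $p = w$, because $\delta < 0.9717$. It gives $\angle(n_v,n_u) \leq \eta(\delta)$ and $\angle(n_w,n_u) \leq \eta(\delta)$, and I expect $\eta(0.4132)$ to be a little under $24.9^\circ$. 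As before, the circumradius satisfies $r \leq s$, because $u$ lies on the line through $\tau$'s circumcenter perpendicular to $\tau$. Hence $r \leq \delta\,\lfs(v)$ and $r \leq \delta\,\lfs(w)$ directly, with no $1/(1-\epsilon)$ loss.

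Next I would pick a threshold angle $\phi_0$ and split into two cases.

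\textbf{Case 1: $\tau$'s plane angle at $v$ is at least $\phi_0$.} The Triangle Normal Lemma (Lemma~\ref{lem:tnl}) gives $\sin\angle(n_v,n_\tau) \leq \delta\cot(\phi_0/2)$, assuming $\phi_0 < 90^\circ$ so that the cotangent term dominates. Since $n_\tau$ is directed so that $\angle(n_v,n_\tau) \leq 90^\circ$, we get $\angle(n_v,n_\tau) \leq \arcsin(\delta\cot(\phi_0/2))$. For this to be strictly less than $90^\circ$ we need $\delta\cot(\phi_0/2) < 1$. The triangle inequality for angles then gives $\angle(n_u,n_\tau) \leq \eta(\delta) + \arcsin(\delta\cot(\phi_0/2))$, and we need this to be below $90^\circ$.

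\textbf{Case 2: the angle at $v$ is less than $\phi_0$.} Then the largest angle, at $w$, exceeds $(180^\circ-\phi_0)/2$. The Triangle Normal Lemma applied at $w$ bounds $\sin\angle(n_w,n_\tau)$ by $\delta\cot(45^\circ-\phi_0/4)$, or by $\delta$ if that angle is obtuse. So $\angle(n_w,n_\tau)$ is either below some $\beta$ or above $180^\circ-\beta$.

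\begin{itemize}
\item The large alternative is ruled out by the chain $\angle(n_w,n_\tau) \leq \angle(n_w,n_u)+\angle(n_u,n_v)+\angle(n_v,n_\tau) \leq 2\eta(\delta)+90^\circ$. This requires $180^\circ-\beta > 90^\circ+2\eta(\delta)$, i.e.\ $\beta + 2\eta(\delta) < 90^\circ$.
\item In the remaining alternative, $\angle(n_u,n_\tau) < \eta(\delta)+\beta$ and $\angle(n_v,n_\tau) < 2\eta(\delta)+\beta$. Both are below $90^\circ$ under the same inequality.
\end{itemize}

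The constraints are then $\eta(\delta)+\arcsin(\delta\cot(\phi_0/2)) < 90^\circ$ for Case 1 and $2\eta(\delta)+\arcsin(\delta\cot(45^\circ-\phi_0/4)) < 90^\circ$ for Case 2. The main obstacle is the numerical balancing: choosing $\phi_0$ so that both hold at $\delta=0.4132$. With $\eta \approx 24.9^\circ$, Case 2 needs $\arcsin(0.4132\cot(45^\circ-\phi_0/4)) < 40.2^\circ$, i.e.\ $\cot(45^\circ-\phi_0/4) < 1.56$, so $\phi_0 \gtrsim 51^\circ$. Case 1 needs $\arcsin(0.4132\cot(\phi_0/2)) < 65.1^\circ$, i.e.\ $\cot(\phi_0/2) < 2.19$, so $\phi_0 \gtrsim 49^\circ$. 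Both constraints push $\phi_0$ upward and do not conflict, so I will pick $\phi_0$ near $52^\circ$ and verify the arithmetic with a little slack.

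A residual worry is the orientation of $n_\tau$ in Case 2. The chain used to exclude the large alternative relies only on $\angle(n_v,n_\tau)\leq 90^\circ$, which holds by the choice of direction, so this goes through. If the margins turn out tighter than my estimates, I would instead use the exact value of $\eta(0.4132)$ and solve for the threshold $\phi_0$ exactly, since $0.4132$ was presumably chosen as the largest $\delta$ for which both constraints are compatible.
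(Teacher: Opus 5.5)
Your case structure is the paper's: a threshold on the plane angle at $v$, the Triangle Normal Lemma at $v$ in one case and at $w$ in the other, and ruling out the obtuse alternative for $\angle(n_w,n_\tau)$ via $\angle(n_v,n_\tau)\le 90^\circ$. The gap is in the numerical balancing, which is the real content of this lemma.

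In Case~2 the guaranteed lower bound on the angle at $w$ is $90^\circ-\phi_0/2$. This \emph{decreases} as $\phi_0$ grows, so $\cot(45^\circ-\phi_0/4)$ \emph{increases} with $\phi_0$. The Case~2 constraint is therefore an \emph{upper} bound on $\phi_0$, not a lower bound. For instance, $\cot(45^\circ-\phi_0/4)<1.56$ holds at $\phi_0=49^\circ$ (value $\approx 1.555$) but fails at $\phi_0=51^\circ$ (value $\approx 1.585$). The two cases pull in opposite directions, and your choice $\phi_0\approx 52^\circ$ fails. There the angle at $w$ is only guaranteed to exceed $64^\circ$, so $\sin\angle(n_w,n_\tau)\le 0.4132\cot 32^\circ\approx 0.661$, and the best bound is $\angle(n_w,n_\tau)<41.4^\circ$. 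Then $2\eta+41.4^\circ\approx 91.4^\circ>90^\circ$. Neither the exclusion of the obtuse alternative nor the final bound on $\angle(n_v,n_\tau)$ goes through.

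Your estimate of $\eta(0.4132)$ is also off. We have $\cos\eta = 1-\delta^2/(2\sqrt{1-\delta^2})\approx 0.906256$, so $\eta\approx 25.007^\circ$, not under $24.9^\circ$. With the correct value:
\begin{itemize}
\item Case~1 needs $\cot(\phi_0/2)<\cos\eta/\delta\approx 2.19326$, i.e.\ $\phi_0>49.0205^\circ$.
\item Case~2 needs $\cot(45^\circ-\phi_0/4)<\cos(2\eta)/\delta\approx 1.55518$, i.e.\ $\phi_0<49.035^\circ$.
\end{itemize}
The feasible window is only about $0.014^\circ$ wide. The paper takes $\phi_0=49.023^\circ$ and verifies both cases with five-digit arithmetic; the margins are about $0.002^\circ$ and $0.0004^\circ$.

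Your closing hedge (``solve for $\phi_0$ exactly'') points the right way. As written, though, the proposal asserts that the constraints do not conflict and commits to a threshold that does not work. What is missing is to recognize the two-sided constraint on $\phi_0$ and to exhibit a threshold inside this very narrow window, using sufficiently precise values of $\eta(0.4132)$ and the cotangents.
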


% The proof of Lemma~\ref{lem:trivertexnormals} is much like
% that of Lemma~\ref{lem:trivertexnormalseps};
% see the full-length paper~\cite{shewchuk26b}.

\begin{proof}
As $|vu| \leq 0.4132 \, \lfs(v)$ and $|wu| \leq 0.4132 \, \lfs(w)$,
by the Normal Variation Lemma (Lemma~\ref{lem:nvl}),
$\angle (n_v, n_u) \leq \eta(0.4132) < 25.008^\circ$ and
$\angle (n_w, n_u) \leq \eta(0.4132) < 25.008^\circ$, where $\eta(\delta) =
\arccos \left( 1 - \frac{\delta^2}{2 \sqrt{1 - \delta^2}} \right)$.
% and $\eta(0.4132) < 25.008^\circ$.

Let $r$ be $\tau$'s circumradius.
As Figure~\ref{trivertex} illustrates, $r \leq s$, because
$u$ lies on the line perpendicular to $\tau$ through $\tau$'s circumcenter
and $r$ is the distance from $v$ to that line.
So $r \leq 0.4132 \, \lfs(v)$ and $r \leq 0.4132 \, \lfs(w)$.

If $\tau$'s plane angle at the vertex $v$ is $49.023^\circ$ or greater, then
by the Triangle Normal Lemma (Lemma~\ref{lem:tnl}),
$\sin \angle (n_v, n_\tau) \leq r \cot 24.5115^\circ / \lfs(v) <
0.4132 \cdot 2.1932 < 0.906231$.
Therefore, $\angle (n_v, n_\tau) < 64.99^\circ$ and
$\angle (n_u, n_\tau) \leq \angle (n_v, n_u) + \angle (n_v, n_\tau) <
25.008^\circ + 64.99^\circ = 89.998^\circ$.
% which is less than $90^\circ$ as claimed.

Otherwise, $\tau$'s plane angle at $v$ is less than $49.023^\circ$, so
$\tau$'s largest plane angle (at $w$) is
greater than $(180^\circ - 49.023^\circ) / 2 = 65.4885^\circ$.
By the Triangle Normal Lemma, $\sin \angle (n_w, n_\tau) \leq
r \cot 32.74425^\circ / \lfs(w) < 0.4132 \cdot 1.5551 < 0.642568$.
Therefore, either $\angle (n_w, n_\tau) < 39.9836^\circ$ or
$\angle (n_w, n_\tau) > 140.0164^\circ$.
% depending on which way $n_\tau$ is directed.
% Suppose without loss of generality that $n_\tau$ is directed so that
% $\angle (n_w, n_\tau) < 39.9836^\circ$.
The latter case is not possible, because
$\angle (n_w, n_\tau) \leq
\angle (n_w, n_u) + \angle (n_v, n_u) + \angle (n_v, n_\tau) <
25.008^\circ + 25.008^\circ + 90^\circ = 140.016^\circ$.
In the former case,
$\angle (n_u, n_\tau) \leq \angle (n_w, n_u) + \angle (n_w, n_\tau) <
25.008^\circ + 39.9836^\circ = 64.9916^\circ$ and
$\angle (n_v, n_\tau) \leq
\angle (n_v, n_u) + \angle (n_w, n_u) + \angle (n_w, n_\tau) <
25.008^\circ + 25.008^\circ + 39.9836^\circ = 89.9996^\circ$.
% confirming that both angles are less than $90^\circ$.
\end{proof}

The final lemma shows that a nonempty intersection of
two restricted Voronoi cells is either a lone restricted Voronoi vertex
(a ``degenerate'' case where
four or more restricted Voronoi cells share a restricted Voronoi vertex,
whereas the common case is three cells sharing a vertex) or
% as opposed to the common case where
% exactly three restricted Voronoi cells share a restricted Voronoi vertex) or
a~lone topological interval justifying the name ``restricted Voronoi edge.''
%%% JRS:  UNCUT
% This is our favorite lemma after Lemma~\ref{lem:abovebelow};
% we are rather pleased with the argument.

\begin{lemma}
\label{lem:homeoedge}
Let $\Vor|_\Sigma V$ be a restricted Voronoi diagram that satisfies
all the \mbox{conditions} of Lemma~\ref{lem:rvedge},
including the condition that
at least three sites in $V$ lie on each connected component of $\Sigma$.
% Let $v \in V$ be a site.
% Suppose that $v$'s restricted Voronoi cell $\Vor|_\Sigma v$ is
% homeomorphic to a closed disk, and that
% for every site $w \in V$ such that
% $\Vor|_\Sigma v$ intersects $\cap \Vor|_\Sigma w$,
% $\Vor|_\Sigma w$ is homeomorphic to a closed disk.
% Suppose also that for every pair of sites $w, w' \in V$,
% $\Vor|_\Sigma v \cap \Vor|_\Sigma w \cap \Vor|_\Sigma w'$ is
% either a lone point (i.e., a restricted Voronoi vertex) or empty.
% Suppose also that at least two distinct restricted Voronoi vertices lie on
% the boundary of $\Vor|_\Sigma v$.
% Suppose that one of the following is true:  either
% % $V$ is an $\epsilon$-sample of $\Sigma$ for some $\epsilon \leq 0.3245$, or
% for every site $v \in V$ and
% every restricted Voronoi vertex $u \in \Vor|_\Sigma v$,
% $|vu| \leq 0.3245 \, \lfs(u)$; or
% for every site $v \in V$ and
% every restricted Voronoi vertex $u \in \Vor|_\Sigma v$,
% $|vu| \leq 0.4132 \, \lfs(v)$.
Suppose that
%%%CUT
closed ball property~D and generic intersection properties~E and~F hold:
% properties~D, E, and~F hold:
no vertex of $\Vor \, V$ lies on $\Sigma$, and
no edge and no $2$-face of $\Vor \, V$ intersects $\Sigma$ tangentially.
Suppose also that for every restricted Voronoi vertex $u \in \Vor|_\Sigma V$,
either
\begin{itemize}
\item
\mbox{$|vu| \leq 0.3245 \, \lfs(u)$}
for every site $v$ such that $u \in \Vor|_\Sigma v$, or
\item
$|vu| \leq 0.4132 \, \lfs(v)$
for every site $v$ such that $u \in \Vor|_\Sigma v$.
\end{itemize}
Let $v, w \in V$ be two distinct sites,
let $F = \Vor \, v \cap \Vor \, w$, and
let $f = \Vor|_\Sigma v \cap \Vor|_\Sigma w = F \cap \Sigma$.

Then one of these three claims holds:
\begin{itemize}
\item
$f$ is empty;
\item
$f$~is a lone point (a restricted Voronoi vertex) and $F$ is an edge; or
\item
$f$~is homeomorphic to a closed interval and $F$ is a $2$-face.
\end{itemize}
\end{lemma}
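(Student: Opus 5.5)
The plan is to combine the topological classification of Lemma~\ref{lem:rvedge} with a counting argument on the boundary of the Voronoi polygon $F$. The normal-orientation Lemmas~\ref{lem:trivertexnormalseps} and~\ref{lem:trivertexnormals} will force at most two restricted Voronoi vertices onto $\Bd F$.

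First the easy cases. If $F$ is empty or a vertex of $\Vor\,V$, then $f=\emptyset$ by property~D. If $F$ is an edge, then $f$ is the intersection of three restricted cells. By the hypotheses of Lemma~\ref{lem:rvedge}, $f$ is empty or a lone point, and that point is a restricted Voronoi vertex.

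So assume $F$ is a convex polygon lying in the bisector plane $\Lambda$ of $vw$. By Lemma~\ref{lem:rvedge}, and because at least three sites lie on each component of $\Sigma$, $f$ is a disjoint union of isolated restricted Voronoi vertices and closed intervals. Each interval has exactly two restricted vertices, namely its endpoints. By property~D and property~F, every restricted vertex in $f$ is a transversal crossing of $\Sigma$ with the relative interior of an edge of $F$. Each such edge is dual to a triangle $\tau=\triangle vww'$. An isolated point of $f$ would be a transversal crossing of an edge of $F$ that is a $\Sigma$-crossing but bounds no arc. Property~E makes $\Sigma\cap F$ a $1$-manifold near $\Bd F$, so this is impossible. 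Hence isolated vertices can only arise when $F$ is an edge, and $f$ has at most one component as soon as $\Bd F$ carries at most two restricted vertices.

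The key step is to bound the number of restricted vertices on $\Bd F$.
\begin{itemize}
\item \emph{Orientation of each crossing.} At a restricted vertex $u$ on the edge dual to $\tau$, the edge is parallel to $n_\tau$. Lemma~\ref{lem:trivertexnormalseps} or Lemma~\ref{lem:trivertexnormals} (whichever hypothesis holds at $u$) gives $n_u\cdot n_\tau>0$ when $n_\tau$ is oriented toward $n_v$. So moving along the edge in the direction $n_\tau$, one crosses $\Sigma$ from inside to outside.
\item \emph{Splitting $\Bd F$ into two chains.} Orient $\Bd F$ as a cycle. The edges of $F$ are dual to triangles $vww_i$ that rotate about the axis $vw$. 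Whether the direction $n_\tau$ (oriented toward $n_v$) agrees with the cyclic orientation changes only where the triangle's plane passes through the direction $n_v$. That happens twice in a full turn, so $\Bd F$ splits into two chains. On one chain every crossing of $\Sigma$ goes inside-to-outside in the forward direction; on the other chain, outside-to-inside.
\item \emph{At most one crossing per chain.} $\Sigma$ is closed and separates space, so successive crossings of the closed curve $\Bd F$ alternate between inside-to-outside and outside-to-inside. Two crossings on the same chain with nothing between them would have the same type, which is a contradiction. More carefully, any two crossings in one chain are separated by an opposite-type crossing, which must lie in the same chain, which is impossible. Hence each chain holds at most one restricted vertex.
\end{itemize}
So at most two restricted vertices lie on $\Bd F$. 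Combined with the structure from Lemma~\ref{lem:rvedge}, $f$ is empty or a single closed interval.

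The main obstacle is making the chain decomposition rigorous. One must check that the sign of $n_\tau$ relative to the cyclic orientation really changes only twice around $F$. One must also handle the degenerate situation where a chain boundary falls exactly at a vertex of $F$ or where the relevant normal is tangent to $\Lambda$. I would first parametrize the triangle normals by the rotation angle about $vw$ and read the sign off $\det(v-w,\,w_i-v,\,n_v)$.
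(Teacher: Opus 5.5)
Your proposal is correct and follows essentially the same route as the paper. Lemma~\ref{lem:rvedge} together with properties D--F reduces $f$ to disjoint intervals whose endpoints are restricted Voronoi vertices on $\Bd F$. Lemmas~\ref{lem:trivertexnormalseps} and~\ref{lem:trivertexnormals} then make every crossing inside-to-outside along the edge direction pointing toward $n_v$, so each of the two $n_v$-monotone chains of $\Bd F$ carries at most one restricted vertex.

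The degenerate cases you flag are handled by the strict inequality $\angle (n_v, n_\tau) < 90^\circ$ from those same lemmas. Because $f \neq \emptyset$ puts at least one restricted vertex on $\Bd F$, this inequality shows that $F$ is not perpendicular to $n_v$, so the split into two chains is well defined. It also shows that no restricted vertex lies on an edge of $F$ perpendicular to $n_v$.
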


\begin{proof}
Suppose $f$ is nonempty.
By Lemma~\ref{lem:rvedge}, $f$ is
a union of disjoint topological $1$-balls and isolated points.
As $f$ is nonempty and no vertex of $\Vor \, V$ lies on $\Sigma$
(closed ball property~D), $F$ is not a vertex of $\Vor \, V$.
Hence $F$ is either an edge or a $2$-face of $\Vor \, V$.
If $F$ is a Voronoi edge, then
$F$ is the intersection of three or more Voronoi cells and thus
$f$ is the intersection of three or more restricted Voronoi cells; so
% Lemma~\ref{lem:onlyonevertexeps} or~\ref{lem:onlyonevertex} implies that
by assumption (the conditions of Lemma~\ref{lem:rvedge}),
$f$~is a lone restricted Voronoi vertex and the result holds.

Only the case where $F$ is a $2$-face of $\Vor \, V$ remains.
% By assumption, $F$ does not intersect $\Sigma$ tangentially,
% % Lemma~\ref{lem:onlyonevertexeps} or~\ref{lem:onlyonevertex} imply that
% no edge of $F$ intersects $\Sigma$ tangentially, and
% no vertex of $F$ intersects $\Sigma$ at all.
As closed ball property~D and generic intersection properties~E and~F hold,
% As properties~D, E, and F hold,
$\Sigma \cap F$ cannot contain any isolated points;
so $f$ is a~union of disjoint topological intervals.
By Lemma~\ref{lem:rvedge},
each interval contains exactly two restricted Voronoi vertices, its endpoints.
Both of them lie on the boundary of $F$.

Let $n_v$ be an outside-facing vector normal to $\Sigma$ at $v$.
We will see shortly that $F$ is not perpendicular to $n_v$.
Assign a direction to each edge of $F$ such that
the angle between $n_v$ and each directed edge is at most $90^\circ$,
as illustrated in Figure~\ref{chains}.
As $F$ is a convex polygon,
we thus partition its edges into two chains,
each monotone in the direction~$n_v$.
(An edge perpendicular to $n_v$---there are at most two---can
be assigned to either chain.)

\begin{figure}
\centerline{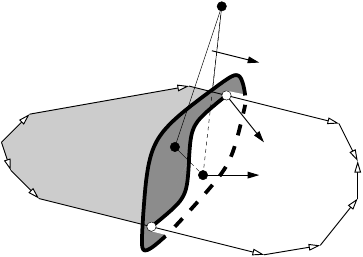}

\caption{\label{chains}
A Voronoi $2$-face $F$, with its edges partitioned into two chains
monotone in $n_v$.
}
\end{figure}

Consider a restricted Voronoi vertex $u$ on an edge $e$ of $F$ and
% let $n_u$ be an outside-facing vector normal to $\Sigma$ at~$u$, and
let $\tau$ be $u$'s dual restricted Delaunay triangle (or polygon).
Let $n_\tau$ be a vector that points in the same direction as~$e$, and
observe that $n_\tau$ is normal to $\tau$ and
$\angle (n_v, n_\tau) \leq 90^\circ$.
By Lemma~\ref{lem:trivertexnormalseps} or Lemma~\ref{lem:trivertexnormals},
$\angle (n_v, n_\tau) < 90^\circ$ and
$\angle (n_u, n_\tau) < 90^\circ$.
The former implies that $F$ is not perpendicular to $n_v$ (as promised).
The latter implies that as one walks along one of the directed chains
(monotone with respect to~$n_v$),
one might encounter a restricted Voronoi vertex where the chain passes
from inside $\Sigma$ to outside $\Sigma$, but not from outside to inside.
Therefore, there can be only one restricted Voronoi vertex on each chain, and
only two restricted Voronoi vertices on $F$.
It follows that $\Vor|_\Sigma v \cap \Vor|_\Sigma w$ is just
a single topological interval.
\end{proof}

\section{The restricted Delaunay triangulation is homeomorphic to
         ${\bf \Sigma}$}

I conclude with homeomorphism theorems for
$0.3245$-samples and $0.4132$-Voronoi samples.

% \begin{theorem}[Topological Ball Theorem \cite{edelsbrunner97}]
% \label{TBT}
% If $\Sigma$ and $V$ satisfy the closed ball property and
% the generic intersection property, then
% the underlying space of $\Del|_\Sigma S$ is
% homeomorphic\index{homeomorphism} to $\Sigma$.
% \end{theorem}

\begin{theorem}
\label{thm:homeosample1}
Let $V$ be a finite $\epsilon$-sample of $\Sigma$
for some $\epsilon \leq 0.3245$.
Suppose that no vertex of the three-dimensional Voronoi diagram $\Vor \, V$
lies on $\Sigma$.
Then the underlying space of the restricted Delaunay triangulation,
$|\Del|_\Sigma V|$, is homeomorphic to $\Sigma$.
\end{theorem}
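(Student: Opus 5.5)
The plan is to verify the hypotheses of the Topological Ball Theorem: closed ball properties A--D and generic intersection properties E and F. Then $|\Del|_\Sigma V|$ is homeomorphic to $\Sigma$. Since $0.3245 < \kappa/(\kappa+1) \doteq 0.3314 < \xi/(\xi+1) \doteq 0.4401$, every earlier result stated for $\epsilon$-samples applies.

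\textbf{Properties A, E, C, F and D.}
\begin{itemize}
\item Corollary~\ref{cor:homeocelleps} gives property~A: every restricted Voronoi cell is a topological closed disk.
\item The same corollary gives property~E (no tangential intersection with a $2$-face).
\item It also gives at least six sites on each connected component of $\Sigma$, hence at least three.
\item It shows that no restricted Voronoi cell meets the interior of another.
\item Corollary~\ref{cor:onlyonevertexeps} gives property~C: any intersection of three distinct restricted Voronoi cells has at most one point.
\item The same corollary gives property~F: $\Sigma$ is not tangent to the line $\ell_\tau$ there, hence not tangent to the Voronoi edge contained in $\ell_\tau$.
\item Property~D is the explicit hypothesis that no Voronoi vertex lies on $\Sigma$.
\end{itemize}
One care point is the Voronoi edge dual to a polygon with four or more cocircular sites. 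Its intersection with $\Sigma$ lies in the intersection of any three of those cells, so Corollary~\ref{cor:onlyonevertexeps} still applies.

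\textbf{Property B.} This is the remaining and main step, via Lemma~\ref{lem:homeoedge}. The facts above already establish all the hypotheses of Lemma~\ref{lem:rvedge}, plus properties D, E and F. It remains to check the first metric alternative of Lemma~\ref{lem:homeoedge}: for every restricted Voronoi vertex $u$ and every site $v$ with $u \in \Vor|_\Sigma v$, we need $|vu| \le 0.3245\,\lfs(u)$. This holds because $v$ is a nearest site to $u$ and $V$ is an $\epsilon$-sample, so $|vu| \le \epsilon\,\lfs(u) \le 0.3245\,\lfs(u)$.

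Lemma~\ref{lem:homeoedge} then says that for a Voronoi $2$-face $F$, $F \cap \Sigma$ is empty, a single interval, or a lone point. The lone-point outcome occurs only when $F$ is an edge. So property~B holds for polygons, and property~C covers edges. The subtle point is that the $0.3245$ threshold enters only through Lemma~\ref{lem:trivertexnormalseps}, which Lemma~\ref{lem:homeoedge} uses. That is why this step is the bottleneck, but here it requires only the verification above.

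\textbf{Assembly.} With properties A--D holding, properties E and F together with smoothness of $\Sigma$ give the generic intersection property, as noted in Section~\ref{rdts}. The Topological Ball Theorem of Edelsbrunner and Shah then yields the homeomorphism. Using the RDT, i.e.\ subdividing polygons into triangles, leaves the underlying space unchanged.
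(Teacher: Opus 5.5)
Your proposal is correct and follows the paper's proof essentially step for step. You use Corollary~\ref{cor:homeocelleps} for properties A and E (along with the six-sites and disjoint-interiors facts), Corollary~\ref{cor:onlyonevertexeps} for C and F, the hypothesis for D, Lemma~\ref{lem:homeoedge} for B, and then the Topological Ball Theorem; your explicit check that $|vu| \leq \epsilon\,\lfs(u) \leq 0.3245\,\lfs(u)$ (since $v$ is a nearest site to $u$) and your remark on cocircular sites fill in details the paper leaves implicit.
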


\begin{proof}
Corollary~\ref{cor:homeocelleps} guarantees closed ball property~A and
% every intersection of $\Sigma$ with a Voronoi cell in $\Vor \, V$
% (i.e., every restricted Voronoi cell in $\Vor|_\Sigma V$)
% is homeomorphic to a closed disk.
generic intersection property~E.
% $\Sigma$~does not intersect any Voronoi $2$-face tangentially.
Corollary~\ref{cor:onlyonevertexeps} guarantees closed ball property~C and
% Lemma~\ref{lem:homeoedge} guarantees closed ball properties~B and~C and
% every nonempty intersection of $\Sigma$ with an edge of $\Vor \, V$
% (i.e., every nonempty intersection of three or more restricted Voronoi cells)
% contains exactly one point (i.e., a restricted Voronoi vertex).
generic intersection property~F.
% $\Sigma$ does not intersect any Voronoi edge tangentially.
Closed ball property~D holds by assumption.
By Corollary~\ref{cor:homeocelleps}, no restricted Voronoi cell intersects
the interior of another restricted Voronoi cell and
every connected component of $\Sigma$ has at least six sites on it;
all the preconditions of Lemmas~\ref{lem:rvedge} and~\ref{lem:homeoedge} are
satisfied.
Lemma~\ref{lem:homeoedge} guarantees closed ball property~B.
% By Lemma~\ref{lem:homeoedge},
% every intersection of two restricted Voronoi cells is either
% a topological interval (i.e., a restricted Voronoi edge),
% a restricted Voronoi vertex, or empty.
% Therefore, every nonempty intersection of $\Sigma$ with
% a $2$-face $F \in \Vor \, V$ is a topological interval,
% proving closed ball property~B.
% (The intersection cannot be an isolated point because
% neither $F$ nor any edge of $F$ intersects $\Sigma$ tangentially.)
% no vertex of $\Vor \, V$ lies on $\Sigma$.
As $\Sigma$ and $V$ satisfy
the preconditions A--F of the Topological Ball Theorem~\cite{edelsbrunner97},
$|\Del|_\Sigma V|$ is homeomorphic to~$\Sigma$.
\end{proof}

\begin{theorem}
\label{thm:homeosigma1}
% Let $V$ be a nonempty, finite set of points on $\Sigma$.
Let $V$ be a finite $\epsilon$-Voronoi sample of $\Sigma$
for some $\epsilon \leq 0.4132$.
Suppose that no vertex of the three-dimensional Voronoi diagram $\Vor \, V$
lies on $\Sigma$.
% Suppose that for every site $v \in V$ and every point $x \in \Vor|_\Sigma v$,
% $|vx| < 0.4132 \, \lfs(v)$.
% $|vx| < \xi \, \lfs(v)$, where
% $\xi = \sqrt{(\sqrt{5} - 1) / 2} \doteq 0.786151$.
% Suppose moreover that for every site $v \in V$ and
% every point $u \in \Vor|_\Sigma v$ that lies in the intersection of
% three or more restricted Voronoi cells, $|vu| < 0.4132 \, \lfs(v)$.
Then the underlying space of the restricted Delaunay triangulation,
$|\Del|_\Sigma V|$, is homeomorphic to $\Sigma$.
\end{theorem}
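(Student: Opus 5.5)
The plan mirrors the proof of Theorem~\ref{thm:homeosample1}: verify properties A--F and then invoke the Topological Ball Theorem~\cite{edelsbrunner97}. Here the Voronoi-sample lemmas replace the $\epsilon$-sample corollaries. Since $V$ is an $\epsilon$-Voronoi sample with $\epsilon \leq 0.4132$, every site $v$ and every point $y \in \Vor|_\Sigma v$ satisfy $|vy| \leq 0.4132\,\lfs(v)$. This is strictly less than both $\xi \doteq 0.786151$ and $\kappa \doteq 0.495683$, so all the strict-inequality hypotheses of the earlier results hold.

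First, apply Theorem~\ref{thm:homeocell} to each site. Every restricted Voronoi cell is a topological closed disk, which gives closed ball property~A. Lemma~\ref{lem:raybisector} applies at every point $x \in \Vor|_\Sigma v \cap \Vor|_\Sigma w$ and gives generic intersection property~E. To see this, note that the ray $a \subset T_x\Sigma$ lies strictly on one side of the bisector $\Lambda$, so $T_x\Sigma \neq \Lambda$ and no Voronoi $2$-face is tangent to $\Sigma$. Lemma~\ref{lem:pathsincell}, via the argument in the proof of Corollary~\ref{cor:homeocelleps}, shows that no cell meets the interior of another. Lemma~\ref{lem:sixsites}, applied with $\epsilon \leq 0.4132 < \xi$, gives at least six sites on each component of~$\Sigma$.

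Second, for three distinct sites, label them so that $v$ is at the largest plane angle of $\triangle vv'v''$. Every point $y$ of $f = \Vor|_\Sigma v \cap \Vor|_\Sigma v' \cap \Vor|_\Sigma v''$ lies in $\Vor|_\Sigma v$, so $|vy| \leq 0.4132\,\lfs(v) < \kappa\,\lfs(v)$. Lemma~\ref{lem:onlyonevertex} then gives closed ball property~C and generic intersection property~F. Closed ball property~D holds by hypothesis. With these, every precondition of Lemma~\ref{lem:rvedge} holds.

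Third, verify the hypotheses of Lemma~\ref{lem:homeoedge} using its second alternative. Let $u$ be a restricted Voronoi vertex and $v$ any site with $u \in \Vor|_\Sigma v$. Then $|vu| \leq 0.4132\,\lfs(v)$ directly from the Voronoi-sample condition. This is exactly what Lemma~\ref{lem:trivertexnormals} needs, both for $v$ and for the largest-angle vertex $w$, since $u$ lies in every one of those sites' cells. Lemma~\ref{lem:homeoedge} then yields closed ball property~B, and the Topological Ball Theorem completes the proof.

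The only delicate point I expect is the quantifier bookkeeping. Lemma~\ref{lem:trivertexnormals} is stated for an arbitrary labelling of the dual triangle, so I must check that the bound holds for every choice of $v$ and for $w$. It does, because the sample condition is uniform over all sites whose cells contain~$u$. I must also confirm that Lemma~\ref{lem:sixsites} needs only $\epsilon < \xi$, which it does. The rest is direct assembly.
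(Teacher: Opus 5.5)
Your proposal is correct and takes essentially the same approach as the paper. The paper proves Theorem~\ref{thm:homeosigma1} by rerunning the proof of Theorem~\ref{thm:homeosample1}, replacing Corollary~\ref{cor:homeocelleps} with Theorem~\ref{thm:homeocell} and Lemmas~\ref{lem:raybisector}, \ref{lem:sixsites}, \ref{lem:pathsincell}, and replacing Corollary~\ref{cor:onlyonevertexeps} with Lemma~\ref{lem:onlyonevertex}; your explicit check that the Voronoi-sample bound supplies the second alternative of Lemma~\ref{lem:homeoedge} (for every site containing $u$, including the largest-angle vertex $w$) is bookkeeping the paper leaves implicit.
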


\begin{proof}
Identical to the proof of Theorem~\ref{thm:homeosample1}, except that
Corollary~\ref{cor:homeocelleps} is replaced by Theorem~\ref{thm:homeocell} and
Lemmas~\ref{lem:raybisector}, \ref{lem:sixsites}, and~\ref{lem:pathsincell}; and
Corollary~\ref{cor:onlyonevertexeps} is replaced by
Lemma~\ref{lem:onlyonevertex}.
% Corollary~\ref{cor:homeoedge} is replaced by
% Lemma~\ref{lem:homeoedge}.
\end{proof}

\section{Restricted Voronoi cells in higher dimensions}
\label{higherd}

This section shows how to generalize the claim that
restricted Voronoi cells are topological balls with star-shaped projections
to smooth, closed manifolds of arbitrary dimension embedded
in arbitrary-dimensional spaces.
The sampling constants do not deteriorate.
% , except that our sampling requirement uses a strict inequality
% ($|vx| < \xi \, \lfs(v)$ instead of $|vx| \leq \xi \, \lfs(v)$).

Consider the restricted Voronoi diagram of
a smooth, closed $k$-manifold $\Sigma$ embedded
in the Euclidean space $\R^d$.
The {\em codimension} is $d - k$.
Every point $p \in \Sigma$ has a $k$-dimensional tangent space $T_p\Sigma$ and
a $(d - k)$-dimensional normal space $N_p\Sigma$, both passing through $p$.
Both $T_p\Sigma$ and $N_p\Sigma$ are {\em flats},
also known as {\em affine subspaces}.

Lemma~\ref{lem:abovebelow} needs the biggest changes to generalize
to higher codimensions; see
Lemmas~\ref{lem:pierce} and~\ref{lem:abovebelowhi} below.
Lemmas~\ref{lem:raybisectorhi} and~\ref{lem:neighborinjecthi} below generalize
Lemmas~\ref{lem:raybisector} and~\ref{lem:neighborinject}.
The remaining results in Section~\ref{2ball} generalize easily, with
no need to restate the proofs, although for some of them
a few word substitutions are needed.
In the proof of Lemma~\ref{lem:pathincell},
the plane $\Lambda$ becomes a hyperplane and
the line $N_v\Sigma$ becomes a normal space of dimension $d - k$.
In Theorem~\ref{thm:homeocell} and its proof,
replace the word ``disk'' with ``$k$-ball.''
In the proof of Lemma~\ref{lem:slicecontinuous},
the arcs $a^+$ and $a^-$ become spherical caps.
The proofs of Lemmas~\ref{lem:pathsincell}, \ref{lem:sixsites},
and~\ref{lem:ftl} require no changes.
At the end of this section,
Theorem~\ref{thm:homeocellhi} generalizes Theorem~\ref{thm:homeocell} and
Corollary~\ref{cor:homeocellepshi} generalizes Corollary~\ref{cor:homeocelleps}
(but we do not restate the proofs).

An {\em lfs-ball} at a point $p \in \Sigma$ is
an open $d$-ball with radius $\lfs(p)$ that is tangent to $\Sigma$ at $p$.
Every lfs-ball is a subset of an open medial ball, so
every lfs-ball is disjoint from $\Sigma$.
% (Note that in~$\R^d$,
% all medial balls and lfs-balls are $d$-balls by definition.)
Lfs-balls are the counterparts of $B$ and $B'$ in Figure~\ref{oopposite}.
If the codimension is $1$, there are two lfs-balls at $p$---one inside $\Sigma$
and one outside $\Sigma$.
If the codimension is $2$ or greater, there are infinitely many lfs-balls.

Let $\mathbb{B}$ be the (open) union of all the lfs-balls.
In codimension $1$, $\mathbb{B}$ is just two disjoint open balls
(as in Figure~\ref{oopposite}), but in codimension $2$ or greater,
the boundary of $\mathbb{B}$ is a torus with inner radius zero (a horn torus).
Starting from Figure~\ref{oopposite},
imagine the solid of revolution of $B$ around $T_p \Sigma$.
We call $\mathbb{B}$ (the interior, not its boundary)
the {\em solid torus} for $p$.
Topologically, $\mathbb{B}$ is
the product of a $(d - k - 1)$-sphere and an open $(k + 1)$-ball.
% Geometrically, $\mathbb{B}$ is
% the Minkowski sum of $\dot{\mathbb{B}}$ and an open $d$-ball.

Let $Q$ be the open $d$-ball with center $p$ and radius $\lfs(p)$.
The {\em torus disk} for $p$ is $\mathbb{D} = N_p\Sigma \cap Q =
\{ z \in N_p\Sigma : |pz| < \lfs(p) \}$,
the open $(d - k)$-ball on $N_p\Sigma$ with center $p$ and radius $\lfs(p)$.
(Here, ``disk'' does not mean it's $2$-dimensional;
it means it's less than $d$-dimensional.)

Let $\bar{\mathbb{D}}$ denote the closure of $\mathbb{D}$.
Let $\partial \mathbb{D} = \bar{\mathbb{D}} \setminus \mathbb{D}$.
Thus, $\partial \mathbb{D} = \{ z \in N_p\Sigma : |pz| = \lfs(p) \}$ is
the $(d - k - 1)$-sphere on $N_p\Sigma$ with center $p$ and radius $\lfs(p)$.
Observe that $\partial \mathbb{D}$ is
the set of the centers of all the lfs-balls.
Call $\partial \mathbb{D}$ the {\em torus skeleton} for $p$, because
it happens to be the inside component of the medial axis of
the horn torus bounding $\mathbb{B}$.
If the codimension is $1$, the torus skeleton consists of
just two points (the centers of the two lfs-balls).
Otherwise, the torus skeleton is
a more traditional circle or sphere (of dimension $d - k - 1$).

For every point $p \in \Sigma$, the significance of $p$'s solid torus is that
it is disjoint from $\Sigma$.
Given a site $v$,
the significance of $v$'s torus disk is that it is in the interior of
the $d$-dimensional Voronoi cell $\Vor \, v$.

% Lemma~\ref{lem:nvl} is a Normal Variation Lemma for
% manifolds of codimension~$1$.
% If the codimension is greater than $1$,
% we must use another Normal Variation Lemma with a weaker bound.

% \begin{lemma}[Normal Variation Lemma for Codimension $2$ and
%               Higher~\cite{khoury19}]
% \label{ptnormal2}
% Let $\Sigma \subset \mathbb{R}^d$ be
% a smooth, closed $k$-manifold for any $k < d$.
% Consider two points $p, q \in \Sigma$ and let $\delta = |pq| / \lfs(p)$.
% % Let $N_p\Sigma$ and $N_q\Sigma$ be the $(d - k)$-flats normal to $\Sigma$
% % at $p$ and $q$, respectively.
% If $\delta < \sqrt{\left( \sqrt{5} - 1 \right) / 2} \doteq 0.786151$, then
% $\angle (N_p\Sigma, N_q\Sigma) = \angle (T_p\Sigma, T_q\Sigma) \leq
% \eta_2(\delta)$ where
% \begin{equation}
% \eta_2(\delta) = \arccos
%   \sqrt{1 - \frac{\delta^2}{\sqrt{1 - \delta^2}}}
% \approx
% \delta + \frac{5}{12} \delta^3 + O(\delta^5).
% \label{codim2bound}
% \end{equation}
% \end{lemma}

\begin{lemma}
\label{lem:pierce}
Let $\Sigma \subset \R^d$ be a smooth, closed $k$-manifold.
Let $v \in \Sigma$ be a point and let $\mathbb{D}$ be its torus disk.
Let $z \in \R^d$ be any point.
Then there is a point $o$ on the torus skeleton $\partial \mathbb{D}$ such that
the line segment $oz$ intersects $\Sigma$.
\end{lemma}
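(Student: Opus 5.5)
The plan is to argue by contradiction with a mod-$2$ intersection (linking) argument. If $z \in \Sigma$, every segment $oz$ meets $\Sigma$ at $z$, so we may assume $z \notin \Sigma$. Suppose that for every $o \in \partial\mathbb{D}$ the segment $oz$ is disjoint from $\Sigma$. We will build a closed $(d-k)$-dimensional sphere mapped into $\R^d$ that meets $\Sigma$ in exactly one point, transversally. That is impossible for a closed $k$-manifold in $\R^d$.

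\emph{Step 1: $\bar{\mathbb{D}} \cap \Sigma = \{v\}$.} Take a point $p = v + t\,n \in \bar{\mathbb{D}}$, where $n$ is a unit vector in the direction of $N_v\Sigma$ and $0 < t \le \lfs(v)$. Then $p$ lies in the open lfs-ball centered at $v + \lfs(v)\,n$, because its distance to that center is $\lfs(v) - t < \lfs(v)$. This includes the case $t = \lfs(v)$, where $p$ is the center itself. Every lfs-ball is disjoint from $\Sigma$, so $p \notin \Sigma$. Moreover, $\bar{\mathbb{D}}$ lies in $N_v\Sigma$, which is complementary to $T_v\Sigma$, so $\bar{\mathbb{D}}$ meets $\Sigma$ transversally at $v$.

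\emph{Step 2: the cone.} Let $K = \bigcup_{o \in \partial\mathbb{D}} oz$. Parametrize it by $\partial\mathbb{D} \times [0,1] \to \R^d$, $(o,s) \mapsto (1-s)o + sz$, and collapse $\partial\mathbb{D}\times\{1\}$ to a point. This gives a continuous map of a closed $(d-k)$-disk whose boundary is $\partial\mathbb{D}$. By hypothesis its image $K$ is disjoint from $\Sigma$; the case $d-k=1$, where $\partial\mathbb{D}$ is two points and $K$ is two segments, is covered as well.

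\emph{Step 3: the sphere.} Gluing this disk to $\bar{\mathbb{D}}$ along $\partial\mathbb{D}$ gives a map $g : S^{d-k} \to \R^d$. By Steps 1 and 2, $g^{-1}(\Sigma)$ is a single point, mapped to $v$, and $g$ is transverse to $\Sigma$ there. Since $K$ and $\Sigma$ are compact and disjoint, they are a positive distance apart. So we can perturb $g$ to a smooth map, changing it only away from a neighborhood of $\bar{\mathbb{D}}$, without creating new intersections. The mod-$2$ intersection number $I_2(g,\Sigma)$ then equals $1$. On the other hand, $g$ is homotopic to a constant map in $\R^d$, and the constant can be chosen off the compact set $\Sigma$. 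Mod-$2$ intersection numbers with a closed (compact, boundaryless) manifold are homotopy invariant, so $I_2(g,\Sigma) = 0$, a contradiction.

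For codimension~$1$ the argument reduces to something elementary, which is worth giving as a sanity check. $\partial\mathbb{D} = \{o,o'\}$, and $o, o'$ lie in $B$ and $B'$, which lie in different components of $\R^3\setminus\Sigma$, exactly as in the proof of Lemma~\ref{lem:abovebelow}. The path $o \to z \to o'$ would connect them while avoiding $\Sigma$, which is impossible.

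The main obstacle is Step~3: invoking homotopy invariance of mod-$2$ intersection numbers rigorously. This needs the transversality perturbation, which must keep the cone part off $\Sigma$ (handled by compactness), and the fact that $\Sigma$ is closed, so no intersections escape to infinity or through a boundary during the homotopy. I would cite a standard differential topology text, e.g.\ Guillemin--Pollack, for this, rather than reprove it. Alternatively one can phrase it as the linking number mod~$2$ of the small normal sphere around $v$ with $\Sigma$ being $1$, while the cone shows that $\partial\mathbb{D}$ is null-homotopic in $\R^d \setminus \Sigma$.
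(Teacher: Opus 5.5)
Your proof is correct, but it is organized differently from the paper's.

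\textbf{The paper's route.} The paper first disposes of $z \in N_v\Sigma$ directly. It then slices $\Sigma$ with the $(d-k+1)$-flat $\Psi = \aff(N_v\Sigma \cup \{z\})$.
When $\Sigma$ and $\Psi$ meet transversally, the component $L$ of $\Sigma \cap \Psi$ through $v$ is a closed curve. It crosses the boundary $\Upsilon \cup \mathbb{D}$ of the convex cone (base $\bar{\mathbb{D}}$, apex $z$) at $v$, so it must cross that boundary again, and it can only do so on $\Upsilon$.
When they do not meet transversally, the paper perturbs $\Psi$ via the Thom Transversality Theorem, using a diffeomorphism fixing $\mathbb{D}$. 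It then derives a contradiction from the positive distance between $\Sigma$ and $\Upsilon$.

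\textbf{Your route.} You never slice. Instead you close up $\bar{\mathbb{D}}$ with the cone $K$ into a $(d-k)$-sphere in $\R^d$ and count its intersections with $\Sigma$ mod~$2$.

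\textbf{Comparison.} Both proofs rest on the same parity principle: the paper uses Jordan--Brouwer separation inside $\Psi$, and you use the dual count in $\R^d$.
\begin{itemize}
\item Your version is uniform. It needs no case split on $z$ and no transversality perturbation, because the only intersection point is $v$. Transversality there is automatic since $\mathbb{D} \subset N_v\Sigma$. The cost is that everything rests on homotopy invariance of $I_2$ (or Alexander duality).
\item The paper's generic case is nearly elementary: $L \setminus \{v\}$ is connected and has points both inside and outside the cone. The cost is the extra perturbation machinery for the degenerate case.
\end{itemize}

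\textbf{A correction to Step~3.} You cannot leave $g$ unchanged on a neighborhood of all of $\bar{\mathbb{D}}$. The map $g$ has a crease along the equator $\partial\mathbb{D}$, where the cone meets the flat disk, and it is not smooth at the apex either.

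Instead, fix $g$ only on a small neighborhood $U$ of $g^{-1}(v)$, where $g$ is the smooth inclusion of $\mathbb{D}$. Smooth it elsewhere by less than the distance from $\Sigma$ to $g(S^{d-k} \setminus U)$. That distance is positive because $g(S^{d-k} \setminus U)$ is compact and disjoint from $\Sigma$. This uses Step~1 on $\bar{\mathbb{D}} \setminus \{v\}$, not just the separation of $K$ from $\Sigma$.
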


\begin{proof}
If $z \in N_v\Sigma$, then
let $o$ be the point on $\partial \mathbb{D}$ that is farthest from $z$.
The line segment $oz$ contains $v$ and the result follows.

Otherwise, let $\Psi$ be the affine hull of $N_v\Sigma \cup \{ z \}$;
$\Psi$ is a $(d - k + 1)$-flat.
%By using $\Psi$ to take a cross section of the ambient space $\R^d$,
%we eliminate all but one dimension of $v$'s tangent space;
%the surviving dimension accommodates $z$.
%The cross section of $v$'s solid torus, $\mathbb{B} \cap \Psi$, is
%a solid torus in $d - k + 1$ dimensions that has ``room'' only
%for a $1$-manifold to pass through its hole at $v$.
%In a sufficiently small neighborhood $N \subset \Sigma$ of $v$,
%the cross section $N \cap \Psi$ is a topological interval.
Consider first the case where $\Sigma$ and $\Psi$ intersect transversally.
Then $\Sigma \cap \Psi$ is a smooth, closed $1$-manifold.
Let $L$ (for ``loop'') be the connected component of $\Sigma \cap \Psi$ that
contains $v$.

Let $\Upsilon$ be the union of all the line segments that connect
$z$ to a point on $\partial \mathbb{D}$.
Thus $\Upsilon \cup \mathbb{D}$ is the boundary of
a cone with base $\bar{\mathbb{D}}$ and tip $z$.
Observe that this cone (the convex hull of $\Upsilon \cup \mathbb{D}$) is
a subset of $\Psi$ that has the same dimension as $\Psi$, so
the cone's boundary $\Upsilon \cup \mathbb{D}$ cuts
$\Psi$ into an inside piece and an outside piece.

At $v$, the loop~$L$ is perpendicular to $\mathbb{D}$
(because $T_vL \subseteq T_v\Sigma$ and $\mathbb{D} \subset N_v\Sigma$).
As $L \subset \Psi$ and $L$ is smooth,
$L$ passes from outside $\Upsilon \cup \mathbb{D}$ to inside at $v$.
As $\Sigma$~does not intersect any point on $\mathbb{D} \setminus \{ v \}$,
the loop $L$ must intersect $\Upsilon$ at some point~$q \in \Upsilon$.
Thus there is a point $o \in \partial \mathbb{D}$ such that $q \in oz$, and
the result follows.

Next, consider the case where
$\Sigma$ and $\Psi$ do not intersect transversally.
Suppose for the sake of contradiction that
there is no point $o \in \partial \mathbb{D}$ such that
$oz$ intersects~$\Sigma$; thus $\Sigma \cap \Upsilon = \emptyset$.
Let $\beta = \min \{ |pq| : p \in \Sigma$ and $q \in \Upsilon \}$.
As both $\Sigma$ and $\Upsilon$ are compact, $\beta$ is defined and positive.

By the Thom Transversality Theorem,
there is a $(d - k + 1)$-manifold $\Psi' \subset \R^d$ that is
an arbitrarily small perturbation of $\Psi$ (but not necessarily flat) such that
$\Sigma$ and $\Psi'$ intersect transversally.
More precisely, for every real $\delta > 0$,
there exists a diffeomorphism $\mu : \Psi \rightarrow \Psi'$
such that for every point $p \in \Psi$, $|p\mu(p)| < \delta$ and
$\Sigma$ and $\Psi'$ intersect transversally in $\R^d$.
Then $\Sigma \cap \Psi'$ is a smooth, closed $1$-manifold.
Moreover, as $\Psi$ intersects $\Sigma$ transversally at $v$ and
at all points sufficiently close to $v$ (because $\Sigma$ is smooth), and
$\Sigma$ does not intersect $\mathbb{D}$ anywhere except $v$,
we can choose $\mu$ so that $\mu(q) = q$ for all points $q \in \mathbb{D}$.
Thus $\mu(v) = v$, $\mu(\mathbb{D}) = \mathbb{D}$, and $v \in \Psi'$.
Consider such a diffeomorphism with $\delta < \beta$.
Let $L$ be the connected component of $\Sigma \cap \Psi'$ that contains $v$.

As $L \subset \Psi'$, $L$ is smooth, and
$T_vL \not\subseteq N_v\Sigma$ at $v$,
$L$ passes from inside $\mu(\Upsilon \cup \mathbb{D})$ to outside at $v$.
As $\Sigma$~does not intersect any point on $\mathbb{D} \setminus \{ v \}$,
the loop $L$ must intersect $\mu(\Upsilon)$ at some point~$q \in \mu(\Upsilon)$.
By assumption, $|q\mu^{-1}(q)| < \delta < \beta$.
But $q \in \Sigma$ and $\mu^{-1}(q) \in \Upsilon$, contradicting the fact that
$\beta = \min \{ |pq| : p \in \Sigma$ and $q \in \Upsilon \}$.
\end{proof}

\begin{lemma}
\label{lem:abovebelowhi}
Let $\Sigma \subset \R^d$ be a smooth, closed $k$-manifold.
Consider two points $v, x \in \Sigma$ such that $|vx| < \xi \, \lfs(v)$,
where $\xi = \sqrt{(\sqrt{5} - 1) / 2} \doteq 0.786151$.
Let $\mathbb{D}$ be the torus disk for $v$.
Then $T_x\Sigma \cap \mathbb{D}$ contains exactly one point.
\end{lemma}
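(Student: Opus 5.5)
We need to show two things: that $T_x\Sigma \cap \mathbb{D}$ is nonempty, and that it contains at most one point. Both parts follow the template of Lemma~\ref{lem:abovebelow}, with the two centers $o, o'$ replaced by the torus skeleton $\partial \mathbb{D}$.

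\emph{Uniqueness.} Suppose $T_x\Sigma \cap \mathbb{D}$ contains two distinct points. Both lie in $T_x\Sigma \cap N_v\Sigma$, which is a flat, so the whole line through them lies in that intersection. That line passes through the open ball $\mathbb{D}$, so it meets $\partial \mathbb{D}$ in two antipodal-like points $o_1, o_2$, and the segment $o_1 o_2$ lies in $T_x\Sigma$ and passes through $\mathbb{D}$. Now I repeat the computation of Lemma~\ref{lem:abovebelow}. Work inside the $2$- or $3$-flat spanned by $v$, $x$ and the relevant skeleton points. Take any medial ball $B_m$ at $x$ with center $m$; then $xm \perp T_x\Sigma$, so $\angle o_i x m = 90^\circ$, and $\angle v x m \le 90^\circ$ holds for an appropriate choice of $m$ (replace $m$ by its reflection in the normal space if needed).

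There is a subtlety here. In codimension one, the key step was that $B_m$ is disjoint from $B$ (one of the two lfs-balls) because $\Sigma$ separates them. In higher codimension, separation is unavailable. Instead I would use Lemma~\ref{lem:pierce}-style reasoning, or simply that lfs-balls and medial balls are both disjoint from $\Sigma$, to argue that for some skeleton point $o$, the lfs-ball centered at $o$ is disjoint from $B_m$. This needs care, and I would more likely reduce to a cleaner argument, described next.

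\emph{Existence.} I would instead handle existence and uniqueness together via a direct inequality. For any $o \in \partial\mathbb{D}$, the lfs-ball at $o$ excludes $x$, so $|ox| \ge \lfs(v)$. The point $x$ also satisfies the identity, averaged over the skeleton sphere, $\mathrm{avg}_o |ox|^2 = |vx|^2 + \lfs(v)^2$. Existence of an intersection means the affine projection of $T_x\Sigma$ into $N_v\Sigma$ reaches within $\mathbb{D}$. I would argue existence by contradiction: if $T_x\Sigma$ misses $\mathbb{D}$, then, because $\mathbb{D}$ is convex, a hyperplane through $x$ containing $T_x\Sigma$ separates $\mathbb{D}$ weakly. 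Its normal direction lies in $N_x\Sigma$, and the medial ball at $x$ in that normal direction on $v$'s side gives $\angle vxm \le 90^\circ$ and $\angle oxm \le 90^\circ$ for all $o \in \partial\mathbb{D}$. By Lemma~\ref{lem:pierce} applied with $z = m$, some segment $om$ meets $\Sigma$, which should force $B_m$ and the lfs-ball at $o$ to be disjoint (as otherwise $\Sigma$ crosses their union's interior). That gives $|om| \ge \lfs(v) + |mx|$. The same chain of inequalities as in Lemma~\ref{lem:abovebelow} then yields $|ox|^2 \ge \lfs(v)^2 + 2\lfs(v)\sqrt{\lfs(v)^2-|vx|^2}$.

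For the antipode $o'$ of $o$ in the plane through $v$ and $x$'s normal component, we have $|o'x| \ge \lfs(v)$ and $|ox|^2 + |o'x|^2 = 2|vx|^2 + 2\lfs(v)^2$. Together these give $\xi^4 + \xi^2 - 1 > 0$, a contradiction. For uniqueness, the two-point case puts a full chord of $\mathbb{D}$ in $T_x\Sigma$, and the identical computation applies with $o_1, o_2$.

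The main obstacle is the disjointness step $|om| \ge \lfs(v) + |mx|$, since codimension $>1$ destroys the inside/outside separation. Lemma~\ref{lem:pierce} is clearly designed for exactly this step. I would try to show that if the lfs-ball at every $o$ met $B_m$, then the union of those balls with $B_m$ contains all segments $om$, which together with $\mathbb{D}$ avoid $\Sigma$, contradicting Lemma~\ref{lem:pierce}.
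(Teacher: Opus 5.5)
Your treatment of the case $T_x\Sigma \cap \mathbb{D} = \emptyset$ is essentially the paper's argument:
\begin{itemize}
\item take a hyperplane $\Pi \supseteq T_x\Sigma$ with $\mathbb{D}$ on one closed side, and the medial ball $B_m$ at $x$ tangent to $\Pi$ on that side;
\item apply Lemma~\ref{lem:pierce} with $z = m$ to get an $o \in \partial\mathbb{D}$ whose lfs-ball is disjoint from $B_m$ (if they overlapped, $om$ would lie in their union, which misses $\Sigma$);
\item finish with the antipode $o' = 2v - o$ and the parallelogram identity.
\end{itemize}
You should also say why $B_m$ is bounded, since Lemma~\ref{lem:pierce} needs a center. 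Here this is immediate: $v \in \mathbb{D}$ lies off $\Pi$, so a halfspace $B_m$ would contain $v$.

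The gap is the case where $T_x\Sigma \cap \mathbb{D}$ has two or more points. Your claim that ``the identical computation applies with $o_1, o_2$'' fails for two reasons. First, Lemma~\ref{lem:pierce} gives disjointness from $B_m$ only for some skeleton point $o$ that you do not control. The inequality $|o_i m| \geq \lfs(v) + |mx|$ for the chord endpoints is essentially the conclusion you want, not something available. Second, the identity $|ox|^2 + |o'x|^2 = 2|vx|^2 + 2\lfs(v)^2$ needs $v$ to be the midpoint of $oo'$, but your chord passes through $v$ only if $v \in T_x\Sigma$.

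Nor can you ``take any medial ball at $x$'' in codimension $2$ or more. Unless the direction of $m - x$ is orthogonal to $N_v\Sigma$, the hyperplane through $x$ orthogonal to $m - x$ contains the chord but not $N_v\Sigma$. It therefore cuts through the open disk $\mathbb{D}$, and the point $o$ from Lemma~\ref{lem:pierce} may have $\angle oxm > 90^\circ$.

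The missing idea is a dimension count. In this case $T_x\Sigma \cap N_v\Sigma$ is a flat of dimension at least $1$, so $T_x\Sigma$ and $N_v\Sigma$ together lie in a single hyperplane $\Pi$. Choose $B_m$ tangent to $\Pi$ at $x$. Then $\bar{\mathbb{D}} \subset \Pi$, so every skeleton point, in particular $o$ and $2v - o$, makes a right angle at $x$ with $m$, and your first-case argument runs verbatim.

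The one new check is boundedness of $B_m$, which is no longer automatic because now $v \in \Pi$. If $B_m$ were an open halfspace bounded by $\Pi$, then $\Sigma$ would touch $\Pi$ at $v$ from one side. That forces $T_v\Sigma \subseteq \Pi$ along with $N_v\Sigma \subseteq \Pi$, which is impossible.

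This is how the paper handles both cases at once: each yields a closed halfspace containing $\mathbb{D}$ whose boundary contains $T_x\Sigma$, and the rest of the proof is shared.
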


\begin{proof}
% As $|vx| < \xi \, \lfs(v)$,
% by the Normal Variation Lemma (Lemma~\ref{ptnormal2}),
% $\angle (N_v\Sigma, N_x\Sigma) < 90^\circ$; so
% $T_x\Sigma$ does not intersect $N_v\Sigma$ at more than one point.
% As $\mathbb{D} \subset N_v\Sigma$,
% $T_x\Sigma \cap \mathbb{D}$ contains at most one point.
%
Suppose for the sake of contradiction that
$T_x\Sigma \cap \mathbb{D}$ does not contain exactly one point.
Then $x \neq v$, as $T_v\Sigma \cap \mathbb{D} = \{ v \}$.
There are two possibilities:  either $T_x\Sigma \cap \mathbb{D} = \emptyset$, or
$T_x\Sigma \cap \mathbb{D}$ contains more than one point.
% Then either $T_x\Sigma$ does not intersect $\mathbb{D}$ or
% $T_x\Sigma \cap \mathbb{D}$ contains multiple points; in the latter case,
% $\angle (T_v\Sigma, T_x\Sigma) = 90^\circ$.

We claim that there exists a closed halfspace $\Pi_v$ such that
$\mathbb{D} \subset \Pi_v$ and
the boundary $\Pi$ of $\Pi_v$ (a hyperplane) includes $T_x\Sigma$.
In the case where $T_x\Sigma \cap \mathbb{D} = \emptyset$,
there exists a hyperplane $\Pi \supseteq T_x\Sigma$ that
does not intersect $\mathbb{D}$,
because $\mathbb{D}$ is convex and $T_x\Sigma$ is a flat; and
we can choose $\Pi_v$ to be
the halfspace bounded by $\Pi$ that includes $\mathbb{D}$.
In case where $T_x\Sigma \cap \mathbb{D}$ contains more than one point,
$T_x\Sigma \cap N_v\Sigma$ contains more than one point
(as $\mathbb{D} \subset N_v\Sigma$) and hence is
a flat of dimension~$1$ or greater.
Therefore, there exists a hyperplane $\Pi$ that includes both
$T_x\Sigma$ and $N_v\Sigma$.
Then $\mathbb{D} \subset N_v\Sigma \subseteq \Pi$ and
we can choose $\Pi_v$ to be either closed halfspace bounded by $\Pi$.
In both cases, $T_x\Sigma \subseteq \Pi$, $v \in \mathbb{D} \subset \Pi_v$, and
$\partial \mathbb{D} \subset \Pi_v$.
(If the codimension is~$1$, then
$\Pi = T_x\Sigma$ and $\mathbb{D}$ is an open line segment.)

Among the open medial balls tangent to $\Sigma$ at $x$,
there are two tangent to $\Pi$ at~$x$; among those two,
let $B_m$ be the one with center $m \in \Pi_v$.
We claim that $B_m$ is bounded; it cannot be an open halfspace.
If $B_m$ is an open halfspace then its boundary is~$\Pi$ and
its closure is~$\Pi_v$; and
as $B_m \cap \Sigma = \emptyset$ and $v \in \Sigma \cap \Pi_v$, we have
$v \in \Pi$ and $T_v\Sigma \subseteq \Pi$.
Moreover, the facts that $v$~is the center of the disk $\mathbb{D}$,
$\mathbb{D} \subset \Pi_v$, and $v \in \Pi$ imply that
$\mathbb{D} \subset \Pi$ and thus $N_v\Sigma \subseteq \Pi$.
But it cannot be true that both $T_v\Sigma \subseteq \Pi$ and
$N_v\Sigma \subseteq \Pi$ (because that implies that $\Pi = \R^d$), so
$B_m$~is bounded as claimed.
Thus $B_m$ has a center $m$.
Observe that $xm$ is perpendicular to $\Pi$ and
$m$ lies on the medial axis of $\Sigma$.
% As $v \in \Pi_v$ but $v \not\in \Pi$, $\angle vxm < 90^\circ$.
% As $v \in \mathbb{D} \subset \Pi_v$, $\angle vxm \leq 90^\circ$.
% (Note that although some medial balls can be open halfspaces,
% $B_m$ cannot because $\angle vxm < 90^\circ$ and $v \not\in B_m$.)

By Lemma~\ref{lem:pierce},
there is a point $o$ on the torus skeleton $\partial \mathbb{D}$ such that
the line segment $om$ intersects $\Sigma$.
Let $B$ be the lfs-ball at $v$ with center $o$.
Then $B$ is disjoint from $B_m$, as neither $B$ nor $B_m$ intersects $\Sigma$.
%Intuitively, this is true because if $B_m$ intersected {\em every} lfs-ball,
%then it would block the passage of $\Sigma$ through
%the hole in the solid torus $\mathbb{B}$,
%making it impossible for a closed $k$-manifold to pass through
%$\mathbb{B}$ at $v$ and rejoin itself.
%See the footnote for an alternative (but longer) explanation.\footnote{
%}

As $v, m \in \Pi_v$, $o \in \partial \mathbb{D} \subset \Pi_v$, $x \in \Pi$, and
$xm$ is orthogonal to $\Pi$,
it follows that $\angle vxm \leq 90^\circ$ and $\angle oxm \leq 90^\circ$.
By Pythagoras' Theorem, $|ox|^2 + |mx|^2 \geq |om|^2$ and
$|vx|^2 + |mx|^2 \geq |vm|^2$.
As $m$ lies on the medial axis, $|vm| \geq \lfs(v)$
(by the definition of $\lfs$) and thus $|vx|^2 + |mx|^2 \geq \lfs(v)^2$.

The radii of $B$ and $B_m$ are $\lfs(v)$ and $|mx|$ respectively.
As $B$ and $B_m$ are disjoint, $|om| \geq \lfs(v) + |mx|$.
Combining this with the inequality $|ox|^2 + |mx|^2 \geq |om|^2$ gives
% $|ox|^2 + |mx|^2 \geq $(|mx| + \lfs(v))^2$
$|ox|^2 \geq \lfs(v)^2 + 2 \, \lfs(v) \, |mx|$.
Combining this with the inequality $|mx|^2 \geq \lfs(v)^2 - |vx|^2$ gives
$|ox|^2 \geq \lfs(v)^2 + 2 \, \lfs(v) \sqrt{\lfs(v)^2 - |vx|^2}$.

Let $o' = 2v - o$ be the point such that $v$ is the midpoint of $oo'$.
Then $o'$ lies on~$\partial \mathbb{D}$, opposite~$o$.
Let $B'$ be the lfs-ball touching $v$ with center $o'$.
Let ${\cal L}_v$ be the line through $o$, $v$, and $o'$.
Create a coordinate system with $v = (0, 0, \ldots, 0)$ and
$x = (x_h, x_v, 0, \ldots, 0)$ such that
$x_h$~is the distance from $x$ to ${\cal L}_v$
(the horizontal axis in Figure~\ref{oopposite}) and
$x_v$ is the coordinate of~$x$ in the direction parallel to ${\cal L}_v$
(the vertical axis in Figure~\ref{oopposite}).
Then
\[
|ox|^2 + |o'x|^2 =
x_h^2 + (x_v - \lfs(v))^2 + x_h^2 + (x_v + \lfs(v))^2 =
2 x_h^2 + 2 x_v^2 + 2 \, \lfs(v)^2 = 2 \, |vx|^2 + 2 \, \lfs(v)^2.
\]
Rewrite this as $|vx|^2 = (|ox|^2 + |o'x|^2 - 2 \, \lfs(v)^2) / 2$.
As $x \not\in B'$, $|o'x|^2 \geq \lfs(v)^2$.
Combining these with the inequality
$|ox|^2 \geq \lfs(v)^2 + 2 \, \lfs(v) \sqrt{\lfs(v)^2 - |vx|^2}$ gives
$|vx|^2 \geq \lfs(v) \sqrt{\lfs(v)^2 - |vx|^2}$.

As $|vx| < \xi \, \lfs(v)$,
we have $\xi^2 > |vx|^2 / \lfs(v)^2 \geq \sqrt{1 - |vx|^2 / \lfs(v)^2}
> \sqrt{1 - \xi^2}$, which is equivalent to
$\xi^4 + \xi^2 - 1 > 0$, hence $\xi > \sqrt{(\sqrt{5} - 1) / 2}$.
The result follows by contradiction.
\end{proof}

The next lemma generalizes Lemma~\ref{lem:raybisector} to higher dimensions.

\begin{lemma}
\label{lem:raybisectorhi}
Consider two distinct sites $v, w \in V$ and a point
$x \in \Vor|_\Sigma v \cap \Vor|_\Sigma w$.
Suppose that $|vx| < \xi \, \lfs(v)$
where $\xi = \sqrt{(\sqrt{5} - 1) / 2} \doteq 0.786151$.
Let $\Lambda$ be the hyperplane that orthogonally bisects the line segment $vw$
(and thus $x \in \Lambda$).
By Lemma~\ref{lem:abovebelowhi},
$T_x\Sigma$ intersects $\mathbb{D}$ at a lone point~$t$.
Let $a$ be the open ray $\vec{xt}$, and observe that $a \subset T_x\Sigma$.

Then $v$ and $a$ are strictly on the same side of~$\Lambda$.
\end{lemma}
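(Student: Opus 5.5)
I will mirror the proof of Lemma~\ref{lem:raybisector}, with the segment $oo'$ replaced by the torus disk $\mathbb{D}$. The plan is to show two things: that $\bar{\mathbb{D}}$ lies in the closed halfspace of $\Lambda$ on $v$'s side, and that the open disk $\mathbb{D}$ lies strictly on that side. The ray $a$ starts at $x \in \Lambda$ and passes through $t \in \mathbb{D}$, so it will then lie strictly on $v$'s side of $\Lambda$.

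First I show that every point $o$ of the torus skeleton $\partial \mathbb{D}$ satisfies $|vo| \leq |wo|$. The point $o$ is the center of an lfs-ball at $v$, which is an open $d$-ball of radius $\lfs(v) = |vo|$. Every lfs-ball is disjoint from $\Sigma$, and $w \in \Sigma$, so $w$ is not in that ball, giving $|wo| \geq \lfs(v) = |vo|$. Hence each $o \in \partial \mathbb{D}$ lies on $\Lambda$ or strictly on $v$'s side of it. The closed halfspace on $v$'s side is convex, so it contains the convex hull of $\partial \mathbb{D}$, which is $\bar{\mathbb{D}}$. (This is the higher-dimensional version of the observation $oo' \subset \Vor\,v$.)

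Next I upgrade this to strictness on $\mathbb{D}$. The center $v$ of $\mathbb{D}$ is strictly on $v$'s side of $\Lambda$, because $|vv| = 0 < |vw|$. Take any point $p \in \mathbb{D}$. If $p = v$ we are done. Otherwise, the open disk lets me extend the segment from $p$ through $v$ slightly to a point $p' \in \bar{\mathbb{D}}$ with $v$ in the relative interior of $pp'$. Now consider the affine function measuring signed distance to $\Lambda$, positive on $v$'s side. It is nonnegative at $p$ and $p'$ and positive at $v$. If it vanished at $p$, linearity along the segment would make its value at $p'$ negative, a contradiction. So the whole open set $\mathbb{D}$ is strictly on $v$'s side; in particular $\Lambda$ does not meet $\mathbb{D}$. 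This step replaces the paper's remark that $\Lambda$ cannot meet the relative interior of $oo'$.

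Finally, $t \in \mathbb{D}$ by Lemma~\ref{lem:abovebelowhi}, and $t \neq x$ because $x \in \Lambda$ while $t \notin \Lambda$. So the open ray $a = \vec{xt}$ is well defined. Its origin lies on $\Lambda$ and it contains a point strictly on $v$'s side. Along the ray the signed distance is linear, zero at $x$ and positive at $t$, so it is positive at every point of the open ray. Therefore $v$ and $a$ are strictly on the same side of $\Lambda$.

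The main obstacle is the strictness step, since $\partial \mathbb{D}$ itself may touch $\Lambda$. The segment-through-the-center argument handles it. I would double-check that Lemma~\ref{lem:abovebelowhi} places $t$ in the open disk $\mathbb{D}$ and not merely in its closure $\bar{\mathbb{D}}$.
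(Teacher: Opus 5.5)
Your overall route is the paper's: the lfs-ball centers show that $\partial\mathbb{D}$, and hence its convex hull $\bar{\mathbb{D}}$, lies in the closed halfspace on $v$'s side of $\Lambda$. Then $\Lambda$ misses the open disk $\mathbb{D}$, and the ray from $x \in \Lambda$ through $t \in \mathbb{D}$ is strictly on $v$'s side. Your first and last steps are fine. Lemma~\ref{lem:abovebelowhi} does place $t$ in the open disk $\mathbb{D}$, which is what the last step needs.

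\textbf{The strictness step fails as written.} You choose $p'$ so that $v$ lies in the relative interior of $pp'$. Let $g$ be the signed distance to $\Lambda$. Then $g(p)=0$ and $g(v)>0$ force $g(p') > g(v) > 0$, because the affine function keeps increasing past $v$. So there is no contradiction.

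A sanity check shows this argument cannot be right. It never uses that $p$ is in the \emph{open} disk. With $p' = 2v-p$ it applies verbatim to every $p \in \partial\mathbb{D}$, so it would ``prove'' that $\partial\mathbb{D}$ avoids $\Lambda$. As you note yourself, that is false in general. For example, if $\Sigma$ is a round sphere, its center is a point of $\partial\mathbb{D}$ that lies on every bisector $\Lambda$.

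The fix is to extend in the other direction. Because $p$ is in the relatively open disk $\mathbb{D}$, you can continue the segment from $v$ through $p$ slightly past $p$ to a point $p'' \in \mathbb{D}$. Then $p$ lies in the relative interior of $vp''$, so $g(p)=0<g(v)$ gives $g(p'')<0$. This contradicts $\bar{\mathbb{D}} \subset \{g \geq 0\}$. Equivalently, an affine function that is nonnegative on a convex set and vanishes at a relative interior point vanishes on the whole set, which contradicts $g(v)>0$.

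With that correction your proof is complete and coincides with the paper's. The paper simply asserts this step from the facts that $\mathbb{D}$ is an open disk, $v \in \mathbb{D}$, and $v \notin \Lambda$.
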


\begin{proof}
As $\mathbb{B}$ is disjoint from $\Sigma$, $\mathbb{B}$ contains no sites.
Hence, for every point $o \in \partial \mathbb{D}$, $|vo| \leq |wo|$.
Therefore, every point on $\partial \mathbb{D}$ lies
either on $\Lambda$ or on the same side of $\Lambda$ as~$v$.
(More broadly, $\partial \mathbb{D} \subset \Vor \, v$.)
As $\mathbb{D}$ is an open disk, $v \in \mathbb{D}$, and $v \not\in \Lambda$,
$\Lambda$ does not intersect $\mathbb{D}$.
(More broadly, $\mathbb{D}$ is a subset of the interior of $\Vor \, v$.)
By contrast, $a$~does intersect $\mathbb{D}$ (at $t$).
Recall that $a$'s~origin $x$ lies on $\Lambda$.
Therefore, the open ray $a$ is strictly on the same side of $\Lambda$ as
$\mathbb{D}$, which contains $v$.
\end{proof}

The next lemma generalizes Lemma~\ref{lem:neighborinject}.

\begin{lemma}
\label{lem:neighborinjecthi}
Let $v, x \in \Sigma$ be two points such that $|vx| < \xi \, \lfs(v)$.
Let $\varphi$ be the function that orthogonally projects points of $\R^d$ onto
$v$'s tangent space~$T_v\Sigma$.
There exists an open neighborhood
$N \subset \Sigma$ of $x$ such that $\varphi|_N$ is
a homeomorphism from $N$ to its image $\varphi(N)$ on~$T_v\Sigma$.
\end{lemma}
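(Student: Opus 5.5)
The proof mirrors that of Lemma~\ref{lem:neighborinject}: we reduce local injectivity of $\varphi$ near $x$ to a transversality condition between $T_x\Sigma$ and $N_v\Sigma$, which Lemma~\ref{lem:abovebelowhi} supplies.

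First, $\varphi$ is an affine map whose kernel directions are exactly those of $N_v\Sigma$. So the differential of $\varphi|_\Sigma$ at $x$ is the restriction of the linear projection onto (the direction space of) $T_v\Sigma$ to the $k$-dimensional direction space of $T_x\Sigma$. This linear map is injective, and hence an isomorphism between $k$-dimensional spaces, if and only if no nonzero direction of $T_x\Sigma$ is parallel to $N_v\Sigma$. Equivalently, the flats $T_x\Sigma$ and $N_v\Sigma$ meet in at most one point.

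By Lemma~\ref{lem:abovebelowhi}, since $|vx| < \xi \, \lfs(v)$, $T_x\Sigma \cap \mathbb{D}$ is exactly one point $t$. Suppose for contradiction that $T_x\Sigma$ contained a nonzero direction $u$ parallel to $N_v\Sigma$. Then the line $t + \R u$ lies in both $T_x\Sigma$ and $N_v\Sigma$. Because $\mathbb{D}$ is an open ball in $N_v\Sigma$ containing $t$, this line would meet $\mathbb{D}$ in a nondegenerate open segment. That contradicts $T_x\Sigma \cap \mathbb{D}$ being a single point. Hence $d(\varphi|_\Sigma)_x$ is a linear isomorphism.

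Since $\Sigma$ is smooth ($C^{1}$ suffices here, and $C^{1,1}$ is assumed), the inverse function theorem applied to $\varphi|_\Sigma$ in a local chart at $x$ gives an open neighborhood $N \subset \Sigma$ of $x$ with the following properties: $\varphi|_N$ is injective, $\varphi(N)$ is open in $T_v\Sigma$, and $\varphi|_N$ has a continuous inverse. So $\varphi|_N$ is a homeomorphism onto its image. For a more elementary route, one can shrink $N$ to a compact neighborhood on which $\varphi$ is injective and then invoke the compact-injective-continuous criterion~\cite{sutherland09}, as the paper does elsewhere.

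The only nontrivial step is translating ``$T_x\Sigma \cap \mathbb{D}$ is a single point'' into ``$T_x\Sigma$ contains no normal direction at $v$.'' In codimension one this is just non-perpendicularity of the tangent planes. In higher codimension the openness of $\mathbb{D}$ inside $N_v\Sigma$ is what makes the argument work, so that step needs care. The rest is the standard inverse function theorem.
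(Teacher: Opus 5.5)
Your proof is correct and takes the paper's route: Lemma~\ref{lem:abovebelowhi} rules out any direction of $T_x\Sigma$ parallel to $N_v\Sigma$, and smoothness then gives a local homeomorphism, which you make precise with the inverse function theorem. You also spell out a step the paper compresses, namely passing from ``$T_x\Sigma \cap \mathbb{D}$ is one point'' to ``$T_x\Sigma \cap N_v\Sigma$ is one point'' using the openness of $\mathbb{D}$ in $N_v\Sigma$ and the common point $t$. (Your intermediate claim ``equivalently, the flats meet in at most one point'' needs that common point, since disjoint flats can share a direction, but your actual argument does use $t$, so nothing breaks.)
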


\begin{proof}
As $|vx| < \xi \, \lfs(v)$, by Lemma~\ref{lem:abovebelowhi},
% (or Lemma~\ref{ptnormal2}),
% $\angle (N_v\Sigma, N_x\Sigma) \neq 90^\circ$.
$T_x\Sigma \cap N_v\Sigma$ contains exactly one point, hence
no line on $T_x\Sigma$ is parallel to $N_v\Sigma$, hence
no line on $T_x\Sigma$ is orthogonal to $T_v\Sigma$.
It follows from the smoothness of $\Sigma$ that
if we choose $N$ to be sufficiently small, $\varphi|_N$ is injective.
As $\varphi|_N$ is injective and $\varphi|_N$ and its inverse are continuous,
$\varphi|_N$ is a homeomorphism.
\end{proof}

The main theorem of this section is almost the same as
Theorem~\ref{thm:homeocell}, but
it applies to a restricted Voronoi cell on
a smooth, closed $k$-manifold $\Sigma$ embedded in the Euclidean space~$\R^d$.
The proof requires no change except to replace the word ``disk'' with
``$k$-ball.''

\begin{theorem}
\label{thm:homeocellhi}
Let $\Sigma \subset \R^d$ be a smooth, closed $k$-manifold.
Consider a site $v \in V$ and its restricted Voronoi cell $C = \Vor|_\Sigma v$.
Suppose that for every point $y \in \Vor|_\Sigma v$,
$|vy| < \xi \, \lfs(v)$, where
$\xi = \sqrt{(\sqrt{5} - 1) / 2} \doteq 0.786151$.

Then $\chi \circ \varphi|_C$~is a homeomorphism
from $\Vor|_\Sigma v$ to a closed unit $k$-ball on~$T_v\Sigma$.
\end{theorem}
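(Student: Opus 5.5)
The plan is to rerun the codimension-$1$ argument of Section~\ref{2ball} almost verbatim, substituting Lemmas~\ref{lem:abovebelowhi}, \ref{lem:raybisectorhi}, and~\ref{lem:neighborinjecthi} for Lemmas~\ref{lem:abovebelow}, \ref{lem:raybisector}, and~\ref{lem:neighborinject}. Let $\varphi$ project $\R^d$ orthogonally onto the $k$-flat $T_v\Sigma$. A radial path is defined exactly as before, and the line $N_v\Sigma$ is replaced by the $(d-k)$-flat $N_v\Sigma$. For a point $x \in C \setminus \{v\}$, let $r$ be the ray from $v$ through $\varphi(x)$. Then $\varphi|_C^{-1}(r)$ is the intersection of $C$ with the $(d-k+1)$-dimensional half-flat spanned by $N_v\Sigma$ and $r$.

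First, $\varphi(y) \neq v$ for every $y \in C\setminus\{v\}$. If $\varphi(y)=v$, then $y \in N_v\Sigma$ and $|vy| < \xi\,\lfs(v) < \lfs(v)$, so $y \in \mathbb{D} \setminus \{v\}$. This is impossible, since $\mathbb{D}\setminus\{v\} \subset \mathbb{B}$ is disjoint from $\Sigma$.

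Next, I follow the proof of Lemma~\ref{lem:pathincell}. Let $\gamma$ be the component of $\varphi|_C^{-1}(r)$ containing $x$. Lemma~\ref{lem:neighborinjecthi} shows that $\varphi|_\gamma$ is a local homeomorphism into the ray $r$, so $\gamma$ is a compact interval whose projection is a segment. To show that $\gamma\setminus\{z\}$ lies in the interior of $\Vor\,v$ and that the near endpoint $q$ equals $v$, I use the tangent ray $a=\vec{yt}$, where $t$ is the lone point of $T_y\Sigma\cap\mathbb{D}$ (Lemma~\ref{lem:abovebelowhi}). This ray is tangent to $\gamma$ at $y$ and points toward $\varphi^{-1}$ of the direction of $v$. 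Lemma~\ref{lem:raybisectorhi} puts $a$ strictly on $v$'s side of every bisecting hyperplane $\Lambda$ through $y$, which gives the same contradictions as before. One point needs checking: $\varphi(a)$ is a ray ending at $\varphi(t)=v$, and because $\varphi|_N$ is injective near $y$, the curve $\gamma$ near $y$ is the preimage of a segment of $r$ and is tangent to $a$. The argument that every component of $\varphi|_C^{-1}(r)$ contains $v$ carries over unchanged. This yields uniqueness of radial paths and the decomposition of Lemma~\ref{lem:pathsincell}, and hence that $\varphi|_C$ is a continuous injection from a compact set and therefore a homeomorphism onto $I_v$.

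Finally, I show that the length function $l$ is continuous on $I_v\setminus\{v\}$, as in Lemma~\ref{lem:slicecontinuous}. Instead of arcs, take small spherical caps $a^\pm \subset T_v\Sigma$ centered at $v$ through $z^\pm$. They lie in an open set $N^+$ disjoint from $I_v$ and in the open set $N^- = \varphi(N^C) \subset I_v$, respectively. Here $N^-$ is open in the $k$-flat because $\varphi|_C$ is a homeomorphism and $N^C$ is an open $k$-dimensional neighborhood in $\Sigma$; invariance of domain gives openness of $\varphi(N^C)$ in $T_v\Sigma$. Using the angular radius $\theta$ of the caps, every segment of $E$ within angle $\theta$ of $e$ has length in $(l(x)-\delta, l(x)+\delta)$. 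The map $\chi$ is then continuous and injective on the compact set $I_v$. Its image is a union of unit segments from $v$ in every direction of $T_v\Sigma$, since every direction occurs: $I_v$ contains a neighborhood of $v$ because $\varphi|_N$ is a local homeomorphism at $v$ and $v$ is interior to $\Vor\,v$. This image is the closed unit $k$-ball, so $\chi\circ\varphi|_C$ is a homeomorphism.

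I expect the main obstacle to be the tangency claim in the path step: that the curve $\varphi|_N^{-1}(r)$ has tangent direction at $y$ along $\pm a$ in higher codimension. This holds because $\varphi|_{T_y\Sigma}$ is an isomorphism onto $T_v\Sigma$ (Lemma~\ref{lem:abovebelowhi}). The preimage of the direction toward $v$ is the unique direction in $T_y\Sigma$ whose displacement lies in $\operatorname{span}(N_v\Sigma, r)$ and projects toward $v$, and that direction is exactly $\vec{yt}$.
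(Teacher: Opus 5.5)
Your proposal is correct and is essentially the paper's own proof. The paper likewise reruns Lemmas~\ref{lem:pathincell}, \ref{lem:pathsincell}, \ref{lem:slicecontinuous} and Theorem~\ref{thm:homeocell} with Lemmas~\ref{lem:abovebelowhi}, \ref{lem:raybisectorhi}, and~\ref{lem:neighborinjecthi} substituted, $\Lambda$ read as a hyperplane, $N_v\Sigma$ as a $(d-k)$-flat, and the arcs $a^\pm$ as spherical caps; your added checks fill in details the paper leaves implicit: the tangent to $\gamma$ at $y$ is $\vec{yt}$ because $\varphi|_{T_y\Sigma}$ is an isomorphism, invariance of domain makes $N^-$ open in $T_v\Sigma$, and every direction occurs because $I_v$ contains a neighborhood of $v$.
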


A final corollary addressing $\epsilon$-samples,
generalizing Corollary~\ref{cor:homeocelleps} to higher dimensions, follows from
% the Feature Translation Lemma (Lemma~\ref{lem:ftl}),
Theorem~\ref{thm:homeocellhi} and
Lemmas~\ref{lem:pathsincell}, \ref{lem:sixsites}, \ref{lem:raybisectorhi},
and~\ref{lem:ftl} (the Feature Translation Lemma).

\begin{corollary}
\label{cor:homeocellepshi}
Let $\Sigma \subset \R^d$ be a smooth, closed $k$-manifold.
Let $V$ be an $\epsilon$-sample of $\Sigma$ for
$\epsilon < \frac{\xi}{\xi + 1} \doteq 0.440137$.

Then every restricted Voronoi cell in $\Vor|_\Sigma V$ is
homeomorphic to a closed $k$-ball,
the orthogonal projection of $\Vor|_\Sigma v$ onto $T_v\Sigma$ is
star-shaped for all $v \in V$,
$\Sigma$~does not intersect any $(d - 1)$-face of $\Vor \, V$ tangentially,
every connected component of $\Sigma$ has
at least six sites % and six restricted Voronoi cells
on it, and
no restricted Voronoi cell intersects
%%%UNCUT
the interior of another. % restricted Voronoi cell.
\end{corollary}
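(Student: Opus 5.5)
The plan is to reduce Corollary~\ref{cor:homeocellepshi} to the Voronoi-sample results already established, exactly as Corollary~\ref{cor:homeocelleps} was derived from Theorem~\ref{thm:homeocell}. First fix any site $v \in V$ and any point $y \in \Vor|_\Sigma v$. Because $V$ is an $\epsilon$-sample, $|vy| \leq \epsilon \, \lfs(y)$. Since $\epsilon < 1$, the Feature Translation Lemma (Lemma~\ref{lem:ftl}), whose proof uses only the triangle inequality and so holds unchanged in $\R^d$, gives $|vy| \leq \frac{\epsilon}{1 - \epsilon} \lfs(v)$. The map $\epsilon \mapsto \epsilon/(1-\epsilon)$ is increasing, so the hypothesis $\epsilon < \xi/(\xi+1)$ yields $|vy| < \xi \, \lfs(v)$. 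Hence every restricted Voronoi cell meets the hypothesis of Theorem~\ref{thm:homeocellhi} and of the higher-dimensional versions of Lemmas~\ref{lem:pathincell} and~\ref{lem:pathsincell}.

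The five claims then follow in turn.
\begin{enumerate}
\item Theorem~\ref{thm:homeocellhi} gives that $\Vor|_\Sigma v$ is homeomorphic to a closed $k$-ball.
\item Lemma~\ref{lem:pathsincell}, with radial paths now defined relative to the $k$-flat $T_v\Sigma$, gives that $\varphi(\Vor|_\Sigma v)$ is a union of segments ending at $v$, so it is star-shaped.
\item For tangency, suppose $\Sigma$ meets a $(d-1)$-face $F = \Vor\,v \cap \Vor\,w$ at a point $x$, so $x \in \Vor|_\Sigma v \cap \Vor|_\Sigma w$ and $x$ lies on the bisecting hyperplane $\Lambda$. Lemma~\ref{lem:raybisectorhi} supplies a ray $a \subset T_x\Sigma$ lying strictly on one side of $\Lambda$. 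Therefore $T_x\Sigma \not\subseteq \Lambda$, and $\Sigma$ is not tangent to $F$ at $x$.
\item Lemma~\ref{lem:sixsites} applies because $V$ is an $\frac{\epsilon}{1-\epsilon}$-Voronoi sample with $\frac{\epsilon}{1-\epsilon} < \xi$.
\item Each interior point of $\Vor|_\Sigma v$ lies in the interior of $\Vor \, v$, by the definition of radial path together with Lemma~\ref{lem:pathsincell}. Such a point has $v$ as its unique nearest site, so it cannot belong to another cell.
\end{enumerate}

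The main obstacle is the six-sites claim in higher codimension. The proof of Lemma~\ref{lem:sixsites} uses outward unit normals covering the sphere $S^2$ and the $60^\circ$ bound from the codimension-$1$ Normal Variation Lemma. For general $k$ and $d$ there is no outward normal, and $\eta(\xi) = 60^\circ$ was stated only for hypersurfaces. I would first check whether the paper's claim that Lemma~\ref{lem:sixsites} ``requires no changes'' can be justified by a different route. One candidate is the Gauss map from $\Sigma$ to the Grassmannian of $k$-planes, combined with a codimension-independent bound from Lemma~\ref{lem:abovebelowhi}: every tangent space $T_y\Sigma$ meets the normal space $N_v\Sigma$ in a single point. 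A covering argument would then show that a connected component cannot be covered by at most five cells. If that fails, I would fall back to the weaker statement that each cell is connected, by Lemma~\ref{lem:pathincell}, and lies on the same component as its site.
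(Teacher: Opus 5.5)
Your Feature Translation reduction and your items 1--3 and 5 are exactly the paper's argument. The paper obtains this corollary, as it did Corollary~\ref{cor:homeocelleps}, from the bound $|vy| < \xi \, \lfs(v)$ together with Theorem~\ref{thm:homeocellhi} and Lemmas~\ref{lem:pathsincell}, \ref{lem:raybisectorhi}, and~\ref{lem:sixsites}.

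The genuine gap is item~4. Your suspicion there is justified, and it applies equally to the paper's assertion that the proof of Lemma~\ref{lem:sixsites} ``requires no changes.'' That proof needs outward normals covering $S^2$ and the codimension-one bound $\eta(\xi) = 60^\circ$. Worse, for $k = 1$ the lemma is false outright. Place four equally spaced sites on a unit circle in $\R^2$ or $\R^3$, where $\lfs \equiv 1$. Each restricted cell is a $90^\circ$ arc whose farthest point is $2 \sin 22.5^\circ \approx 0.765 < \xi$ from its site, yet there are only four sites. So no argument that keeps only the reduced condition $|vy| < \xi \, \lfs(v)$ can prove the six-site claim in this generality.

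In particular, your Grassmannian idea cannot work. Lemma~\ref{lem:abovebelowhi} says only that no direction in $T_y\Sigma$ is normal to $T_v\Sigma$. The circle example satisfies that, and even a $45^\circ$ bound, with four cells. Your fallback (each cell is connected and lies on its site's component) is true but proves a strictly weaker statement.

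The fix is to keep the $\epsilon$-sample hypothesis pointwise instead of discarding it after the Feature Translation step. For $y \in \Vor|_\Sigma v$ you have $|vy| \leq \epsilon \, \lfs(y)$. So apply the variation bound with base point $y$, where $\delta = |yv| / \lfs(y) \leq \epsilon < 0.4402$.

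In codimension one, Lemma~\ref{lem:nvl} then gives $\angle (n_v, n_y) \leq \eta(\epsilon) < 27^\circ$. Each cell's outward normals therefore lie in a cap of radius less than $27^\circ$, and covering $S^{d-1}$ by such caps takes at least seven of them already for $d = 2$.

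In codimension two or more, a higher-codimension tangent-space variation bound (such as the one in~\cite{khoury19}) gives an angle of roughly $28^\circ$ at $\delta < 0.4402$. You still owe a covering argument for the Gauss map. For curves, for example, use that the oriented tangent indicatrix lies in no open hemisphere while consecutive cells' tangent caps must overlap. That step is supplied neither by your sketch nor by the paper's citation of Lemma~\ref{lem:sixsites}.
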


% \section{Conclusions}

% The nearest-point map doesn't give constants this good.

% \subparagraph*{Acknowledgments.}

% This work was set in motion by a sabbatical at
% INRIA Sophia-Antipolis during the spring of 2010.
% I would like to thank the Geometrica Group for their kind reception, and
% I particularly thank Jean-Daniel Boissonnat and Arijit Ghosh for
% extended discussions of the problem of manifold reconstruction.

\bibliography{rdthomeo}

\end{document}

%% file: vordel2d3d.pdf_t
\begin{picture}(0,0)%
\includegraphics{vordel2d3d.pdf}%
\end{picture}%
\setlength{\unitlength}{1855sp}%
\begin{picture}(13928,19849)(-1139,-19419)
\put(9601,-9811){\makebox(0,0)[b]{\smash{\fontsize{12}{14.4}\usefont{T1}{ptm}{m}{n}{\color[rgb]{0,0,0}restricted Voronoi cells}%
}}}
\put(-1124,-361){\makebox(0,0)[lb]{\smash{\fontsize{12}{14.4}\usefont{T1}{ptm}{m}{n}{\color[rgb]{0,0,0}(a)}%
}}}
\put(6076,-10861){\makebox(0,0)[lb]{\smash{\fontsize{12}{14.4}\usefont{T1}{ptm}{m}{n}{\color[rgb]{0,0,0}(c)}%
}}}
\put(-1124,-9661){\makebox(0,0)[lb]{\smash{\fontsize{12}{14.4}\usefont{T1}{ptm}{m}{n}{\color[rgb]{0,0,0}(d)}%
}}}
\put(-1124,-15361){\makebox(0,0)[lb]{\smash{\fontsize{12}{14.4}\usefont{T1}{ptm}{m}{n}{\color[rgb]{0,0,0}(e)}%
}}}
\put(6076,-361){\makebox(0,0)[lb]{\smash{\fontsize{12}{14.4}\usefont{T1}{ptm}{m}{n}{\color[rgb]{0,0,0}(b)}%
}}}
\end{picture}%

%% file: medial3.pdf_t
\begin{picture}(0,0)%
\includegraphics{medial3.pdf}%
\end{picture}%
\setlength{\unitlength}{1224sp}%
\begin{picture}(20519,8158)(-2657,-9830)
\put(601,-8311){\makebox(0,0)[lb]{\smash{\fontsize{11}{13.2}\usefont{T1}{ptm}{m}{it}{\color[rgb]{0,0,0}$M$}%
}}}
\put(1876,-6811){\makebox(0,0)[lb]{\smash{\fontsize{11}{13.2}\usefont{T1}{ptm}{m}{it}{\color[rgb]{0,0,0}$\Sigma$}%
}}}
\put(1351,-4561){\makebox(0,0)[lb]{\smash{\fontsize{11}{13.2}\usefont{T1}{ptm}{m}{it}{\color[rgb]{0,0,0}$M$}%
}}}
\put(14776,-7036){\makebox(0,0)[lb]{\smash{\fontsize{11}{13.2}\usefont{T1}{ptm}{m}{it}{\color[rgb]{0,0,0}$x$}%
}}}
\end{picture}%

%% file: oopposite2.pdf_t
\begin{picture}(0,0)%
\includegraphics{oopposite2.pdf}%
\end{picture}%
\setlength{\unitlength}{1973sp}%
\begin{picture}(9123,4141)(2011,-3665)
\put(10876,-1036){\makebox(0,0)[lb]{\smash{\fontsize{15}{18}\usefont{T1}{ptm}{m}{it}{\color[rgb]{0,0,0}$w$}%
}}}
\put(2026,-3061){\makebox(0,0)[rb]{\smash{\fontsize{15}{18}\usefont{T1}{ptm}{m}{it}{\color[rgb]{0,0,0}$B'$}%
}}}
\put(4201,-2386){\makebox(0,0)[b]{\smash{\fontsize{15}{18}\usefont{T1}{ptm}{m}{it}{\color[rgb]{0,0,0}$m$}%
}}}
\put(3526,-3511){\makebox(0,0)[rb]{\smash{\fontsize{15}{18}\usefont{T1}{ptm}{m}{it}{\color[rgb]{0,0,0}$o'$}%
}}}
\put(4201,-2761){\makebox(0,0)[rb]{\smash{\fontsize{15}{18}\usefont{T1}{ptm}{m}{it}{\color[rgb]{0,0,0}$B_m$}%
}}}
\put(5401,-2611){\makebox(0,0)[lb]{\smash{\fontsize{15}{18}\usefont{T1}{ptm}{m}{it}{\color[rgb]{0,0,0}$\Sigma$}%
}}}
\put(2026,-1036){\makebox(0,0)[rb]{\smash{\fontsize{15}{18}\usefont{T1}{ptm}{m}{it}{\color[rgb]{0,0,0}$\Sigma$}%
}}}
\put(3526,-1411){\makebox(0,0)[rb]{\smash{\fontsize{15}{18}\usefont{T1}{ptm}{m}{it}{\color[rgb]{0,0,0}$v$}%
}}}
\put(2851, 89){\makebox(0,0)[b]{\smash{\fontsize{15}{18}\usefont{T1}{ptm}{m}{it}{\color[rgb]{0,0,0}$\lfs(v)$}%
}}}
\put(4801,-1861){\makebox(0,0)[lb]{\smash{\fontsize{15}{18}\usefont{T1}{ptm}{m}{it}{\color[rgb]{0,0,0}$x$}%
}}}
\put(5026,-3436){\makebox(0,0)[rb]{\smash{\fontsize{15}{18}\usefont{T1}{ptm}{m}{it}{\color[rgb]{0,0,0}$T_x\Sigma$}%
}}}
\put(2026,-1711){\makebox(0,0)[rb]{\smash{\fontsize{15}{18}\usefont{T1}{ptm}{m}{it}{\color[rgb]{0,0,0}$Q$}%
}}}
\put(7726,-436){\makebox(0,0)[rb]{\smash{\fontsize{15}{18}\usefont{T1}{ptm}{m}{it}{\color[rgb]{0,0,0}$B$}%
}}}
\put(7726,-3061){\makebox(0,0)[rb]{\smash{\fontsize{15}{18}\usefont{T1}{ptm}{m}{it}{\color[rgb]{0,0,0}$B'$}%
}}}
\put(9226,-3511){\makebox(0,0)[rb]{\smash{\fontsize{15}{18}\usefont{T1}{ptm}{m}{it}{\color[rgb]{0,0,0}$o'$}%
}}}
\put(11101,-2611){\makebox(0,0)[lb]{\smash{\fontsize{15}{18}\usefont{T1}{ptm}{m}{it}{\color[rgb]{0,0,0}$\Sigma$}%
}}}
\put(11101,-3211){\makebox(0,0)[lb]{\smash{\fontsize{15}{18}\usefont{T1}{ptm}{m}{it}{\color[rgb]{0,0,0}$\Lambda$}%
}}}
\put(7726,-1036){\makebox(0,0)[rb]{\smash{\fontsize{15}{18}\usefont{T1}{ptm}{m}{it}{\color[rgb]{0,0,0}$\Sigma$}%
}}}
\put(10501,-1861){\makebox(0,0)[lb]{\smash{\fontsize{15}{18}\usefont{T1}{ptm}{m}{it}{\color[rgb]{0,0,0}$x$}%
}}}
\put(9226,-436){\makebox(0,0)[rb]{\smash{\fontsize{15}{18}\usefont{T1}{ptm}{m}{it}{\color[rgb]{0,0,0}$o$}%
}}}
\put(3526,-436){\makebox(0,0)[rb]{\smash{\fontsize{15}{18}\usefont{T1}{ptm}{m}{it}{\color[rgb]{0,0,0}$o$}%
}}}
\put(8776,-736){\makebox(0,0)[rb]{\smash{\fontsize{15}{18}\usefont{T1}{ptm}{m}{it}{\color[rgb]{0,0,0}$a$}%
}}}
\put(9376,-1936){\makebox(0,0)[lb]{\smash{\fontsize{15}{18}\usefont{T1}{ptm}{m}{it}{\color[rgb]{0,0,0}$v$}%
}}}
\put(3526,-2461){\makebox(0,0)[rb]{\smash{\fontsize{15}{18}\usefont{T1}{ptm}{m}{it}{\color[rgb]{0,0,0}$N_v\Sigma$}%
}}}
\put(9226,-2461){\makebox(0,0)[rb]{\smash{\fontsize{15}{18}\usefont{T1}{ptm}{m}{it}{\color[rgb]{0,0,0}$N_v\Sigma$}%
}}}
\put(9226,-1336){\makebox(0,0)[rb]{\smash{\fontsize{15}{18}\usefont{T1}{ptm}{m}{it}{\color[rgb]{0,0,0}$t$}%
}}}
\put(2026,-436){\makebox(0,0)[rb]{\smash{\fontsize{15}{18}\usefont{T1}{ptm}{m}{it}{\color[rgb]{0,0,0}$B$}%
}}}
\end{picture}%

%% file: radialpath.pdf_t
\begin{picture}(0,0)%
\includegraphics{radialpath.pdf}%
\end{picture}%
\setlength{\unitlength}{1973sp}%
\begin{picture}(8819,4301)(204,-8540)
\put(2176,-5161){\makebox(0,0)[lb]{\smash{\fontsize{15}{18}\usefont{T1}{ptm}{m}{it}{\color[rgb]{0,0,0}$a$}%
}}}
\put(6526,-6211){\makebox(0,0)[rb]{\smash{\fontsize{15}{18}\usefont{T1}{ptm}{m}{it}{\color[rgb]{0,0,0}$z$}%
}}}
\put(4426,-8386){\makebox(0,0)[b]{\smash{\fontsize{15}{18}\usefont{T1}{ptm}{m}{it}{\color[rgb]{0,0,0}$\varphi(y)$}%
}}}
\put(6826,-8386){\makebox(0,0)[b]{\smash{\fontsize{15}{18}\usefont{T1}{ptm}{m}{it}{\color[rgb]{0,0,0}$\varphi(z)$}%
}}}
\put(8701,-8386){\makebox(0,0)[b]{\smash{\fontsize{15}{18}\usefont{T1}{ptm}{m}{it}{\color[rgb]{0,0,0}$r$}%
}}}
\put(2551,-6661){\makebox(0,0)[lb]{\smash{\fontsize{15}{18}\usefont{T1}{ptm}{m}{it}{\color[rgb]{0,0,0}$x$}%
}}}
\put(2701,-8386){\makebox(0,0)[b]{\smash{\fontsize{15}{18}\usefont{T1}{ptm}{m}{it}{\color[rgb]{0,0,0}$\varphi(x)$}%
}}}
\put(451,-8011){\makebox(0,0)[rb]{\smash{\fontsize{15}{18}\usefont{T1}{ptm}{m}{it}{\color[rgb]{0,0,0}$q = v$}%
}}}
\put(2176,-6961){\makebox(0,0)[rb]{\smash{\fontsize{15}{18}\usefont{T1}{ptm}{m}{it}{\color[rgb]{0,0,0}$\gamma$}%
}}}
\put(4276,-5836){\makebox(0,0)[lb]{\smash{\fontsize{15}{18}\usefont{T1}{ptm}{m}{it}{\color[rgb]{0,0,0}$y$}%
}}}
\put(526,-5761){\makebox(0,0)[rb]{\smash{\fontsize{15}{18}\usefont{T1}{ptm}{m}{it}{\color[rgb]{0,0,0}$N_v\Sigma$}%
}}}
\put(751,-4636){\makebox(0,0)[lb]{\smash{\fontsize{15}{18}\usefont{T1}{ptm}{m}{it}{\color[rgb]{0,0,0}$t$}%
}}}
\put(3901,-5386){\makebox(0,0)[rb]{\smash{\fontsize{15}{18}\usefont{T1}{ptm}{m}{it}{\color[rgb]{0,0,0}$\Lambda$}%
}}}
\put(3751,-6436){\makebox(0,0)[b]{\smash{\fontsize{15}{18}\usefont{T1}{ptm}{m}{it}{\color[rgb]{0,0,0}$P$}%
}}}
\put(7201,-7111){\makebox(0,0)[b]{\smash{\fontsize{15}{18}\usefont{T1}{ptm}{m}{it}{\color[rgb]{0,0,0}$P'$}%
}}}
\end{picture}%

%% file: lcontinuous.pdf_t
\begin{picture}(0,0)%
\includegraphics{lcontinuous.pdf}%
\end{picture}%
\setlength{\unitlength}{1973sp}%
\begin{picture}(7753,3649)(739,-3248)
\put(7126,-886){\makebox(0,0)[rb]{\smash{\fontsize{15}{18}\usefont{T1}{ptm}{m}{it}{\color[rgb]{0,0,0}$a^+$}%
}}}
\put(8326,-436){\makebox(0,0)[lb]{\smash{\fontsize{15}{18}\usefont{T1}{ptm}{m}{it}{\color[rgb]{0,0,0}$N^+$}%
}}}
\put(4876,-1111){\makebox(0,0)[lb]{\smash{\fontsize{15}{18}\usefont{T1}{ptm}{m}{it}{\color[rgb]{0,0,0}$z^-$}%
}}}
\put(4576,-811){\makebox(0,0)[rb]{\smash{\fontsize{15}{18}\usefont{T1}{ptm}{m}{it}{\color[rgb]{0,0,0}$p_2^-$}%
}}}
\put(7201, 14){\makebox(0,0)[lb]{\smash{\fontsize{15}{18}\usefont{T1}{ptm}{m}{it}{\color[rgb]{0,0,0}$p_2^+$}%
}}}
\put(3451,-1861){\makebox(0,0)[lb]{\smash{\fontsize{15}{18}\usefont{T1}{ptm}{m}{it}{\color[rgb]{0,0,0}$N^-$}%
}}}
\put(4651,-1936){\makebox(0,0)[rb]{\smash{\fontsize{15}{18}\usefont{T1}{ptm}{m}{it}{\color[rgb]{0,0,0}$a^-$}%
}}}
\put(2476,-1636){\makebox(0,0)[b]{\smash{\fontsize{15}{18}\usefont{T1}{ptm}{m}{it}{\color[rgb]{0,0,0}$e$}%
}}}
\put(6076,-1111){\makebox(0,0)[b]{\smash{\fontsize{15}{18}\usefont{T1}{ptm}{m}{it}{\color[rgb]{0,0,0}$z$}%
}}}
\put(2101,-1111){\makebox(0,0)[b]{\smash{\fontsize{15}{18}\usefont{T1}{ptm}{m}{it}{\color[rgb]{0,0,0}$v$}%
}}}
\put(2851,-1111){\makebox(0,0)[b]{\smash{\fontsize{15}{18}\usefont{T1}{ptm}{m}{it}{\color[rgb]{0,0,0}$x$}%
}}}
\put(7276,-2011){\makebox(0,0)[lb]{\smash{\fontsize{15}{18}\usefont{T1}{ptm}{m}{it}{\color[rgb]{0,0,0}$p_1^+$}%
}}}
\put(4576,-2536){\makebox(0,0)[lb]{\smash{\fontsize{15}{18}\usefont{T1}{ptm}{m}{it}{\color[rgb]{0,0,0}$p_1^-$}%
}}}
\put(7351,-1411){\makebox(0,0)[lb]{\smash{\fontsize{15}{18}\usefont{T1}{ptm}{m}{it}{\color[rgb]{0,0,0}$z^+$}%
}}}
\put(2401,-2611){\makebox(0,0)[lb]{\smash{\fontsize{15}{18}\usefont{T1}{ptm}{m}{it}{\color[rgb]{0,0,0}$I_v$}%
}}}
\end{picture}%

%% file: voredges.pdf_t
\begin{picture}(0,0)%
\includegraphics{voredges.pdf}%
\end{picture}%
\setlength{\unitlength}{1973sp}%
\begin{picture}(10546,3166)(285,-2544)
\put(751,-1111){\makebox(0,0)[b]{\smash{\fontsize{15}{18}\usefont{T1}{ptm}{m}{it}{\color[rgb]{0,0,0}$C_1$}%
}}}
\put(5101,-361){\makebox(0,0)[b]{\smash{\fontsize{15}{18}\usefont{T1}{ptm}{m}{it}{\color[rgb]{0,0,0}$C_1$}%
}}}
\put(5251,-1111){\makebox(0,0)[b]{\smash{\fontsize{15}{18}\usefont{T1}{ptm}{m}{it}{\color[rgb]{0,0,0}$C_2$}%
}}}
\put(9301,-286){\makebox(0,0)[b]{\smash{\fontsize{15}{18}\usefont{T1}{ptm}{m}{it}{\color[rgb]{0,0,0}$C_1$}%
}}}
\put(9301,-2011){\makebox(0,0)[b]{\smash{\fontsize{15}{18}\usefont{T1}{ptm}{m}{it}{\color[rgb]{0,0,0}$C_2$}%
}}}
\put(1951,-1861){\makebox(0,0)[b]{\smash{\fontsize{15}{18}\usefont{T1}{ptm}{m}{it}{\color[rgb]{0,0,0}$C_2$}%
}}}
\end{picture}%

%% file: trivertex.pdf_t
\begin{picture}(0,0)%
\includegraphics{trivertex.pdf}%
\end{picture}%
\setlength{\unitlength}{1973sp}%
\begin{picture}(5280,4909)(1329,-8483)
\put(5401,-6811){\makebox(0,0)[lb]{\smash{\fontsize{15}{18}\usefont{T1}{ptm}{m}{it}{\color[rgb]{0,0,0}$\tau$}%
}}}
\put(3751,-6586){\makebox(0,0)[b]{\smash{\fontsize{15}{18}\usefont{T1}{ptm}{m}{it}{\color[rgb]{0,0,0}$r$}%
}}}
\put(4051,-4636){\makebox(0,0)[rb]{\smash{\fontsize{15}{18}\usefont{T1}{ptm}{m}{it}{\color[rgb]{0,0,0}$u$}%
}}}
\put(3301,-5311){\makebox(0,0)[rb]{\smash{\fontsize{15}{18}\usefont{T1}{ptm}{m}{it}{\color[rgb]{0,0,0}$s$}%
}}}
\put(4726,-4936){\makebox(0,0)[lb]{\smash{\fontsize{15}{18}\usefont{T1}{ptm}{m}{it}{\color[rgb]{0,0,0}$s$}%
}}}
\put(5026,-7336){\makebox(0,0)[rb]{\smash{\fontsize{15}{18}\usefont{T1}{ptm}{m}{it}{\color[rgb]{0,0,0}$s$}%
}}}
\put(1726,-7111){\makebox(0,0)[rb]{\smash{\fontsize{15}{18}\usefont{T1}{ptm}{m}{it}{\color[rgb]{0,0,0}$v$}%
}}}
\put(4576,-3961){\makebox(0,0)[lb]{\smash{\fontsize{15}{18}\usefont{T1}{ptm}{m}{it}{\color[rgb]{0,0,0}$n_u$}%
}}}
\put(5401,-8161){\makebox(0,0)[lb]{\smash{\fontsize{15}{18}\usefont{T1}{ptm}{m}{it}{\color[rgb]{0,0,0}$v'$}%
}}}
\put(4126,-3961){\makebox(0,0)[rb]{\smash{\fontsize{15}{18}\usefont{T1}{ptm}{m}{it}{\color[rgb]{0,0,0}$n_\tau$}%
}}}
\put(5401,-5536){\makebox(0,0)[lb]{\smash{\fontsize{15}{18}\usefont{T1}{ptm}{m}{it}{\color[rgb]{0,0,0}$v''$}%
}}}
\put(1351,-5761){\makebox(0,0)[b]{\smash{\fontsize{15}{18}\usefont{T1}{ptm}{m}{it}{\color[rgb]{0,0,0}$n_v$}%
}}}
\end{picture}%

%% file: chains.pdf_t
\begin{picture}(0,0)%
\includegraphics{chains.pdf}%
\end{picture}%
\setlength{\unitlength}{1973sp}%
\begin{picture}(5789,4081)(879,-2784)
\put(5101,-1636){\makebox(0,0)[lb]{\smash{\fontsize{15}{18}\usefont{T1}{ptm}{m}{it}{\color[rgb]{0,0,0}$n_v$}%
}}}
\put(5176,-1036){\makebox(0,0)[lb]{\smash{\fontsize{15}{18}\usefont{T1}{ptm}{m}{it}{\color[rgb]{0,0,0}$n_u$}%
}}}
\put(5626,-286){\makebox(0,0)[b]{\smash{\fontsize{15}{18}\usefont{T1}{ptm}{m}{it}{\color[rgb]{0,0,0}$e$}%
}}}
\put(3601,-1936){\makebox(0,0)[rb]{\smash{\fontsize{15}{18}\usefont{T1}{ptm}{m}{it}{\color[rgb]{0,0,0}$\Sigma$}%
}}}
\put(4201,-1411){\makebox(0,0)[lb]{\smash{\fontsize{15}{18}\usefont{T1}{ptm}{m}{it}{\color[rgb]{0,0,0}$v$}%
}}}
\put(4726,464){\makebox(0,0)[lb]{\smash{\fontsize{15}{18}\usefont{T1}{ptm}{m}{it}{\color[rgb]{0,0,0}$n_\tau$}%
}}}
\put(4126,464){\makebox(0,0)[rb]{\smash{\fontsize{15}{18}\usefont{T1}{ptm}{m}{it}{\color[rgb]{0,0,0}$\tau$}%
}}}
\put(3751,-1411){\makebox(0,0)[rb]{\smash{\fontsize{15}{18}\usefont{T1}{ptm}{m}{it}{\color[rgb]{0,0,0}$w$}%
}}}
\put(4501,-586){\makebox(0,0)[b]{\smash{\fontsize{15}{18}\usefont{T1}{ptm}{m}{it}{\color[rgb]{0,0,0}$u$}%
}}}
\put(2176,-1111){\makebox(0,0)[b]{\smash{\fontsize{13}{15.6}\usefont{T1}{ptm}{m}{n}{\color[rgb]{0,0,0}$F$ inside $\Sigma$}%
}}}
\put(5101,-2311){\makebox(0,0)[b]{\smash{\fontsize{13}{15.6}\usefont{T1}{ptm}{m}{n}{\color[rgb]{0,0,0}$F$ outside $\Sigma$}%
}}}
\end{picture}%